\newcommand{\reals}{\mathbb{R}}
\newcommand{\eps}{\ensuremath{\epsilon}}
\newcommand{\tcal}{\ensuremath{\mathcal{T}}}
\newcommand{\BU}{\ensuremath{\mathtt{BU}}\xspace}
\newcommand{\linBU}{\ensuremath{\mathtt{LinearBU}}\xspace}
\newcommand{\FIXP}{\ensuremath{\mathtt{FIXP}}\xspace}
\newcommand{\TFNP}{\ensuremath{\mathtt{TFNP}}\xspace}
\newcommand{\linFIXP}{\ensuremath{\mathtt{LinearFIXP}}\xspace}
\newcommand{\DlinFIXP}{\ensuremath{\mathtt{2DLinearFIXP}}\xspace}
\newcommand{\PPAD}{\ensuremath{\mathtt{PPAD}}\xspace}
\newcommand{\PPA}{\ensuremath{\mathtt{PPA}}\xspace}
\newcommand{\etr}{\ensuremath{\mathtt{ETR}}\xspace}
\newcommand{\fetr}{\ensuremath{\mathtt{FETR}}\xspace}
\newcommand{\tfetr}{\ensuremath{\mathtt{TFETR}}\xspace}
\newcommand{\betr}{\ensuremath{\mathtt{ETR}_{[0,1]}}\xspace}
\newcommand{\NP}{\ensuremath{\mathtt{NP}}\xspace}
\newcommand{\FNP}{\ensuremath{\mathtt{FNP}}\xspace}
\newcommand{\PSPACE}{\ensuremath{\mathtt{PSPACE}}\xspace}
\newcommand{\p}{\ensuremath{\mathtt{P}}\xspace}
\newcommand{\borsuk}{\ensuremath{\textsc{Borsuk-Ulam}}\xspace}
\newcommand{\epsborsuk}{\ensuremath{\eps\textsc{-Borsuk-Ulam}}\xspace}
\newcommand{\linborsuk}{\ensuremath{\textsc{Linear-Borsuk-Ulam}}\xspace}
\newcommand{\CH}{\ensuremath{\textsc{Consensus Halving}}\xspace}
\newcommand{\linCH}{\ensuremath{\textsc{Linear Consensus Halving}}\xspace}
\newcommand{\feas}{\ensuremath{\textsc{Feasible}}\xspace}
\newcommand{\bfeas}{\ensuremath{\textsc{Feasible}_{[0,1]}}\xspace}
\newcommand{\bconj}{\ensuremath{\textsc{Conjuction}_{[0,1]}}\xspace}
\newcommand{\tucker}{\ensuremath{\textsc{Tucker}}\xspace}
\newcommand{\gcircuit}{\ensuremath{\textsc{GCircuit}}\xspace}
\definecolor{darkgreen}{rgb}{0,0.6,0}
\newcommand{\kibitz}[2]{\ifnum\Comments=1{\color{#1}{#2}}\fi}
\newtheorem{theorem}{Theorem}
\newtheorem{lemma}[theorem]{Lemma}
\newtheorem{definition}[theorem]{Definition}
\newtheorem{corollary}[theorem]{Corollary}
\title{Computing exact solutions of consensus halving and the Borsuk-Ulam theorem\thanks{A preliminary version of this paper appeared in the Proceedings of the 46th International Colloquium on Automata, Languages and Programming (ICALP 2019) \cite{I-DFMS19}.}}
\author{Argyrios Deligkas\thanks{Royal Holloway University of London, UK. 
		Email: \texttt{argyrios.deligkas@rhul.ac.uk}} 
	\and John Fearnley\thanks{University of Liverpool, UK. 
		Email: \texttt{john.fearnley@liverpool.ac.uk}} 
	\and Themistoklis Melissourgos\thanks{Technical University of Munich, Germany. 
		Email: \texttt{themistoklis.melissourgos@tum.de}}
	\and Paul G. Spirakis\thanks{Department of Computer Science, University of Liverpool, UK.
	Email: \texttt{p.spirakis@liv.ac.uk}} \thanks{Computer Engineering and Informatics Department, University of Patras, Greece.}
	}
\date{\vspace{-1.0cm}}
\begin{document}

\maketitle

~\\  

\begin{abstract}
We study the problem of finding an exact solution to the Consensus Halving
problem. While recent work has shown that the approximate version of this
problem is \PPA-complete~\cite{FG18,FG18a}, we show that the exact version is
much harder. Specifically, finding a solution with $n$ agents and $n$ cuts is \FIXP-hard, and
deciding whether there exists a solution with fewer than $n$ cuts is
\etr-complete. 

Along the way, we define a new complexity class, called \BU, which captures all problems
that can be reduced to solving an instance of the Borsuk-Ulam problem exactly.
We show that $\FIXP \subseteq \BU \subseteq \tfetr$  and that $\linBU = \PPA$,
where \linBU is the subclass of \BU in which the Borsuk-Ulam instance is
specified by a linear arithmetic circuit.
\end{abstract}

\setcounter{page}{1}

%

\newpage

~\\

\tableofcontents

\newpage


\section{Introduction}
\label{sec:intro}




Dividing resources among agents in a fair manner is among the 
most fundamental problems in multi-agent systems~\cite{brams1996fair}. 
Cake cutting~\cite{amanatidis2018improved,aziz2016discrete,aziz2016any,brams1995envy},
and rent division~\cite{brams2001competitive,haake2002bidding,edward1999rental}
are prominent examples of problems that lie in this category. At their core, 
each of these problems has a desired solution whose existence
is usually proved via a theorem 
from algebraic topology such as Brouwer's fixed point theorem, Sperner's lemma,
or Kakutani's fixed point theorem.




In this work we focus on a fair-division problem called \emph{Consensus Halving}: an object $A$ represented by $[0, 1]$ is to be divided
into two halves $A_+$ and $A_-$, so that $n$ agents agree that $A_+$ and
$A_-$ have the same value. 
Provided the agents have bounded and
continuous valuations over $A$,
this can always be achieved using at most $n$ cuts, and this fact can be proved
via the 
Borsuk-Ulam theorem from algebraic topology~\cite{SS03}. 
The 
necklace splitting and 
ham-sandwich problems are two other examples of fair-division problems for which
the existence of a solution can be proved via the Borsuk-Ulam
theorem~\cite{alon1987splitting,alon1986borsuk,papadimitriou1994complexity}.

Recent work has
further refined the complexity status of \emph{approximate} Consensus Halving,
in which we seek a division of the object so that every agent agrees
that the values of $A_+$ and $A_-$ differ by at most $\eps$.
Since the problem always has a solution, it lies in \TFNP,
which is the class of function problems in \NP that always have a
solution. 
More recent work has shown that the problem
is \PPA-complete~\cite{FG18}, even for $\eps$ that is inverse-polynomial in
$n$~\cite{FG18a}. The problem of deciding whether there exists an approximate
solution with $k$-cuts when $k < n$ is \NP-complete~\cite{FFGZ18}. These results
are particularly notable, because they identify Consensus Halving as one of the
first natural \PPA-complete problems.

While previous work has focused on approximate solutions to the problem, in
this work we study the complexity of solving the problem \emph{exactly}. For
problems in the complexity class \PPAD, which is a subclass of both \TFNP and
\PPA, prior work has found that there is a sharp contrast between exact and
approximate solutions. For example, the Brouwer fixed point theorem is the
theorem from algebraic topology that underpins \PPAD. Finding an
approximate Brouwer fixed point is
\PPAD-complete~\cite{papadimitriou1994complexity}, but finding an exact Brouwer
fixed point is complete for (and the defining problem of) a complexity class called \FIXP~\cite{EY10}.

It is believed that \FIXP is significantly harder than \PPAD. While \PPAD
$\subseteq$ \TFNP $\subseteq$ \FNP, there is significant doubt about whether \FIXP $\subseteq$ \FNP.
One reason for this is that there are Brouwer instances for which all
solutions are irrational. This is not particularly relevant when we seek an
approximate solution, but is a major difficulty when we seek an exact solution.
For example, in the PosSLP problem, a division free arithmetic circuit with operations $+, -,*$, inputs $0$ and $1$ and a designated output gate are given, and we are asked to decide whether the integer at the output of the circuit is positive. This fundamental problem is not known to lie in \NP, and can be reduced to the
problem of finding an approximation of 3-player Nash equilibrium \cite{EY10}. Due to the aforementioned paper, the later problem reduces to the problem of finding an exact Brouwer fixed point, which provides evidence
that \FIXP may be significantly harder than \FNP.

\subsection{Contribution}

In this work we study the complexity of solving the Consensus Halving problem
exactly. In our formulation of the problem, the valuation
function of the agents is presented as an arbitrary arithmetic circuit, and the
task is to cut $A$ such that all agents agree that $A_+$ and $A_-$ have exactly
the same valuation. 
We study two problems. The $(n, n)$-\CH problem asks us to find an exact
solution for $n$-agents using at most $n$-cuts, while the $(n, k)$-\CH problem
asks us to decide whether there exists an exact solution for $n$-agents using at
most $k$-cuts, where $k < n$.

Our results for $(n, n)$-\CH are intertwined with a new complexity class
that we call \BU. This class consists of all problems that can be reduced in
polynomial time to the problem of finding a solution of the Borsuk-Ulam problem.
We show that $(n, n)$-\CH lies in \BU, and is \FIXP hard. The hardness for
\FIXP implies that the exact variant of Consensus Halving is significantly
harder than the approximate variant: while the approximate problem is
\PPA-complete, the exact variant is unlikely to be in \FNP.

We show that $(n, k)$-\CH is \etr-complete. The complexity class \etr consists
of all decision problems that can be formulated in the \emph{existential theory
	of the reals}. It is known that $\NP \subseteq \etr \subseteq
\PSPACE$~\cite{C88}, and it is generally believed that \etr is distinct from the
other two classes. So, our result again shows that the exact version of the
problem seems to be much harder than the approximate version, which is
\NP-complete~\cite{FFGZ18}.

Just as \FIXP can be thought of as the exact analogue of \PPAD, we believe that
\BU is the exact analogue of \PPA, and we provide some evidence to justify this.
It has been shown that $\linFIXP = \PPAD$~\cite{EY10}, which is
the version of the class in which arithmetic circuits are restricted to produce
piecewise \emph{linear} functions (\FIXP allows circuits to compute piecewise
polynomials). We likewise define \linBU, which consists of all problems that can
be reduced to a solution of a Borsuk-Ulam problem using a piecewise linear
function, and we show that \linBU = \PPA. 

The containment $\linBU \subseteq \PPA$ can be proved using similar techniques to
the proof that $\linFIXP \subseteq \PPAD$. However, the proof that $\PPA
\subseteq \linBU$ utilises  our \BU containment result for Consensus Halving.
In particular, when the input to Consensus Halving is a
piecewise linear function, our containment result shows that the problem actually lies in
\linBU. The \PPA-hardness results for Consensus Halving show that
piecewise-linear-Consensus Halving is \PPA-hard, which completes the
containment~\cite{FG18,FG18a}.

Let us present a roadmap of this work. In Section \ref{sec: preliminaries} we give formal definitions for the notions and models that are used throughout the paper. In Section \ref{sec:class_bu} we introduce the complexity class \BU and its linear version \linBU, and show that $\linBU \subseteq \PPA$. Then, in Section \ref{sec:bu} we focus on the Consensus Halving problem and show containment results for variations of it. Following this, in Section \ref{sec:hardness_res} the most challenging set of this paper's results is presented, namely hardness of the Consensus Halving variations in already known complexity classes. Finally, the most technical parts of the paper are presented in Sections \ref{app:circuit-embed}, \ref{app:CH_is_FIXP-hard}, \ref{app:Feas_is_etr-c}, \ref{app:CH_is_etr-c}.


\subsection{Related work}
Although for a long period there were a few results about \PPA, recently there
has been a  
flourish of \PPA-completeness results. The first \PPA-completeness result was
given by~\cite{grigni2001sperner} who showed \PPA-completeness of the Sperner problem for a 
non-orientable 3-dimensional space. 
In~\cite{friedl2006locally} this result was strengthened for a non-orientable and locally 
2-dimensional space. In~\cite{ABB15}, 2-dimensional Tucker was shown to be 
\PPA-complete; this result was used in~\cite{FG18,FG18a} to prove \PPA-completeness 
for approximate Consensus Halving. 
In~\cite{DE+16} \PPA-completeness was proven for a special version
of Tucker and for problems of the form ``given a discrete fixed point in a non-orientable 
space, find another one''. Finally, in~\cite{DFK17} it was shown that octahedral Tucker
is \PPA-complete.
%
%
In~\cite{M14}, a subclass of $\DlinFIXP \subseteq \FIXP$ that consists of 2-dimensional fixed-point problems was
studied, and it was proven that $\DlinFIXP=\PPAD$.

A large number of problems are now known to be \etr-complete:
geometric  intersection 
problems~\cite{mat14,scha09},  graph-drawing problems~
\cite{ADD+16,Bien91,CaHo17,scha13}, matrix factorization problems~\cite{shi16,shi17}, 
the Art Gallery problem~\cite{AAM18}, and deciding the existence of constrained (symmetric)
Nash equilibria in (symmetric) normal form games with at least three players \cite{BM12,BM14,BM16,BM17,GMV+,BH19}.

\section{Preliminaries}
\label{sec: preliminaries}

\subsection{Arithmetic circuits and reductions between real-valued search problems}\label{sec:ar_cir_red}
An arithmetic circuit is a representation of a continuous function 
$f : \reals^n \to \reals^m$. The circuit is defined by a pair $(V, \tcal)$, where
$V$ is a set of nodes and $\tcal$ is a set of gates. There are $n$ nodes in $V$
that are designated to be \emph{input nodes}, and $m$ nodes in $V$ that are
designated to be \emph{output nodes}. 
When a value $x \in \reals^n$ is presented
at the input nodes, the circuit computes values for all other nodes $v \in V$,
which we will denote as $x[v]$. The values of $x[v]$ for the $m$ output nodes
determine the value of $f(x) \in \reals^m$.

Every node in $V$, other than the input nodes, is required to be the output of
exactly one gate in $\mathcal{T}$. 
Each gate $g \in \tcal$ enforces an arithmetic
constraint on its output node, based on the values of some other node in the
circuit. Cycles are not allowed in these constraints. 
We allow the operations $\{\zeta,
+, -, *\zeta, *, \max, \min\}$, which correspond to the gates shown in
Table~\ref{tbl:gates}.
Note that every gate computes a continuous function over its inputs, and thus
any function $f$ that is represented by an arithmetic circuit of this form is
also continuous. 


\begin{table}
	\begin{center}
		\begin{tabular}{|l|l|}
			\hline
			\textbf{Gate}                                & \textbf{Constraint} \\ \hline
			\quad $G_{\zeta}(\zeta, v_{out})$ \quad  &  \quad $x[v_{out}] = \zeta$, where $\zeta \in \mathbb{Q}$ \quad            \\ \hline
			\quad $G_{+}(v_{in1}, v_{in2}, v_{out})$ \quad  &  \quad $x[v_{out}] = x[v_{in1}] + x[v_{in2}]$ \quad           \\ \hline
			\quad $G_{-}(v_{in1}, v_{in2}, v_{out})$ \quad  &   \quad $x[v_{out}] = x[v_{in1}] - x[v_{in2}]$   \quad         \\ \hline
			\quad $G_{*\zeta}(\zeta, v_{in}, v_{out})$ \quad  &  \quad $x[v_{out}] = x[v_{in1}]\cdot \zeta$, where $\zeta \in \mathbb{Q}$ \quad            \\ \hline
			\quad $G_{*}(v_{in1}, v_{in2}, v_{out})$ \quad  &   \quad $x[v_{out}] = x[v_{in1}] \cdot x[v_{in2}]$  \quad          \\ \hline
			\quad $G_{\max}(v_{in1}, v_{in2}, v_{out})$ \quad  &   \quad $x[v_{out}] = \max \{ x[v_{in1}], x[v_{in2}]\}$   \quad         \\ \hline
			\quad $G_{\min}(v_{in1}, v_{in2}, v_{out})$ \quad  &   \quad $x[v_{out}] = \min \{ x[v_{in1}], x[v_{in2}]\}$    \quad        \\ \hline
		\end{tabular}
	\end{center}
	\caption{The types of gates and their constraints.}
	\label{tbl:gates}
\end{table}

We study two types of circuits in this work. \emph{General} arithmetic circuits
are allowed to use any of the gates that we have defined above. \emph{Linear}
arithmetic circuits allow only the operations $\{\zeta, +, -, *\zeta, \max,
\min\}$, and the $*$ operation (multiplication of two variables) is disallowed.
Observe that a linear arithmetic circuit computes a continuous, piecewise linear function.

In this work we do not deal with the usual discrete search problems, but instead we study continuous problems whose solutions are exact, and involve representation of real numbers. However, the computation model on which we work is still the discrete Turing machine model and not a computation model over the reals like the BSS machine \cite{BSS89}. For this reason, when we consider reductions from a problem $P$ to a problem $Q$ with real-valued solutions, we have to be restricted to functions $f$, $g$ with certain properties, that transform instances of $P$ to instances of $Q$ and solutions of $Q$ to solutions of $P$ respectively. In particular, $f$ and $g$ should be polynomial time computable, and $g$ has to map efficiently (discrete) solutions of $Q$ back to (discrete) solutions of $P$, for the corresponding discrete versions of $P$ and $Q$.

In the proof of Theorem \ref{thm:CHinBU} we analytically present what function $g$ is allowed to do. For both the cases where the input to problem $P$ is a set of (a) general circuits (which represent functions whose roots are possibly irrational numbers) or (b) linear circuits (where roots are rational), $g$ is implemented by a polynomial size arithmetic circuit with an additional type of gate $G_{>}$. This is a {\em comparison gate} with a single input which outputs 1 if the input is positive and 0 otherwise. We highlight that the extra gate type implements a discontinuous function, but this does not matter since the comparison gate is only used for function $g$, and not for $f$. 
Note that our reductions use more powerful functions than the ``SL-reductions'' used by Etessami and Yannakakis in \cite{EY10}, but nevertheless computable in polynomial time.

\subsection{The Consensus Halving problem}\label{sec:CH_definition}

In the Consensus Halving problem there is an object $A$ that is represented by
the $[0, 1]$ line segment, and there are $n$ agents. We wish to divide $A$ into
two (not necessarily contiguous) pieces such that every agent agrees that the
two pieces have equal value. Simmons and Su~\cite{SS03} have shown that,
provided the agents have bounded and continuous valuations over $A$, then we can
find a solution to this problem using at most $n$ cuts.

In this work we consider instances of Consensus Halving where the
valuations of the agents are presented as arithmetic circuits. Each
agent has a valuation function $f_i : [0, 1] \rightarrow \reals$, but it is
technically more convenient if they give us a representation of the
\emph{integral} of
this function. So for each agent $i$, we are given an arithmetic circuit
computing $F_i : [0, 1] \rightarrow \reals$ where for all $x \in [0, 1]$ we have
$F_i(x) = \int_0^x f(y) \, dy$. Then, the value of any particular segment of
$[a, b]$ to agent $i$ can be computed as $F_i(b) - F_i(a)$.

%

A solution to Consensus Halving is given by a \emph{$k$-cut} of the
object $A$, which is defined by a vector of \emph{cut-points} $(t_1, t_2, \dots,
t_k) \in [0,1]^k$, where $t_1 \leq \dots \leq t_k$. The cut-points 
$t_i$ split $A$ into up to  $k+1$ pieces. Note that they may in fact
split $A$ into fewer than $k+1$ pieces in the case where two cut-points $t_i =
t_j$ overlap. We define $X_i$ to be the $i$-th piece of $A$, meaning that $X_i = [t_{i-1}, t_{i}]$ for all $i \in [k+1]$, where we set $t_0 := 0$ and $t_{k+1} := 1$.

In a Consensus Halving solution the object $A$ is divided into two ``super-pieces'' $A_+$ and $A_-$ formed by the $k+1$ pieces induced by the $k$-cut. Each piece is assigned a sign $``+''$ or $``-''$ and all of the pieces with positive sign consist super-piece $A_+$ while the rest consist $A_-$.
For each agent $i$, we denote 
the value $A_+$ as $F_i(A_+) := \sum_{[a, b] \in A_+} \left(F_i(b)
- F_i(a) \right)$, and we define $F_i(A_-)$ analogously. The $k$-cut is a
solution to the Consensus Halving problem if $F_i(A_+) = F_i(A_-)$ for all
agents $i$. Without loss of generality we can consider only solutions of Consensus Halving where the signs of the pieces are alternating. That is because in any solution that has two consecutive pieces of same sign, the cut that separates them can be removed and transferred at the right end of $A$, taking value 1, and the two pieces can be merged into a single one. Throughout the paper we implicitly consider this definition of a solution. Notice that the solutions come in symmetric pairs, where the cuts are at the exact same points, and the signs are opposite.

We define two computational problems. 
Simmons and Su~\cite{SS03} have proved that there always exists a solution
using at most $n$-cuts, and our first problem is to find that solution.


\begin{tcolorbox}[title = {$(n,n)$-\CH}]
	{\bf Input:} For every agent $i \in [n]$, an arithmetic circuit $F_i$ computing
	the integral of agent $i$'s valuation function. \\
	{\bf Task:} Find an $n$-cut for $A$ such that $F_i(A_{+})=F_i(A_{-})$, for every agent $i \in [n]$.
\end{tcolorbox}

For $k < n$ a solution to the problem may or may not exist. So we define the
following decision variant of the problem. 
\begin{tcolorbox}[title = {$(n,k)$-\CH}]
	{\bf Input:} For every agent $i \in [n]$, an arithmetic circuit $F_i$ computing
	the integral of agent $i$'s valuation function. \\
	{\bf Task:} Decide whether there exists a $k$-cut for $A$ such that $F_i(A_{+})=F_i(A_{-})$, for every agent $i \in [n]$.
\end{tcolorbox}

For either of these two problems, if all of the inputs are represented by
linear arithmetic circuits, then we refer to the problem as \linCH. We note that
the known hardness results~\cite{FFGZ18,FG18} for Consensus Halving fall into
this class. Specifically, those results produce valuations that are piecewise
constant, and so the integral of these functions is piecewise linear, and these
functions can be written down as linear arithmetic circuits~\cite{O02}.

\section{The Class \BU}\label{sec:class_bu}

The Borsuk-Ulam theorem states that every continuous function from the
surface of an $(d+1)$-dimensional sphere to the $d$-dimensional Euclidean space maps at least one pair of antipodal points to the same point. 

\begin{theorem}[Borsuk-Ulam]
	\label{thm:borsuk-ulam}
	Let $f : S^d \to \reals^d$ be a continuous function, where $S^d$ is a
	$(d+1)$-dimensional sphere. Then, there exists an $x \in S^d$ 
	such that $f(x)=f(-x)$.
\end{theorem}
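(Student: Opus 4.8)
The plan is to use the classical reduction of the Borsuk-Ulam theorem to the non-existence of antipodal maps between spheres of consecutive dimension. Suppose toward a contradiction that $f(x) \neq f(-x)$ for every $x \in S^d \subseteq \reals^{d+1}$. Then $g : S^d \to \reals^d$ defined by $g(x) := f(x) - f(-x)$ is continuous, nowhere zero, and \emph{odd}, meaning $g(-x) = -g(x)$. Hence $h := g / \|g\| : S^d \to S^{d-1}$ is a continuous odd map, and it suffices to establish the following: \emph{there is no continuous odd map $h : S^d \to S^{d-1}$.}

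I would prove this by induction on $d$. For $d = 1$, any continuous map $S^1 \to S^0 = \{-1,+1\}$ is constant since $S^1$ is connected, contradicting oddness. For the inductive step ($d \geq 2$), restrict an alleged odd map $h : S^d \to S^{d-1}$ to the equator $S^{d-1} = \{x \in S^d : x_{d+1} = 0\}$, obtaining an odd map $h' : S^{d-1} \to S^{d-1}$. On one hand, $h$ restricted to the closed upper hemisphere $\{x \in S^d : x_{d+1} \geq 0\} \cong D^d$ is a continuous extension of $h'$ over a contractible set, so $h'$ is nullhomotopic and has degree $0$. On the other hand, the \textbf{Borsuk odd-mapping theorem} states that every continuous odd self-map of $S^{d-1}$ has \emph{odd}, hence nonzero, degree --- a contradiction, which finishes the induction and the proof.

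The main obstacle is the Borsuk odd-mapping theorem itself. I would prove it by a separate induction on $n$ for an odd map $h : S^n \to S^n$: approximate $h$ by a simplicial odd map, pass to the induced self-map of $\mathbb{RP}^n$, and compare the action on mod-$2$ homology dimension by dimension via the cell structure of $\mathbb{RP}^n$ (equivalently, use the transfer sequence of the double cover $S^n \to \mathbb{RP}^n$); this reduces the parity of $\deg h$ to that of the restriction of $h$ to an equatorial $S^{n-1}$, with the base case $n=1$ being the winding-number computation. An alternative route that avoids homology altogether --- and is closer in spirit to the combinatorial tools used elsewhere in this paper --- is to deduce Borsuk-Ulam directly from Tucker's lemma: triangulate $D^d$ with an antipodally symmetric boundary triangulation of mesh tending to $0$, label every vertex $v$ by $\operatorname{sign}(g_j(v)) \cdot j$ where $j$ indexes a coordinate of largest absolute value of the (assumed zero-free) vector $g(v)$, apply Tucker's lemma at each scale to obtain a complementary edge, and pass to a subsequence whose endpoints converge to a common point $x$; continuity then forces $g_j(x) = 0$ with $|g_j(x)|$ maximal among the coordinates of $g(x)$, so $g(x) = 0$, contradicting the zero-freeness assumed at the outset.
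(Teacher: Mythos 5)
The paper does not prove this theorem --- it is stated as a classical fact and used only through its consequences (via Simmons--Su for Consensus Halving, and via \tucker for Lemma~\ref{lem:eps-borsuk-PPA}) --- so there is no in-paper proof to compare against; what follows assesses your argument on its own merits.

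Your proof is correct. The reduction to the non-existence of a continuous odd map $h : S^d \to S^{d-1}$, and the contradiction that the equatorial restriction $h'$ is nullhomotopic (it extends over the upper-hemisphere disk) yet must have odd degree, is sound. One cosmetic remark: the outer ``induction on $d$'' is not actually an induction, since its inductive step never invokes the inductive hypothesis --- only the odd-mapping theorem in dimension $d-1$ --- so the base case $d=1$ is just a degenerate instance of the same argument. The genuine induction sits inside your proof of the Borsuk odd-mapping theorem itself, via the mod-$2$ cellular chain complex (or transfer sequence) of $\mathbb{RP}^n$, and that sketch is fine.

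Of your two suggested routes, the Tucker's-lemma one is far better aligned with this paper. The proof of Lemma~\ref{lem:eps-borsuk-PPA} already reduces approximate Borsuk--Ulam to \tucker using precisely your labelling --- signed index of a coordinate of maximal magnitude of $g(v) = f(v) - f(-v)$ --- and your limit argument is its exact-case compactness analogue. If you write it up, make explicit (i) how the oddness of $g$ on $S^d$ lets you pass to a triangulated closed hemisphere homeomorphic to $D^d$ whose boundary inherits the required antipodal labelling, or alternatively invoke the spherical form of Tucker's lemma on $S^d$ as the paper itself does, and (ii) that a pigeonhole step fixes the complementary-edge label along the convergent subsequence before you take the limit. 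The homological route is the standard textbook argument and is perfectly correct, but it connects to nothing else in the paper.
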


This theorem actually works for any domain $D$ that 
is an antipode-preserving homeomorphism of $S^d$, where by ``antipode-preserving''
we mean that for every $x \in D$ we have that $-x \in D$. 
In this work, we choose $S^d$ to be 
the sphere in $d+1$ dimensions with respect to $L_1$ norm:
\begin{align*}
	S^d := \left \{ x\ |\ x=(x_1, x_2, \dots, x_{d+1}),\ \sum_{i=1}^{d+1}|x_i| = 1 
	\right \}.
\end{align*}

We define the \emph{Borsuk-Ulam} problem as follows.
\begin{tcolorbox}[title= \borsuk]
	{\bf Input:} A continuous function $f : \reals^{d+1} \to \reals^d$ presented as
	an arithmetic circuit.\\
	{\bf Task:} Find an $x \in S^d$ such that $f(x)=f(-x)$.
\end{tcolorbox}
Note that we cannot constrain an arithmetic circuit to only take inputs from the
domain $S^d$, so we instead put the constraint that $x \in S^d$ onto the
solution.

The complexity class \BU is defined as follows.
%
\begin{definition}[\BU]
	\label{def:BU}
	The complexity class \BU consists of all search problems that can be 
	reduced to \borsuk in polynomial time under a reduction of the type described in Section \ref{sec:ar_cir_red}. 
\end{definition}

\subsection{\linBU}

When the input to a \borsuk instance is a linear arithmetic circuit, then we
call the problem \linborsuk, and we define the class \linBU as follows.

\begin{definition}[\linBU]
	\label{def:linBU}
	The complexity class \linBU consists of all search problems that can be 
	reduced to \linborsuk in polynomial time. 
\end{definition}

We will show that $\linBU = \PPA$. The proof that $\linBU
\subseteq \PPA$ is similar to the proof that 
Etessami and Yannakakis used to show that $\linFIXP \subseteq
\PPAD$~\cite{EY10}, while the fact that 
$\PPA \subseteq \linBU$ will follow from our results on Consensus Halving in
Section~\ref{sec:bu}.

To prove $\linBU \subseteq \PPA$ we will reduce to the \emph{approximate}
Borsuk-Ulam problem.
It is well known that the Borsuk-Ulam theorem can be proved via Tucker's lemma,
and Papadimitriou noted that this implies that finding an approximate solution
to a Borsuk-Ulam problem lies in \PPA~\cite{papadimitriou1994complexity}. This
is indeed correct, but the proof provided in~\cite{papadimitriou1994complexity}
is for a slightly different problem\footnote{The problem used in
	\cite{papadimitriou1994complexity} presents the function as a polynomial-time
	Turing machine rather than an arithmetic circuit, and the Lipschitzness of the
	function is guaranteed by constraining the values that it can take.}. Since our
results will depend on this fact, we provide our own definition and
self-contained proof here.
We define the approximate Borsuk-Ulam problem as follows.

\begin{tcolorbox}[title= \epsborsuk]
	{\bf Input:} 
	A continuous function $f: \reals^{d+1} \to \reals^d$
	presented as an arithmetic circuit, along with two constants $\eps, \lambda \in \reals$.
	
	{\bf Task:} Find one of the following. 
	\begin{enumerate}
		\item A point $x \in S^d$ such that $\|f(x) - f(-x)\|_\infty \le \eps$.
		\item Two points $x, y \in S^d$ such that $\|f(x) - f(y)\|_\infty > \lambda
		\cdot \|x - y\|_\infty$. 
	\end{enumerate}
\end{tcolorbox}

The first type of solution is an approximate solution to the Borsuk-Ulam
problem, while the second type of solution consists of any two points that
witness that the function is not $\lambda$-Lipschitz continuous in the
$L_\infty$-norm. The second type of solution is necessary, because an 
arithmetic circuit is capable, through repeated squaring, of computing
doubly-exponentially large numbers, and the reduction to \tucker may not be able
to find an approximate solution for such circuits. Note however that later in Lemma \ref{lem:linborsuk-in-ppa} we reduce \linborsuk to \epsborsuk, where the former has as input a linear arithmetic circuit and therefore Lipschitzness in the \epsborsuk we reduce to is guaranteed. In other words, for the purposes of this section it would suffice to assume Lipschitzness of the input function of \epsborsuk, but here we state a more general version of the problem and show that it is also included in \PPA.
%
We now re-prove the result of Papadimitriou in the following lemma.

\begin{lemma}[\cite{papadimitriou1994complexity}]\label{lem:eps-borsuk-PPA}
	\epsborsuk is in \PPA.
\end{lemma}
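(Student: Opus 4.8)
The plan is to reduce \epsborsuk to \tucker (which is known to be in \PPA), following the classical topological proof that the Borsuk-Ulam theorem follows from Tucker's lemma, but being careful about the two subtleties that distinguish our setting: the function is given by an arithmetic circuit (so it may fail to be Lipschitz, which is why the second solution type exists), and we need everything to be polynomial-time computable on a discrete Turing machine.

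\medskip

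\textbf{Step 1: Setup and triangulation.} First I would fix a fine triangulation of the sphere $S^d$ (in the $L_1$ norm) that is antipodally symmetric — the standard choice is the boundary of the cross-polytope refined by repeated barycentric-style subdivision, so that $x$ is a vertex iff $-x$ is a vertex. The mesh size $\delta$ of this triangulation will be chosen as a function of $\eps$, $\lambda$, and $d$ (roughly $\delta \approx \eps/(\text{poly}(d)\cdot\lambda)$), small enough that if the function behaves in a $\lambda$-Lipschitz way on a simplex then the approximation guarantee $\|f(x)-f(-x)\|_\infty \le \eps$ is recovered at the returned vertex. I would need the number of vertices to be at most exponential in the input size so that the reduction is polynomial; since $\delta$ is inverse-polynomial this is fine.

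\medskip

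\textbf{Step 2: Defining the Tucker labelling.} At each vertex $v$ of the triangulation I evaluate $f(v) - f(-v) \in \reals^d$ (this is polynomial-time, since $f$ is an arithmetic circuit and $v$ is a rational point of polynomial bit-length). By antisymmetry $f(-v)-f(v) = -(f(v)-f(-v))$, so taking the coordinate of largest absolute value and recording its index together with its sign yields a label in $\{\pm 1, \pm 2, \dots, \pm d\}$ that is antipodal: $\lambda(-v) = -\lambda(v)$. Ties are broken by a fixed deterministic rule (e.g. smallest index, then positive sign) that respects antisymmetry. This is exactly the input format of \tucker, so Tucker's lemma / the \PPA\ membership of \tucker\ gives us an edge (a $1$-simplex) of the triangulation whose two endpoints $x, y$ receive complementary labels $+j$ and $-j$.

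\medskip

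\textbf{Step 3: Extracting a solution.} From the complementary edge $x,y$ I would argue the dichotomy. If $f$ is ``locally $\lambda$-Lipschitz'' along this edge, i.e. $\|f(x)-f(y)\|_\infty \le \lambda\|x-y\|_\infty \le \lambda\delta$, then since $x$ has its dominant coordinate $j$ positive and $y$ has coordinate $j$ negative (in the vector $f(\cdot)-f(-\cdot)$), a short calculation bounds $\|f(x)-f(-x)\|_\infty$ by $O(d\lambda\delta) \le \eps$ for our choice of $\delta$, so $x$ is a type-1 solution. Otherwise $\|f(x)-f(y)\|_\infty > \lambda\|x-y\|_\infty$ and $(x,y)$ is directly a type-2 solution. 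Either way we output in polynomial time, and the map from the (discrete) \tucker\ solution back to the \epsborsuk\ solution is the required polynomial-time function $g$. I would also need to check that the returned points genuinely lie in $S^d$ (the triangulation vertices are constructed to do so) and handle the boundary/orientation hypotheses of the particular Tucker formulation being invoked.

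\medskip

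\textbf{Main obstacle.} The delicate part is making the ``local Lipschitz $\Rightarrow$ approximate antipode'' implication quantitatively tight enough while keeping $\delta$ inverse-polynomial, and simultaneously ensuring the antipodal triangulation of the $L_1$-sphere has the combinatorial structure that the chosen version of \tucker\ demands (antipodal boundary labelling, correct parity of dimension, vertices of small bit-complexity). Getting the constants in the triangulation refinement, the tie-breaking, and the error propagation to line up — rather than any single deep idea — is where the real work lies; the topological content is entirely standard.
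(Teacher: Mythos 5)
Your proposal follows essentially the same route as the paper: triangulate the $L_1$-sphere antipodally with mesh $\eps/\lambda$, label each vertex by the signed dominant coordinate of $g(v)=f(v)-f(-v)$, invoke the \PPA\ membership of \tucker, and resolve the complementary edge into either a Lipschitz violation or an approximate antipodal solution. One small simplification over your sketch: no factor of $d$ is needed in the mesh size, since the dominant coordinates of $g(u)$ and $g(v)$ having opposite signs gives $|g(v)_i|\le\|g(u)-g(v)\|_\infty\le\eps$ directly, and the paper performs the Lipschitz test on $g$ itself rather than on $f$ at the single pair $(x,y)$ (testing only $f(x)$ versus $f(y)$ would not by itself control $g(x)-g(y)$, which also involves $f(-x)-f(-y)$).
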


\begin{proof}
	This proof is essentially identical to the one given by 
	Papadimitriou, but various minor changes must be made due to the fact that our
	input is an arithmetic circuit, and our domain is the $L_1$-sphere.
	His proof works by reducing to the \tucker problem. In this problem we have a
	antipodally symmetric triangulation of $S^d$ with set of vertices $V$, and a
	labelling function $L: V \to \{-1, 1, -2, 2, \dots, -d, d\}$ that satisfies
	$L(v) = -L(-v)$ for all $v \in V$. The task is to find two adjacent vertices $v$
	and $u$ such that $L(v) = -L(u)$, whose existence is guaranteed via Tucker's
	lemma. Papadimitriou's containment proof goes via the hypercube, but
	in~\cite{FFGZ18} it is pointed out that this problem also lies in PPA when the
	domain is the $L_1$-sphere $S^d$.
	
	To reduce the \epsborsuk problem for $(f, \eps, \lambda)$ to \tucker, we choose an arbitrary triangulation of $S^n$ such that the distance between
	any two adjacent vertices is at most $\eps/\lambda$. Let $g(x) = f(x) - f(-x)$.
	To determine the label of a vertex $v \in V$, first find the coordinate $i$ that
	maximises $|g(v)_i|$ breaking ties arbitrarily, and then set $L(v) = i$ if $g(v)_i > 0$ and $L(v) = -i$
	otherwise. 
	
	Tucker's lemma will give us two adjacent vertices $v$ and $u$ satisfying 
	$L(v) = -L(u)$, and we must translate this to a solution to \epsborsuk. 
	If $\| g(u) - g(v) \|_\infty > \lambda \cdot \| u - v \|_\infty$, then we have a violation of
	Lipschitz continuity. Otherwise, we have 
	\begin{align*}
		\| g(u) - g(v) \|_\infty &\le \lambda \cdot \| u - v \|_\infty \\
		&\le \lambda \cdot \frac{\eps}{\lambda}\\ 
		&\le \eps 
	\end{align*}
	Let $i =
	L(v)$. Note that by definition we have that $|g(v)_j| \le |g(v)_i|$ for all $j$,
	that $|g(u)_j| \le |g(u)_i|$ for all $j$, and that that $g(u)_i$ and $g(v)_i$
	have opposite signs. These three facts, along with the fact that 
	$\| g(u) - g(v) \|_\infty \le \eps$ imply that $|g(v)_j| \le \eps$ for all $j$.
	Hence we can conclude that $\| f(v) - f(-v) \|_\infty \le \eps$ meaning that $v$ is a solution to \epsborsuk.  
\end{proof}

To show that \linBU $\subseteq$ \PPA we will provide a polynomial-time reduction
from \linborsuk to \epsborsuk. To do this, we follow closely the technique used
by Etessami and Yannakakis to show that $\linFIXP \subseteq \PPAD$~\cite{EY10}.
The idea is to make a single call to \epsborsuk to find an approximate solution
to the problem for a suitably small $\eps$, and to then round to an exact
solution by solving a linear program. To build the LP, we depend on the fact
that we have access to the linear arithmetic circuit that represents $f$.

\begin{lemma}\label{lem:linborsuk-in-ppa}
	\linborsuk is in \PPA.
\end{lemma}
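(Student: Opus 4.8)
The plan is to give a polynomial-time many-one reduction from \linborsuk to \epsborsuk, following the Etessami--Yannakakis template for $\linFIXP \subseteq \PPAD$. Given a linear arithmetic circuit computing $f : \reals^{d+1} \to \reals^d$, I would first extract from the circuit a syntactic description of the piecewise-linear structure of $g(x) := f(x) - f(-x)$: each gate is one of $\{\zeta, +, -, *\zeta, \max, \min\}$, so by guessing, at each $\max$/$\min$ gate, which of the two inputs is the larger, the circuit reduces to an \emph{affine} function on a polyhedral cell, and the breakpoints of all these affine pieces have bit-complexity polynomial in the circuit size. In particular one can compute a polynomial bound $L$ on the Lipschitz constant of $g$ in the $L_\infty$ norm (a product/sum of the rational multipliers appearing in the circuit), and a polynomial bound on the bit-size of the vertices of the cell decomposition of $S^d$ induced by $g$. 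This gives a separation quantity $\delta > 0$, of inverse-exponential but polynomial-bit-size, such that any two \emph{distinct} linear pieces that are both "active" near a point differ in a way detectable at scale $\delta$; equivalently, if $g$ has no exact antipodal-collision point, then $\|g(x)\|_\infty \ge \delta$ on all of $S^d$.

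Next I would invoke \epsborsuk once with this circuit (viewing $f$ as a general arithmetic circuit), the Lipschitz parameter $\lambda := L$, and approximation parameter $\eps := \delta/2$ (small enough, polynomially many bits). Since the circuit is linear it is genuinely $L$-Lipschitz, so the second type of \epsborsuk solution (a Lipschitz violation at scale $\lambda$) is impossible; hence we are guaranteed a point $x \in S^d$ with $\|f(x) - f(-x)\|_\infty = \|g(x)\|_\infty \le \delta/2$. By the separation bound above, $x$ must lie inside (or on the boundary of) a cell of the polyhedral decomposition on which the active exact solution set is nonempty and rational.

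The final step is the rounding: identify the polyhedral cell containing $x$ (by evaluating the circuit at $x$ and recording, at each $\max$/$\min$ gate, which branch is taken — breaking ties consistently), on which $g$ coincides with an explicit affine map $x \mapsto Ax + b$ with rational $A, b$ of polynomial bit-size. Then solve the rational linear program: find $y$ in that closed cell (and in $S^d$, which on the cell is itself a union of polynomially many facets, so we further restrict to the facet/orthant containing $x$) with $Ay + b = 0$. Because $x$ is an approximate solution inside this cell and $g$ restricted to it is affine with the relevant separation, this LP is feasible, and any basic feasible solution $y$ has polynomial bit-size and satisfies $f(y) = f(-y)$ exactly. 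Computing $y$ and verifying it solves \borsuk is done by the circuit-with-comparison-gates permitted for the function $g$ in Section~\ref{sec:ar_cir_red}. I expect the main obstacle to be making the "separation / rounding is correct" argument fully rigorous: one must argue that the approximate point lies in a cell whose associated affine piece actually has a zero on $S^d$ (not merely that $g$ is small there), and handle the interaction between the cell decomposition coming from the circuit's $\max$/$\min$ gates and the piecewise-affine structure of $S^d$ under the $L_1$ norm; this is exactly where the polynomial bit-size bounds on breakpoints and the choice $\eps \ll \delta$ must be used carefully, mirroring the delicate part of the $\linFIXP \subseteq \PPAD$ proof.
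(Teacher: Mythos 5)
Your proposal follows the same route as the paper: compute a polynomial Lipschitz bound $\lambda$ by walking the circuit, pick $\eps$ with inverse-exponential magnitude but polynomial bit-size, make one call to \epsborsuk, identify the affine cell containing the approximate solution via the active $\max/\min$ branches, and finish with a rational LP over that cell intersected with a single facet of $S^d$. The obstacle you flag at the end is exactly the crux, and the paper resolves it without stating a separate ``separation'' lemma: rather than arguing a priori that $\|g\|_\infty\ge\delta$ off the solution set (note your global version of that claim is vacuous, since Borsuk--Ulam guarantees a zero always exists; what you actually need is the \emph{per-cell} statement), it sets up a single LP that minimizes $z\ge\|g(x)\|_\infty$ over the cell $\cap$ facet, observes that the optimum $z^*$ is a basic solution of a system with polynomially-bounded coefficients and hence has bit-size $\le m$, and concludes $z^*=0$ from $0\le z^*\le\eps<1/2^m$. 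This packages feasibility, separation, and rounding into one step, and is exactly what ``mirroring the delicate part of the $\linFIXP\subseteq\PPAD$ proof'' amounts to; otherwise your argument and the paper's are the same.
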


\begin{proof}
	Suppose that we have a function $f$ that is represented as a linear arithmetic
	circuit. We will provide a polynomial-time reduction to \epsborsuk.

	The first step is to argue that, for all $\eps > 0$, we can make a single call
	to \epsborsuk in order to find an $\eps$-approximate solution to the problem.
	The only technicality here is that we must choose $\lambda$ so as to ensure that
	no violations of $\lambda$-Lipschitzness in the $L_\infty$-norm can be produced as a solution.
	
	Fortunately, every linear arithmetic circuit computes a $\lambda$-Lipschitz
	function where the bit-length of $\lambda$ is polynomial in the size of the
	circuit. Moreover, an upper bound on $\lambda$ can easily be computed by
	inspecting the circuit. 
	\begin{itemize}
		\item An input to the circuit has a Lipschitz constant of 1.
		\item A $+$ gate operating on two gates with
		Lipschitz constants $x$ and $y$ has a Lipschitz constant of at most $x + y$.
		\item A $*\zeta$ gate operating on a gate with Lipschitz constant $x$ has a
		Lipschitz constant of at most $|\zeta| \cdot x$. 
		\item A max or min gate operating on two gates with Lipschitz constants $x$ and
		$y$ has a Lipschitz constant of at most $\max(x, y)$.
	\end{itemize}
	The Lipschitz constant for the circuit in the $L_\infty$-norm is then the
	maximum of the Lipschitz constants of the output nodes of the circuit.
	So, for any given $\eps > 0$ that can be represented in polynomially many bits,
	we can make a single call to \epsborsuk, in order to find an $\eps$-approximate
	solution to the Borsuk-Ulam problem.
	
	The second step is to choose an appropriate value for $\eps$ so that the
	approximate solution can be rounded to an exact solution using an LP. 
	Let $g(x) = f(x) - f(-x)$. Note that $g(x)$ can also be computed by a linear
	arithmetic circuit, and that $g(x) = 0$ if and only if $f(x) = f(-x)$.
	
	We closely follow the approach of Etessami and Yannakakis~\cite{EY10}. They use
	the fact that the function computed by a linear arithmetic circuit is
	piecewise-linear, and defined by (potentially exponentially many) hyperplanes.
	They give an algorithm that, given a point $p$ in the domain of the circuit,
	computes in polynomial time the linear function (which represents the hyperplane) that defines the output of the
	circuit for $p$. Furthermore, they show that the following can be produced in polynomial
	time from the representation of the circuit and from $p$.
	\begin{itemize}
		\item A system of linear constraints $Ax \le b$ such that a point $x$ 
		satisfies the constraints only if the linear function (which represents the hyperplane) that defines the output
		of the circuit for $p$ also defines the output of the circuit for $x$.
		
		\item A linear formula $Cx + C'$ that determines the output of the circuit for
		all points that satisfy $Ax \le b$.
	\end{itemize}
	To choose $\eps$, the following procedure is used. Let $n$ be the number of
	inputs to $g$, and let $m$ be an upper bound on the bit-size of the
	solution of any linear system with $n+1$ equations where the coefficients are drawn from
	the hyperplanes that define the function computed by $g$. This can be computed
	in polynomial time from the description of the circuit, and $m$ will have
	polynomial size in relation to the description of the circuit.
	We choose $\eps < 1/2^m$.
	
	We make one call to \epsborsuk to find a point $p \in S^n$ such that $\|f(p) -
	f(-p)\|_\infty \le \eps$, meaning that $\|g(p)\|_\infty \le \eps$. 
	The final step is to round this to an exact solution of  \borsuk. 
	To do this, we can modify the linear program used by 
	Etessami and Yannakakis~\cite{EY10}. 
	We apply the operations given above to the circuit $g$ and the point $p$ to
	obtain the system of constraints $Ax \le b$ and the formula $Cx + C'$ for the
	hyperplane defining the output of $g$ for $p$.
	We then solve the following linear program.
	The variables of the LP are a vector $x$ of length $n$,
	and a scalar $z$. The goal is to minimize $z$ subject to:
	\begin{align*}
		Ax &\le b \\
		(Cx)_i + C'_i &\le z & \text{for $i = 1, \dots, n$} \\
		-((Cx)_i + C'_i) &\le z & \text{for $i = 1, \dots, n$} \\
		x_i &\ge 0 & \text{for each $i$ with $p_i \ge 0$} \\
		x_i &\leq 0 & \text{for each $i$ with $p_i < 0$} \\
		\sum_{i = 1}^n |x_i| &= 1 & \text{(see below regarding $|x_i|$)}
	\end{align*}
	The first constraint ensures that we remain on the same cell as the one
	defining the output of $g$ for $p$. The second and third constraints ensure
	that $\|g(x)\|_{\infty} \le z$. The fourth and fifth constraints ensure that $x_i$ has the
	same sign as $p_i$, while the sixth constraint ensures that $x$ lies on the
	surface $S^n$. Note that the $|x_i|$ operation in the sixth constraint is not a
	problem, since the fourth and fifth constraints mean that we know the sign of
	$x_i$ up front, and so we just need to add
	either $x_i$ or $-x_i$ to the sum. All of the
	above implies that that $x$ is a $z$-approximate solution of \borsuk for $f$. 
	
	We must now argue that the solution sets $z = 0$. First we note that the LP has
	a solution, because the point $(p, \eps)$ is feasible, and the LP is
	not unbounded since $z$ cannot be less than
	zero due to the second and third constraints.
	So let $(x^*, z^*)$ be an optimal solution. This solution lies at the
	intersection of $n+1$ linear constraints defined by rationals drawn
	from the circuit representing $g$, and so it follows that $z^*$
	is a rational of bit length at most $m$. Since $0 \le z^* \le \eps < 1/2^m$, it
	follows that $z^* = 0$, and thus $x^*$ is an exact solution to \borsuk for $f$.    
\end{proof}
%

\section{Containment Results for \CH}
\label{sec:bu}



\subsection{$(n, n)$-\CH is in \BU and \linBU = \PPA}
\label{sec:CHinBU}

We show that $(n,n)$-\CH is contained in \BU. Simmons and Su \cite{SS03} show the
existence of an $n$-cut solution to the Consensus Halving problem by applying
the Borsuk-Ulam theorem, and we follow their approach in this reduction.
However, we must show that the approach can be implemented using arithmetic
circuits. We take care in the reduction to avoid $G_{*}$ gates, and so if the
inputs to the problem are all linear arithmetic circuits, then our reduction
will produce a \linborsuk instance. Hence, we also show that $(n, n)$-\linCH is
in \linBU. 

\begin{theorem}\label{thm:CHinBU}
	The following two containments hold.
	\begin{itemize}
		\item $(n,n)$-\CH is in \BU. 
		\item $(n, n)$-\linCH is in \linBU.
	\end{itemize}
\end{theorem}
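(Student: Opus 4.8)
The plan is to follow the topological proof of Simmons and Su, but to realize each step with explicit arithmetic circuits so that the reduction produces a genuine \borsuk (respectively \linborsuk) instance, together with a decoding map $g$ of the kind allowed in Section~\ref{sec:ar_cir_red}. Given an instance of $(n,n)$-\CH with integral circuits $F_1,\dots,F_n$, I would associate to a point $x \in S^n \subseteq \reals^{n+1}$ a signed $n$-cut of $A=[0,1]$ as follows: interpret the coordinates $x_1,\dots,x_{n+1}$ as the (signed) lengths of $n+1$ consecutive pieces of $A$, where $\sum_i |x_i| = 1$ because $x \in S^n$ (this is exactly why the $L_1$-sphere is the convenient domain), the sign of $x_i$ gives the label ``$+$'' or ``$-$'' of the $i$-th piece, and the cut-points are the partial sums $t_j = \sum_{i\le j} |x_i|$. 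I then define $f : \reals^{n+1} \to \reals^n$ by letting $f(x)_i$ be agent $i$'s value of the positive super-piece induced by this cut, i.e.\ $f(x)_i = \sum_{j : x_j \ge 0} \bigl(F_i(t_j) - F_i(t_{j-1})\bigr)$. The key point is that $f(-x)_i$ then equals agent $i$'s value of the \emph{negative} super-piece of the same cut, so $f(x) = f(-x)$ forces $F_i(A_+) = F_i(A_-)$ for every $i$, which is precisely a Consensus Halving solution.

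The main work is to show that $f$ can be written as an arithmetic circuit of polynomial size using only the allowed gates, and crucially without any $G_*$ gate when the $F_i$ are linear, so that the \linCH case lands in \linBU. First, $|x_i|$ is computable as $\max(x_i, -x_i)$. The partial sums $t_j$ are then obtained by $G_+$ gates. Evaluating $F_i(t_j)$ is just composing the given circuit for $F_i$ with the subcircuit computing $t_j$; since $F_i$ is linear in the \linCH case this composition stays linear. The genuinely delicate part is selecting, for each piece $j$, whether its contribution $F_i(t_j)-F_i(t_{j-1})$ is added to $f(x)_i$, because this selection depends on the sign of $x_j$, and we are not allowed a comparison gate inside $f$. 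I would handle this by using the identity that the value of the positive super-piece can be written in a ``sign-free'' way: writing $c_j := F_i(t_j)-F_i(t_{j-1})$ for the value of piece $j$ and $\sigma_j = x_j/|x_j| \in \{+1,-1\}$ for its sign, the positive-part value is $\tfrac12\sum_j (1+\sigma_j)c_j$, but to avoid division I instead exploit the alternating-sign normal form of solutions together with the $\max/\min$ gates. Concretely, since we may assume the signs alternate, the sign of piece $j$ is determined by the parity of $j$ except for a global flip; so I would build two candidate circuits $f^{\mathrm{even}}$ and $f^{\mathrm{odd}}$ (one for each global sign choice), each of which is a fixed linear combination of the $c_j$'s with fixed $\pm1$ coefficients and hence needs no sign test, and then note that a Borsuk-Ulam solution for either choice decodes to a Consensus Halving solution; this lets the reduction produce one \borsuk instance (say on a slightly larger sphere, or by running the argument for the fixed alternating pattern and observing that antipodal points cover both global signs).

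Once $f$ is built, correctness of the decoding map $g$ must be checked: given $x \in S^n$ with $f(x)=f(-x)$, $g$ reconstructs the cut-points $t_j = \sum_{i\le j}|x_i|$ and the signs $\mathrm{sign}(x_j)$, which is where the extra comparison gate $G_>$ of Section~\ref{sec:ar_cir_red} is used — and this is permitted precisely because $G_>$ appears only in $g$, not in $f$. I would verify that the resulting $n$-cut is a valid solution of $(n,n)$-\CH: the $t_j$ are nondecreasing and lie in $[0,1]$ because the $|x_i|$ are nonnegative and sum to $1$; and $F_i(A_+)=F_i(A_-)$ holds by the computation above. For the \linCH bullet I would then simply observe that every gate used in constructing $f$ is among $\{\zeta,+,-,*\zeta,\max,\min\}$ whenever the input circuits $F_i$ are linear, so the produced \borsuk instance is in fact a \linborsuk instance, giving containment in \linBU.

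I expect the main obstacle to be exactly the sign-dependent selection of pieces into $A_+$ without a comparison gate: the naive formula uses $x_j/|x_j|$ or an indicator of $x_j>0$, both forbidden inside $f$. Resolving this cleanly — via the alternating-sign normal form so that the coefficients become fixed constants, handled by a bounded case split over the two global sign patterns, and checking that antipodal symmetry of $S^n$ makes a single Borsuk-Ulam instance suffice — is the crux of the argument. The remaining steps (circuit composition, $|{\cdot}|$ via $\max$, partial sums, and the decoding) are routine and polynomial-size.
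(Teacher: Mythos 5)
Your overall architecture is right and matches the paper: interpret $x \in S^n$ as signed widths of $n+1$ pieces, let $f(x)_i$ be agent $i$'s value of the positive super-piece, use that $f(-x)_i$ is then the value of the negative super-piece, build $|x_i|$ from $\max$ gates, and rely on $G_>$ only in the decoding map $g$. You also correctly pinpoint the crux: inside $f$ you must select the positive pieces into the sum without a sign test.

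However, the resolution you propose for that crux does not work. If you fix the alternating sign pattern and define $f^{\mathrm{odd}}(x)_i$ (or $f^{\mathrm{even}}(x)_i$) as a fixed $\pm 1$ combination of the piece values $c_j = F_i(t_j)-F_i(t_{j-1})$, then $f^{\mathrm{odd}}$ depends only on the widths $|x_j|$ and not on the signs of the $x_j$ at all, since the cut-points $t_j$ are partial sums of $|x_i|$. Consequently $f^{\mathrm{odd}}(-x) = f^{\mathrm{odd}}(x)$ holds identically, the Borsuk-Ulam condition $f(x)=f(-x)$ becomes vacuous, and the instance carries no information about Consensus Halving. The ``alternating-sign normal form'' is a property you may impose on a Consensus Halving \emph{solution} when decoding, but it is not something you can assume about an arbitrary input $x \in S^n$ to the circuit $f$; the circuit must be defined on all of $S^n$ and must genuinely depend on the signs, else antipodality is wasted. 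Running it twice for the two global flips, or enlarging the sphere, does not change this, because both circuits are still even functions of $x$.

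The paper resolves the crux differently, and this is the one idea your proposal is missing. It sets $p_j := \max(x_j, 0)$ and defines the $j$-th contribution as $q_j := F_i(t_{j-1}+p_j) - F_i(t_{j-1})$. When $x_j > 0$ we have $p_j = |x_j|$, so $t_{j-1}+p_j = t_j$ and $q_j = c_j$; when $x_j \le 0$ we have $p_j = 0$, so the interval is degenerate and $q_j = 0$. Thus $\sum_j q_j$ is exactly the value of the positive super-piece, the sign-dependent selection is carried out by the width of an interval rather than by a coefficient, only $\max$, $+$, $-$, $\zeta$ gates are introduced beyond the given $F_i$, and the construction stays linear when the $F_i$ are linear. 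You should replace the alternating-pattern case split with this $\max(x_j,0)$ endpoint trick; the remaining parts of your argument (composition, cut-points, decoding via $G_>$, and the \linBU specialization) then go through as you describe.
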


\begin{proof}
	Let us first summarise the approach used by 
	Simmons and Su \cite{SS03}. Given valuation functions $F_i$ for the $n$ agents,
	they construct a Borsuk-Ulam instance given by a function $b : S^n \rightarrow
	\reals^n$. Each point $(x_1, x_2, \dots, x_{n+1}) \in S^n$ can be interpreted as
	an $n$-cut of $[0, 1]$, where $|x_i|$ gives the \emph{width} of the $i$th piece,
	and the sign of $x_i$ indicates whether the $i$th piece should belong in $A_+$
	or $A_-$. They then define $b(x)_i = F_i(A_+)$ for each agent $i$. The fact that
	$-x$ flips the sign of each piece, but not the width, implies that $b(-x)_i =
	F_i(A_-)$. Hence, any point that satisfies $b(x) = b(-x)$ has the property that 
	$F_i(A_+) = F_i(A_-)$ for all agents $i$, and so is a solution to Consensus Halving.
	
	Our task is to implement this reduction using arithmetic circuits and construct in polynomial time a \borsuk instance from a $(n,n)$-\CH instance. Suppose that
	we are given arithmetic circuits $F_i$ implementing the integral of each agent's
	valuation function. We show how to map each $F_i$ to a function
	$b(x)_i = F_i(A_+)$ computable via a linear arithmetic circuit, where $x \in S^n$, i.e. a \borsuk instance. The tricky
	part of this, is that for each agent $i$ we must include the $j$-th piece in the sum if and only if $x_j$
	is positive. Then we show how to map a solution $x$ back to a solution of $(n,n)$-\CH.

	We begin by observing that the operation of $|x|$ can be implemented via a
	linear arithmetic circuit. Specifically, via the following construction:
	\begin{equation*}
		|x| := \max(x, 0) + \max(-x, 0).
	\end{equation*}
	Hence, we can implement $|x|$ using only gates $G_{\max}$, $G_{+}$, and $G_{\zeta}$.
	Then, we define $t_0 := 0$, and for each $j$ in the range $1 \le j  \le n+1$, define:
	\begin{equation}\label{eq:cut_points}
		t_{j}:= t_{j-1} +  |x_{j}|.
	\end{equation}
	The value of $t_j$ gives
	the end of the $j$-th piece, and note that $t_{n+1} = 1$.
	Next, for each $j$ in the range $1 \le j \le n+1$ we define:
	\begin{equation*}
		p_{j}:= \max(x_j, 0).
	\end{equation*}
	Note that $p_j$ is $x_j$ whenever $x_j$ is positive, and zero otherwise.
	Finally, for $1 \le j \le n+1$ define:
	\begin{equation*}
		q_{j}:= F_i(t_{j-1} + p_j) - F_i(t_{j-1}).
	\end{equation*}
	Using the reasoning above, we can see that $q_j$ is agent $i$'s valuation for
	piece $j$ whenever $x_j$ is positive, and zero otherwise. So we can define
	\begin{equation*}
		b(x)_i = \sum_{j=1}^{n+1} q_j, 
	\end{equation*}
	implying that $b(x)_i = F_i(A_+)$, as required.
	
	Finally, we need to map a solution $x$ of \borsuk to a pair of $(n,n)$-\CH solutions, i.e. a vector of cut-points $(t_1, t_2, \dots, t_n)$. Recall that the cut points correspond to a pair of symmetric solutions where, in each, the signs of the resulting pieces are alternating (by definition) and the two solutions have opposite signs. 

	Let the input to $(n,n)$-\CH be a general circuit with gates $G_{\zeta}, G_{+}, G_{-}, G_{*\zeta}, G_{*}, G_{\max}, G_{\min}$, or a linear circuit, where gate $G_{*}$ is disallowed.
	From a solution of \borsuk we map back to a solution of $(n,n)$-\CH by constructing a circuit in which we allow the use of an extra {\em comparison} gate $G_{>}$. This gate takes an input $v_{in}$ and outputs $1$ if $v_{in} > 0$ and $0$ otherwise. One can see that the function this gate implements is discontinuous only for $v_{in} = 0$, contrary to the rest of the gates that implement continuous functions. This fact, however, does not affect the validity of the reduction (see also Section \ref{sec:ar_cir_red}). Whether a mapping can be constructed without the use of $G_{>}$, or in general, with using only gates that implement continuous functions is left as an open problem.
	
	We denote the operation this gate implements in the following way: $\{ x \}_{>} := 1$, if $x > 0$, and $\{ x \}_{>} := 0$, otherwise. This circuit computes the $n$-cut that is a solution to $(n,n)$-\CH.
	Given a solution of \borsuk, i.e. a vector $(x_1, x_2, \dots, x_{n+1}) \in S^n$, we construct a circuit that has two stages: (i) first it shifts all $x_{j} = 0$ to position $n+1$ of the vector, (ii) then for every two consecutive $x_{j}$'s it merges them if they have the same sign, thus resulting to a vector $x$ with coordinates of alternating sign and consecutive zeros at the rightmost positions. One should note that merging a pair of consecutive coordinates of $x$ that have the same sign, and transferring all coordinates of value zero at the rightmost position of $x$ while maintaining the order of the rest of the coordinates, does not affect the Consensus Halving solution. That is because by such operations, the positive and negative intervals remain the same; only cuts between two consecutive pieces or cuts that are more than one on the same position are transferred to position $1$ of the interval $[0,1]$. In other words, the positive and negative valuations $F_{i}(A_{+})$ and $F_{i}(A_{-})$ remain the same after such operations since the positive and negative intervals remain at the same positions on $[0,1]$. What we want is to bring the Consensus Halving solution into the form of a valid $(n,n)$-\CH solution, i.e. an $n$-cut whose pieces have alternating signs.
	
	For the implementation of checking whether a coordinate $x_{j}$ is zero and shifting it one position to the right, we make the following construction:
	\begin{align*}
		x_{j} &= \left(\{x_{j}\}_{>} + \{-x_{j}\}_{>}\right)*x_{j} + \left(1 - \{x_{j}\}_{>} - \{-x_{j}\}_{>}\right)*x_{j+1}, \\
		x_{j+1} &= \left(\{x_{j}\}_{>} + \{-x_{j}\}_{>}\right)*x_{j+1} + \left(1 - \{x_{j}\}_{>} - \{-x_{j}\}_{>}\right)*x_{j}.
	\end{align*}
	Therefore, to move a zero (if it exists) to the rightmost position of $x$ we need to implement the above two functions for every $j \in [n]$ in increasing order, meaning that after we implement the circuit for the pair $x_1$, $x_2$ we then implement $x_2$, $x_3$ and so on until $x_n$, $x_{n+1}$. To implement stage (i) we have to iterate this procedure $n$ times, since in the worst case there will be $n$ coordinates with value zero in $x$. Note that for stage (i) we need no more than $O(n^2)$ gates.
	
	Now in our vector, starting from the left, there are consecutive non-zero values and after them consecutive zero values. What remains to be done is to implement stage (ii), i.e. to merge pairs of consecutive coordinates with the same sign. To do that for a pair $x_{j}$, $x_{j+1}$, we make the following construction:
	\begin{align*}
		x_{j} &= x_{j} + \{x_{j} * x_{j+1}\}_{>}*x_{j+1}, \\
		x_{j+1} &= \left(1 - \{x_{j} * x_{j+1}\}_{>}\right)*x_{j+1}.
	\end{align*}
	Therefore, either there will be no change in the values of $x_j$ and $x_{j+1}$ if they are of opposite sign, or $x_{j}$ becomes $x_{j}+x_{j+1}$ and $x_{j+1}$ becomes zero. In the latter case we will have introduced a zero to the vector. That is why right after the above construction for some $j \in [n]$ we implement a shifting of the (possibly introduced) zero to the rightmost position of $x$ using the aforementioned procedure of stage (i). We do this for every $j \in [n]$ in an increasing order. Note that for stage (ii) we need no more than $O(n^2)$ gates.
	
	After implementing the aforementioned two stages, the resulting vector $x = (x_1, x_2, \dots, x_{n+1})$, starting from the left, has coordinates of alternating sign and at its rightmost positions it has zeros. Finally, we compute the $n$-cut $(t_1, t_2, \dots, t_n)$ in a straight-forward way using equation \eqref{eq:cut_points}, where we initialize $t_0 := 0$. Also, always $t_{n+1} = 1$ and it is discarded. Note that for the above constructions no more than $O(n^2)$ gates were needed in total. Since the pieces of these cuts have alternating sign, this is a valid solution to $(n,n)$-\CH and the proof is complete.   
	%
\end{proof}

Theorem \ref{thm:CHinBU} also implies that $\PPA \subseteq \linBU$, thereby completing
the proof that $\PPA = \linBU$. Specifically, Filos{-}Ratsikas and Goldberg have
shown that \emph{approximate}-$(n, n)$-\CH is PPA-complete, and their valuation functions
are piecewise constant \cite{FG18}. Therefore, the integrals of these functions are
piecewise linear, and so their approximate-$(n, n)$-\CH instances can be reduced to
$(n, n)$-\textsc{Linear} \CH. Hence $(n, n)$-\linCH is \PPA-hard, which along with
Lemma \ref{lem:linborsuk-in-ppa} implies the following corollary.

\begin{corollary}
	\label{cor:PPA-linBU}
	$\PPA = \linBU$.
\end{corollary}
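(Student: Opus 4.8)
The plan is to establish the two inclusions separately and then combine them. For $\linBU \subseteq \PPA$, I would simply combine Lemma~\ref{lem:linborsuk-in-ppa} with the definition of \linBU: every problem in \linBU reduces to \linborsuk in polynomial time under a reduction of the kind described in Section~\ref{sec:ar_cir_red}, and Lemma~\ref{lem:linborsuk-in-ppa} places \linborsuk itself in \PPA. Since all the instance- and solution-transformation maps involved are polynomial-time computable (the solution map, even with the discontinuous comparison gate $G_{>}$, still computes a polynomial-time function), composing such a reduction with the \PPA-membership reduction for \linborsuk yields a polynomial-time reduction of the original problem to a \PPA-complete problem; as \PPA is closed under polynomial-time reductions, we obtain $\linBU \subseteq \PPA$.

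For $\PPA \subseteq \linBU$, I would chain three reductions to show that an arbitrary \PPA problem reduces to \linborsuk. First, by Filos-Ratsikas and Goldberg~\cite{FG18}, every problem in \PPA reduces to \emph{approximate}-$(n,n)$-\CH, and crucially their reduction outputs agents with piecewise constant valuation functions. Second, the integral of a piecewise constant function is piecewise linear and hence computable by a linear arithmetic circuit~\cite{O02}, so these instances are valid inputs to $(n,n)$-\linCH; moreover a solution to $(n,n)$-\linCH on such an instance is guaranteed to exist by Simmons--Su~\cite{SS03} and, being an \emph{exact} halving ($F_i(A_+) = F_i(A_-)$ for all $i$), is in particular an $\eps$-approximate solution for every $\eps > 0$. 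Thus the solution-recovery map from $(n,n)$-\linCH back to \emph{approximate}-$(n,n)$-\CH is essentially the identity, which shows that $(n,n)$-\linCH is \PPA-hard. Third, Theorem~\ref{thm:CHinBU} gives $(n,n)$-\linCH $\in$ \linBU, i.e. a polynomial-time reduction of $(n,n)$-\linCH to \linborsuk. Composing, every \PPA problem reduces to \linborsuk, so $\PPA \subseteq \linBU$, and together with the first inclusion this yields $\PPA = \linBU$.

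The step requiring the most care is the middle one: passing from \PPA-hardness of the \emph{approximate} problem to \PPA-hardness of the \emph{exact} problem. In general, replacing an approximate search problem by its exact counterpart can destroy totality and thereby break a hardness reduction, but here no such issue arises, because Simmons--Su guarantees that the exact instance always has a solution and any exact solution is trivially an approximate one of the required accuracy. The only remaining point to check is that the reductions of Section~\ref{sec:ar_cir_red} compose cleanly and stay polynomial time, which is immediate since every map involved --- including those built from the $G_{>}$ gate --- is polynomial-time computable. I do not anticipate a genuine obstacle here; the content of the corollary lies entirely in Theorem~\ref{thm:CHinBU}, Lemma~\ref{lem:linborsuk-in-ppa}, and the \PPA-hardness of~\cite{FG18}, with this proof merely assembling them.
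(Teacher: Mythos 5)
Your proposal is correct and follows essentially the same route as the paper: the $\linBU \subseteq \PPA$ inclusion comes directly from Lemma~\ref{lem:linborsuk-in-ppa} plus the definition of \linBU, and the $\PPA \subseteq \linBU$ inclusion chains the \PPA-hardness of approximate-$(n,n)$-\CH with piecewise constant valuations from \cite{FG18}, the observation that these instances are $(n,n)$-\linCH instances (and exact solutions are in particular approximate ones), and the containment $(n,n)$-\linCH $\in \linBU$ from Theorem~\ref{thm:CHinBU}. Your explicit treatment of the approximate-to-exact hardness transfer and the polynomial-time computability of the $G_{>}$-augmented solution maps spells out details the paper leaves implicit, but the decomposition and the key ingredients are the same.
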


\subsection{$(n,k)$-\CH is in \etr}\label{sec:dch-in-etr}

The existential theory of the reals consists of all true existentially
quantified formulae using the connectives $\{\land, \lor, \lnot\}$ over
polynomials compared with the operators $\{<, \le, =, \ge, >\}$. The complexity
class \etr captures all problems that can be reduced in polynomial time to
the existential theory of the reals.


We prove that $(n,k)$-\CH is in $\etr$. The reduction simply
encodes the arithmetic circuits using ETR formulas, and then constrains $F_i(A_+) =
F_i(A_-)$ for every agent $i$. 

\begin{theorem}
	\label{thm:CHinETR}
	$(n,k)$-\CH is in \etr.
\end{theorem}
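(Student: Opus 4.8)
The plan is to exhibit an existential first-order formula over the reals, of size polynomial in the input, that is satisfiable precisely when the given $(n,k)$-\CH instance has a solution. First I would introduce $k$ real variables $t_1,\dots,t_k$ for the cut-points, together with the boundary conventions $t_0 = 0$ and $t_{k+1} = 1$, and impose the ordering constraints $0 \le t_1 \le t_2 \le \dots \le t_k \le 1$; these are polynomial (indeed linear) inequalities and so are directly expressible in \etr. The pieces are then $X_j = [t_{j-1}, t_j]$ for $j \in [k+1]$, and since we argued we may assume the signs alternate, piece $j$ belongs to $A_+$ iff $j$ is odd (or iff $j$ is even — we can simply disjoin over the two parities, or equivalently recall that solutions come in symmetric pairs, so fixing one parity suffices).

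The main technical step is encoding each arithmetic circuit $F_i$ as an \etr formula. For each gate of circuit $F_i$ I would introduce a fresh real variable representing that gate's output value, and write down one polynomial (in)equality per gate according to Table~\ref{tbl:gates}: a constant gate gives $x[v_{out}] = \zeta$; a $+$ gate gives $x[v_{out}] = x[v_{in1}] + x[v_{in2}]$; a $-$ gate likewise; a $*\zeta$ gate gives $x[v_{out}] = \zeta \cdot x[v_{in1}]$; a $*$ gate gives $x[v_{out}] = x[v_{in1}] \cdot x[v_{in2}]$; and the $\max$/$\min$ gates are expressed by the standard disjunctions, e.g. $\bigl(x[v_{out}] = x[v_{in1}] \land x[v_{in1}] \ge x[v_{in2}]\bigr) \lor \bigl(x[v_{out}] = x[v_{in2}] \land x[v_{in2}] \ge x[v_{in1}]\bigr)$ for $\max$, and symmetrically for $\min$. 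These are polynomial equalities/inequalities combined with the allowed connectives, so the conjunction over all gates of all $n$ circuits is a legitimate \etr formula of polynomial size. To evaluate $F_i$ at the points $t_{j-1}$ and $t_j$ I instantiate a separate copy of circuit $F_i$'s gate-variables with the appropriate input variable wired in (there are at most $2(k+1)$ evaluation points needed, and $n$ circuits, so only polynomially many copies).

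Finally I would add, for each agent $i$, the constraint $F_i(A_+) = F_i(A_-)$, which unfolds to $\sum_{j \text{ odd}} \bigl(F_i(t_j) - F_i(t_{j-1})\bigr) = \sum_{j \text{ even}} \bigl(F_i(t_j) - F_i(t_{j-1})\bigr)$, a polynomial equality over the gate-output variables already introduced. The full sentence is $\exists\, t_1,\dots,t_k,\ \exists\,(\text{gate variables}) : \Phi$, where $\Phi$ is the conjunction of the ordering constraints, the gate-encoding formulas, and the $n$ balance equations; it is true exactly when a valid $k$-cut exists. Since $\Phi$ has polynomial size and is built from polynomial comparisons and the Boolean connectives, this is a polynomial-time reduction to the existential theory of the reals, establishing $(n,k)$-\CH $\in \etr$.

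I expect no serious obstacle here; the only points requiring mild care are (i) handling the $\max$/$\min$ gates via disjunctions without blowing up the formula size — done gate-by-gate, this is clearly polynomial — and (ii) being explicit that we need only polynomially many instantiated copies of each circuit (one per distinct evaluation point $t_{j-1}$ or $t_j$), so that the total number of fresh variables and constraints stays polynomial. The parity choice for which pieces lie in $A_+$ is a non-issue because, as already noted, we may restrict to alternating-sign solutions and the two parities give symmetric solution sets, so either fixing one parity or taking a two-way disjunction works.
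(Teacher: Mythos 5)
Your proof is correct, and the core technical step — encoding each arithmetic circuit gate-by-gate as polynomial constraints with disjunctions for $\max$/$\min$, then conjoining the $n$ balance equations — is the same as the paper's. The one genuine difference is in how the $k$-cut is represented. The paper reuses the construction from Theorem~\ref{thm:CHinBU}: it quantifies over a point $x \in S^k$ (enforced by $\sum_{j=1}^{k+1}|x_j| = 1$), with the sign of $x_j$ encoding whether piece $j$ lies in $A_+$ or $A_-$, and the circuits computing $F_i(A_+)$ and $F_i(A_-)$ are exactly the ones built in the \BU-containment proof. You instead quantify directly over ordered cut-points $0 \le t_1 \le \dots \le t_k \le 1$, fix the alternating-sign parity up front, and write the balance equation as an explicit alternating sum of $F_i(t_j) - F_i(t_{j-1})$ terms. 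Your version is slightly more self-contained (it does not lean on the $S^k$ machinery or the $|x_j|$-gadget), while the paper's version gets the circuit for free by reuse; both yield a polynomial-size \etr formula that is satisfiable iff a valid $k$-cut exists. Your remark that only polynomially many instantiated copies of each circuit are needed (one per evaluation point) is the right thing to check and is indeed fine.
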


\begin{proof}
	
	The first step is to argue that an arithmetic circuit can be implemented as an
	ETR formula.
	Let $(V, \mathcal{T})$ be the arithmetic circuit. For every vertex $v \in V$ we
	introduce a new variable $x_v$. For every gate $g \in \mathcal{T}$ we introduce
	a constraint. For the gates in the set $\{G_{\zeta}, G_{+}, G_{-}, G_{*\zeta},
	G_{*}\}$ the constraints simply implement the gate directly, eg., for a gate
	$G_{+}(v_{\text{in1}}, v_{\text{in2}}, v_{\text{out}})$ we use the constraint
	$x[v_{\text{out}}] =
	x[v_{\text{in1}}] + x[v_{\text{in2}}]$. For a gate $G_{\max}(v_{\text{in1}},
	v_{\text{in2}}, v_{\text{out}})$ we
	use the formula
	\begin{equation*}
		\big((x[v_{\text{out}}] = x[v_{\text{in1}}]) \land (x[v_{\text{in1}}] \geq
		x[v_{\text{in2}}])\big) \lor
		\big((x[v_{\text{out}}] = x[v_{\text{in2}}]) \land (x[v_{\text{in2}}] \geq
		x[v_{\text{in1}}])\big),
	\end{equation*}
	and likewise for a gate $G_{\min}(v_{in1}, v_{in2}, v_{out})$ we use the formula
	\begin{equation*}
		\big((x[v_{\text{out}}] = x[v_{\text{in1}}]) \land (x[v_{\text{in1}}] \leq
		x[v_{\text{in2}}])\big) \lor
		\big((x[v_{\text{out}}] = x[v_{\text{in2}}]) \land (x[v_{\text{in2}}] \leq
		x[v_{\text{in1}}])\big). 
	\end{equation*}
	Taking the conjunction $C$ of the constraints for each of the gates yields an ETR
	formula that implements the circuit.
	
	Now we perform the reduction from Consensus Halving to the existential theory of
	the reals. Suppose that we have been given, for each agent $i$, an arithmetic circuit $F_i$
	implementing the integral of agent $i$'s valuation function. We have already
	shown in the proof of Theorem~\ref{thm:CHinBU} that, given a description of
	a $k$-cut given as a point in $S^k$, we can 
	create a circuit implementing $F_i(A_+)$ and a
	circuit implementing $F_i(A_-)$ for each agent $i$.
	We also argued in that
	proof that $\sum_{j = 1}^{k+1} |x_j|$ can be implemented as an arithmetic circuit.
	Our ETR formula is as follows.
	\begin{equation*}
		\exists x \cdot \left( \bigwedge\limits_{i=1}^n F_i(A_+) = F_i(A_-) \right) \land C \land 
		\left(\sum_{j = 1}^{k+1} |x_j| = 1 \right).
	\end{equation*}
	The first set of constraints ensure that $x$ is a solution to the Consensus Halving problem, the second one implements as showed above the $\max$ and $\min$ operations that are not allowed in an ETR formula, and the final constraint ensures that $x \in S^n$.     
\end{proof}

Using the same technique, we can also reduce $\borsuk$ to an \etr formula. In
this case, we get an ETR formula that always has a solution. Let us define here the class \fetr (\texttt{Function \etr}) which contains all search problems whose corresponding decision version lies in \etr. We also define the class \tfetr (\texttt{Total Function \etr}) as the subclass of \fetr which contains the search problems whose decision version outputs always ``yes''. As \etr is the analogue of \NP, \fetr and \tfetr are the analogues of \FNP and \TFNP respectively in the Blum-Shub-Smale computation model \cite{BSS89}. 

\begin{theorem}
	\label{thm:butfetr}
	$\BU \subseteq \tfetr$.
\end{theorem}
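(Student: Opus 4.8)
The plan is to reduce an arbitrary problem in $\BU$ to an instance of the existential theory of the reals whose decision version is always a ``yes'' instance, thereby placing it in $\tfetr$. Fix a problem $P \in \BU$. By Definition~\ref{def:BU}, there is a polynomial-time reduction from $P$ to \borsuk: an instance of $P$ maps to an arithmetic circuit computing some $f : \reals^{d+1} \to \reals^d$, and a solution $x \in S^d$ with $f(x) = f(-x)$ maps back (via the function $g$ of Section~\ref{sec:ar_cir_red}) to a solution of $P$. So it suffices to show that \borsuk itself, viewed as a search problem, lies in $\tfetr$, and that the composition with $g$ preserves membership.

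First I would encode the arithmetic circuit for $f$ as an ETR formula exactly as in the proof of Theorem~\ref{thm:CHinETR}: introduce a variable $x_v$ for each node $v$, a constraint for each gate (the $\max$ and $\min$ gates being handled by the disjunctive formulas already written down there), and take the conjunction $C$ of all these constraints. This encodes both a copy of the circuit evaluated at $x$ and a copy evaluated at $-x$ (just feed the negated inputs into a second instance of $C$). Then the \borsuk search problem is captured by
\begin{equation*}
	\exists x \cdot C(x) \land C(-x) \land \left( \bigwedge_{i=1}^{d} f(x)_i = f(-x)_i \right) \land \left( \sum_{j=1}^{d+1} |x_j| = 1 \right),
\end{equation*}
where $|x_j|$ is again written as $\max(x_j,0)+\max(-x_j,0)$ and handled by auxiliary variables and disjunctions as in Theorem~\ref{thm:CHinETR}. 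The crucial point is that the Borsuk--Ulam theorem (Theorem~\ref{thm:borsuk-ulam}) guarantees this formula is \emph{always} true, so the associated decision problem is constant-``yes'', which is exactly what membership in $\tfetr$ requires; and the formula is constructed in polynomial time from the circuit, hence from the original $P$-instance.

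The remaining point is that a solution to this ETR formula must be converted back to a solution of $P$, and here one must be careful about what $\tfetr$ (as the BSS-model analogue of $\TFNP$) permits: the solution extracted from the ETR instance is a real point $x \in S^d$, and we then apply the polynomial-time back-map $g$ of the $\BU$-reduction. As discussed in Section~\ref{sec:ar_cir_red}, $g$ is itself realizable by a polynomial-size arithmetic circuit (augmented with comparison gates $G_{>}$), so this post-processing is within the power allowed for reductions between these real-valued search classes. Composing the polynomial-time construction of the ETR formula with the polynomial-time map $g$ gives the desired reduction $P \leq \tfetr$. I expect the main obstacle to be purely definitional rather than mathematical: pinning down precisely what a ``reduction to $\tfetr$'' is allowed to do to the real-valued witness (in particular, checking that applying $g$ via an arithmetic-plus-comparison circuit is admissible, and that the discontinuity of $G_{>}$ causes no issue since it is used only on the back-map, mirroring the remark in Section~\ref{sec:ar_cir_red}), and in verifying that the totality of the ETR formula follows cleanly from Theorem~\ref{thm:borsuk-ulam}. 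The circuit-to-ETR encoding and the $S^d$-membership constraint are routine, having already been carried out in Theorems~\ref{thm:CHinETR} and~\ref{thm:CHinBU}.
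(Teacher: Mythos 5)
Your proposal is correct and follows essentially the same route as the paper: encode the arithmetic circuit for $f$ as an ETR formula exactly as in Theorem~\ref{thm:CHinETR}, conjoin the equality constraints $f_i(x)=f_i(-x)$ and the sphere constraint $\sum_j |x_j|=1$, and invoke Theorem~\ref{thm:borsuk-ulam} for totality. The extra care you take about the two copies of the circuit and about the admissibility of the back-map $g$ (including the $G_{>}$ gates) is sound but is treated as implicit in the paper's terser argument.
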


\begin{proof}
	The proof is essentially identical to the proof of Theorem~\ref{thm:CHinETR},
	and the only difference is that instead of starting with a Consensus Halving
	instance, we start with an arbitrary arithmetic circuit representing the
	function $f : S^d \to \reals^d$, for which we wish to find a point $x$
	satisfying $f(x) = f(-x)$. We implement the arithmetic circuit in the same
	way as in Theorem~\ref{thm:CHinETR}, and our ETR formula is:
	\begin{equation*}
		\exists x \cdot \left( \bigwedge\limits_{i=1}^d f_i(x) = f_i(-x) \right) \land C \land 
		\left(\sum_{j = 1}^{d+1} |x_j| = 1 \right),
	\end{equation*}
	where $C$ is the conjunction of the constraints that implement $\max$ and $\min$ gates.
\end{proof}

\section{Hardness Results for \CH}\label{sec:hardness_res}

In this section we give an overview of our hardness results for Consensus Halving. 
Full proofs will be given in subsequent sections. We prove that $(n,n)$-\CH is 
\FIXP-hard and that $(n,n-1)$-\CH is \etr-hard. These two reductions share a
common step of embedding an arithmetic circuit into a Consensus Halving
instance. So we first describe this step, and then move on to proving the two
individual hardness results.
An outline of the embedding step is described in Section \ref{subsec: cir-to-ch}, which concludes to Lemma \ref{lem:circuit-embed}. The detailed proof of that lemma is presented in Section \ref{app:circuit-embed}. 

Then, in Section \ref{subsec:FIXP-hard} we present a polynomial-time reduction from the \FIXP-complete problem of computing a Nash equilibrium in a $d$-player strategic form game to the problem of computing a $(n,n)$-\CH solution. Finally, in Section \ref{sec:ETR}, after proving \etr-completeness for an auxiliary problem, we reduce from it to the $(n,n-1)$-\CH problem. The implied \etr-hardness of the latter problem, together with its \etr-membership by Theorem \ref{thm:CHinETR} proves the required \etr-completeness.

\subsection{Embedding a circuit in a \CH instance: an outline}\label{subsec: cir-to-ch}

Our approach is inspired by~\cite{FFGZ18}, who provided a reduction from
$\eps$-\gcircuit~\cite{chen2009settling,Rub18} to approximate Consensus Halving. 
However, our construction deviates significantly from theirs due to several reasons.

Firstly, the reduction in~\cite{FFGZ18} works \emph{only} for approximate
Consensus Halving. Specifically, some valuations used in that construction have
the form of $1/\eps$, where $\eps$ is the approximation guarantee, so the 
construction is not well-defined when $\eps=0$ as
it is in our case. Many of the gate gadgets used in~\cite{FFGZ18} cannot be used
due to this issue, including the $\max$ gate, which is crucially used in that
construction to ensure that intermediate values do not get too large. We provide
our own implementations of the broken gates. Our gate gadgets only work when the
inputs and outputs lie in the range $[0, 1]$, and so we must carefully construct
circuits for which this is always the case. The second major difference is that
the reduction in~\cite{FFGZ18} does not provide any method of multiplying two
variables, which is needed in our case. We construct a gadget to do this, based
on a more primitive gadget for squaring a single variable.


\subsubsection{Special circuit}\label{sec:special_cir}

Our reduction from an arithmetic circuit to Consensus Halving will use a very
particular subset of gates. Specifically, we will not use $G_{\text{min}}$,
$G_{\text{max}}$, or $G_{*}$, and we will restrict $G_{*\zeta}$ so that
$\zeta$ must lie in $(0, 1]$. We do however introduce three new gates, shown in 
Table \ref{tbl:special_gates}. The gate $G_{()^2}$ squares its input, the gate
$G^{[0, 1]}_{*2}$ multiplies its input by two, but requires that the input be in
$[0, 1/2]$, and the gate $G^{[0, 1]}_{-}$ is a special minus gate that takes as inputs $a,b \in [0,1]$ and outputs $\max \{a-b , 0\}$.
\begin{table}
	\begin{center}
		\begin{tabular}{|l|l|l|}
			\hline
			\textbf{Special Gate}                                & \textbf{Constraint}    & \textbf{Ranges} \\ \hline
			\quad $G_{()^2}(v_{in}, v_{out})$ \quad  &  \quad $x[v_{out}] = (x[v_{in}])^2$ \quad &  \quad $x[v_{in}] \in [0,1]$ \quad           \\ \hline
			\quad $G_{*2}^{[0,1]}(v_{in}, v_{out})$ \quad  &  \quad $x[v_{out}] = x[v_{in}] \cdot 2 $ \quad  &  \quad $x[v_{in}] \in [0,1/2]$ \quad            \\ \hline
			\quad $G_{-}^{[0,1]}(v_{in1}, v_{in2}, v_{out})$ \quad  &   \quad $x[v_{out}] = \max \{ x[v_{in1}] - x[v_{in2}] , 0\} $ \quad  &  \quad $x[v_{in1}], x[v_{in2}] \in [0,1]$ \quad          \\ \hline
		\end{tabular}
	\end{center}
	\caption{The special types of gates, their constraints and ranges of input.}
	\label{tbl:special_gates}
\end{table}

We note that $G_{\text{min}}$, $G_{\text{max}}$, and $G_{*}$ can be implemented
in terms of our new gates according to the following identities.
\begin{align*}
	\max\{a, b\} &= \frac{a+b}{2} + \frac{|a-b|}{2} = \frac{a}{2} + \frac{b}{2}  + \frac{1}{2} 
	\max\{a - b, 0\} + \frac{1}{2}  \max\{b - a, 0\}, \\
	\min\{a, b\} &= \frac{a+b}{2} - \frac{|a-b|}{2} = \frac{a}{2} + \frac{b}{2}  - \frac{1}{2} 
	\max\{a - b, 0\} - \frac{1}{2}  \max\{b - a, 0\} , \\
	a\cdot b &= 2  \left[ \left( \frac{a}{2}   +  \frac{b}{2}  \right)^2 - \left( \left( \frac{a}{2}   \right)^2 + \left( \frac{b}{2}  \right)^2 \right) \right].
\end{align*}

Also, a very important requirement of the special circuit is that both inputs of any $G_{+}$ gate are in $[0,1/2]$. To make sure of that, we downscale the inputs before reaching the gate, and upscale the output, using the fact that $a+b = (a/2 + b/2) \cdot 2$.

\subsubsection{The reduction to \CH}

\begin{figure}
	\begin{center}
		\includegraphics[scale=0.45]{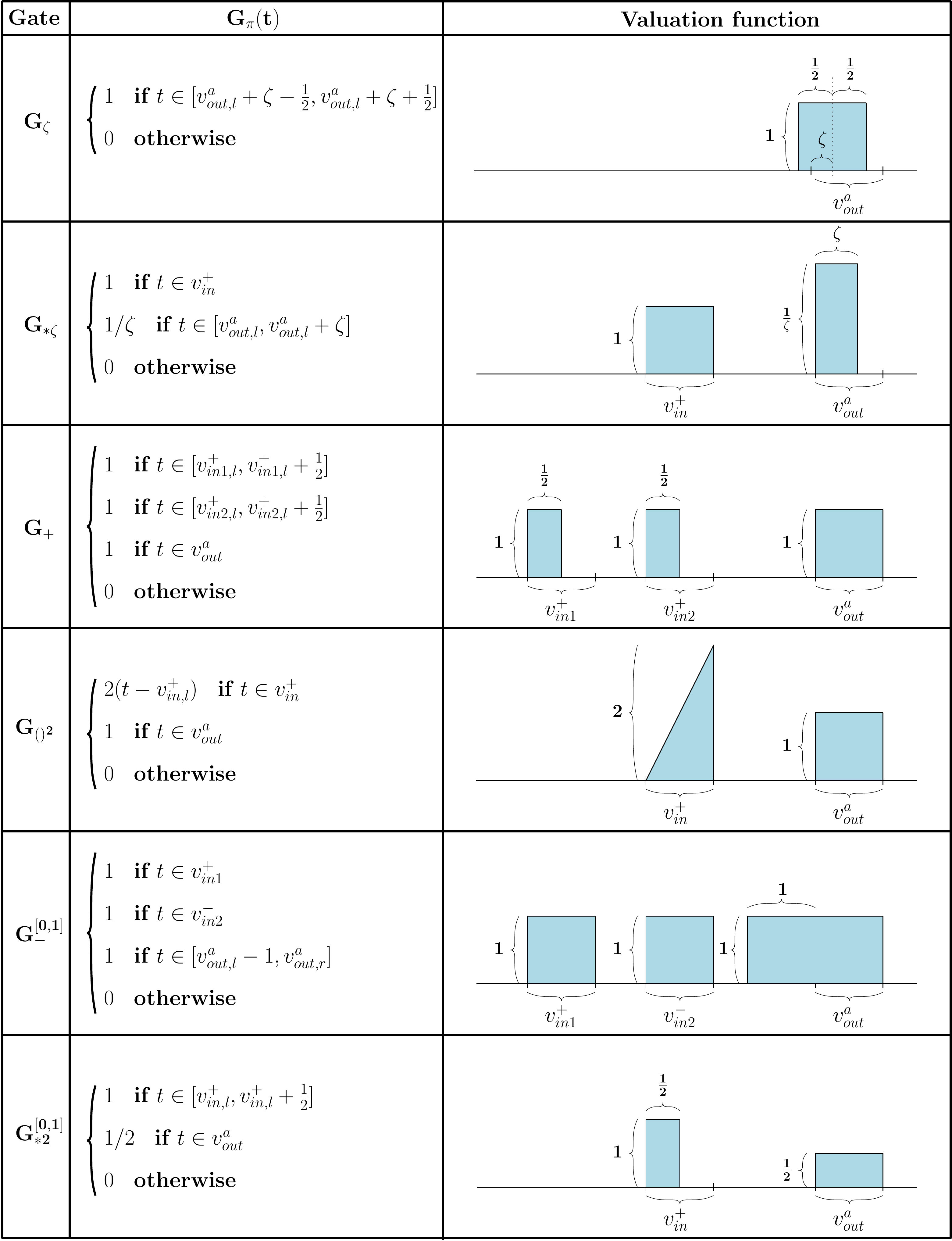}
	\end{center}
	\caption{Gates and their corresponding functions $G_{\pi}(t)$.}\label{gates-table}
\end{figure}

The reduction follows the general outline of the reduction given
in~\cite{FFGZ18}. The construction is quite involved, and so we focus on the
high-level picture here.

Each gate is implemented by 4 agents, namely $ad, mid, cen, ex$ in the Consensus 
Halving instance. The values computed by the gates are encoded by the
positions of the cuts that are required in order to satisfy these agents.
Agent $ad$ performs the exact mathematical operation of the gate, and feeds the
outcome in $mid$, who ``trims'' it in accordance with the gate's actual
operation. Then $mid$ feeds her outcome to $cen$ and $ex$, who make a copy
of $mid$'s correct value of the gate, with ``negative'' and ``positive'' labels
respectively. This value with the appropriate label will be input to other
gates.

The most important agents are the ones that perform the mathematical operation of each gate, i.e. agents $ad$.
Figure \ref{gates-table} shows the part of the valuation functions of these
agents that perform the operation. Each
figure shows a valuation function for one of the agents, meaning that the blue
regions represent portions of the object that the agent desires. The agent's
valuation for any particular interval is the integral of this function
over that interval. 

To understand the high-level picture of the construction, let us look at the
construction for $G_{*\zeta}$. The precise valuation functions of the agents in the construction (see \eqref{val-functions}) ensure that
there is exactly one \emph{input} cut in the region $v_{in}^{+}$. The leftmost
piece due to that cut in that region will belong to $A_+$, while the rightmost will belong to
$A_-$. It is also ensured that there is exactly one \emph{output} cut in the
region $v_{out}^{a}$, and that  the first piece in that region will belong to
$A_-$ and the second will belong to $A_+$. 

Suppose that gate $g_i$ in the circuit is of type $G_{*\zeta}$ and we want to implement it through a \CH instance. If we treat $v_{in}^{+}$ and $v_{out}^{a}$ in Figure \ref{gates-table} as representing $[0, 1]$, then agent $ad_i$ will take as input a cut at point $x \in v_{in}^{+}$. In order to be satisfied, $ad_i$ will impose a cut at point $y \in v_{out}^{a}$, such that $F_i(A_+) = F_i(A_-)$, where:
%
$F_i(A_+) = x + (\zeta - y)/\zeta$ and 
$F_i(A_-) = (1 - x) + y/\zeta$.
Simple algebraic manipulation can be used to show that $ad_i$ is satisfied only when $y = \zeta\cdot x$, as required.

We show that the same property holds for each of the gates in
Figure~\ref{gates-table}. Two notable constructions are for the gates $G_{()^2}$
and $G^{[0, 1]}_{-}$. For the gate $G_{()^2}$ the valuation function of agent $ad$ is
non-constant, which is needed to implement the non-linear squaring function.
For the gate $G^{[0, 1]}_{-}$, note that the output region $v_{out}^{a}$ only covers half of the
possible output space. The idea is that if the result of $x[v_{in1}] - 
x[v_{in2}]$ is negative, then the output cut will lie before the output region, which will be interpreted as a zero output by agents $mid, cen, ex$ in the
construction. On the other hand, if the result is positive, the result will lie
in the usual output range, and will be interpreted as a positive number. An example where $x[v_{in1}] = 1/4$ and $x[v_{in2}] = 3/4$ is shown in Figure \ref{example_minus}.

\begin{figure}
	\begin{center}
		\includegraphics[scale=0.82]{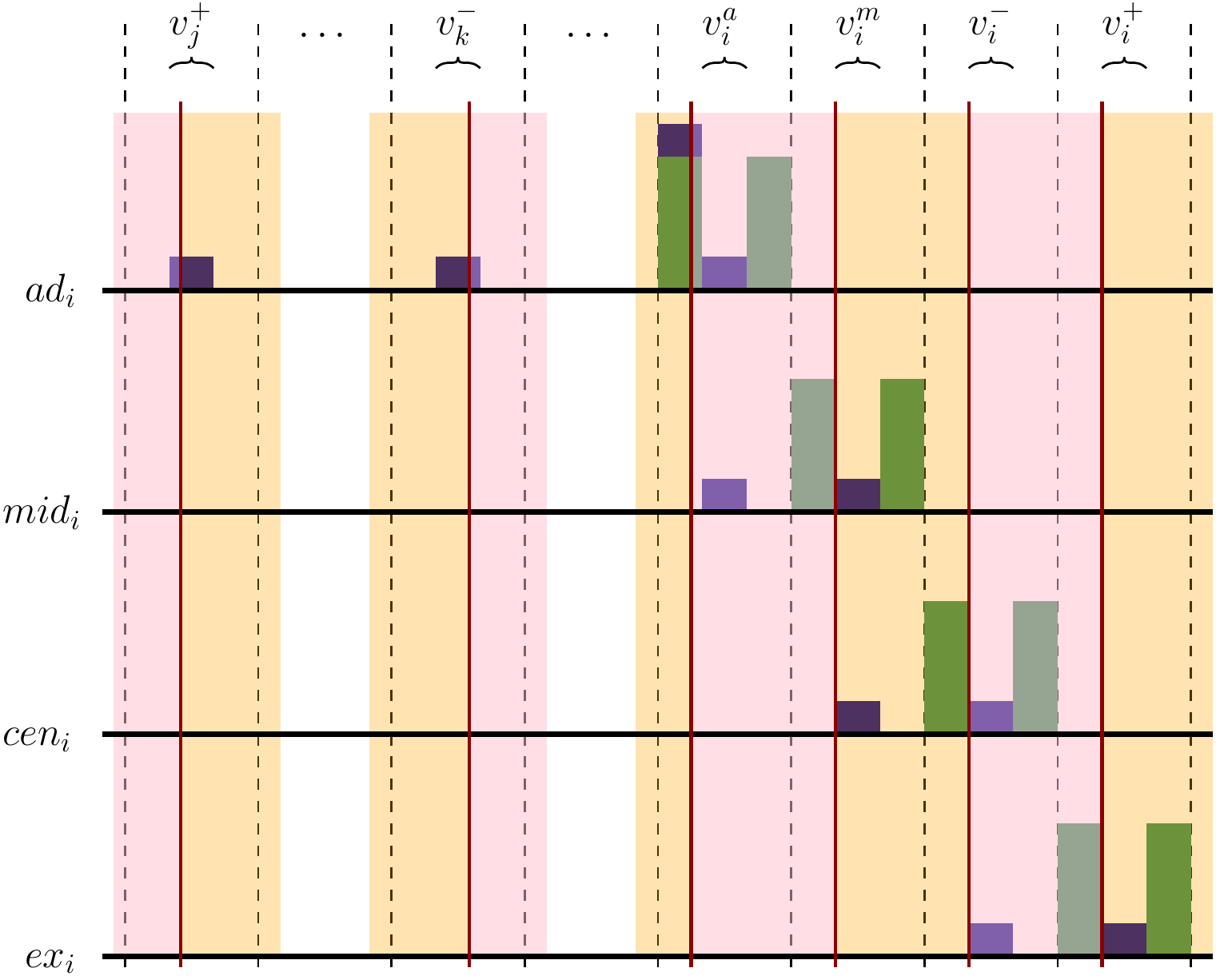}
	\end{center}
	\caption[Example: computation of gate $G_{-}^{[0,1]}$ by a \CH instance.]{An example where the computation at the output $v_{out}:=v_{i}$ of
		a $G_{-}^{[0,1]}$ gate with inputs $v_{in1}:=v_j$ and $v_{in2}:=v_k$ is
		simulated by the \CH instance. Here $x[v_{j}] = 1/4$ and $x[v_{k}] = 3/4$, hence
		$x[v_{i}] = 0$. The information about the values of the inputs is encoded by the
		cuts (red lines) in intervals $v_{j}^{+}$, and $v_{k}^{-}$ imposed by agents
		$ex_j$ and $cen_{k}$ respectively. The blue and green shapes depict the area
		below the valuation function of each of the 4 agents. The pink regions have
		label ``$+$'' while the yellow have label ``$-$''. Agent $ad_i$ performs the
		subtraction, by demanding that she is satisfied, and places a cut $1/10$ to the
		left of the left endpoint of interval $v_{i}^a$. Then agent $mid_i$ gets
		satisfied by placing a cut at exactly the left endpoint of interval $v_{i}^m$,
		thus encoding the value 0 which is the correct output value of the gate.
		Finally, agents $cen_i, ex_i$ copy this value by enforcing similar cuts at the
		left endpoints of intervals $v_{i}^{-}$ and $v_{i}^{+}$ respectively. The
		encoded values in the latter two intervals are the ``negative'' and ``positive''
		version of $x[v_{i}]$.}\label{example_minus}
\end{figure}

Ultimately, this allows us to construct
a Consensus Halving instance that implements this circuit.
This means that for any 
$x \in [0, 1]^n$, we can encode $x$ as a set of cuts, which then force 
cuts to be made at each gate gadget that encode the correct output for that
gate.
\begin{lemma}
	\label{lem:circuit-embed}
	Suppose that we are given an arithmetic circuit with the following properties.
	\begin{itemize}
		\item The circuit uses the gates $G_{\zeta}, G_{+}, G_{*\zeta}, G_{()^2}, G_{-}^{[0,1]}, G_{*2}^{[0,1]}$.
		\item Every $G_{\zeta}$ and $G_{* \zeta}$ has $\zeta \in \mathbb{Q} \cap (0, 1]$.
		\item For every input $x \in [0, 1]^n$, all intermediate values computed by the
		circuit lie in $[0, 1]$.
	\end{itemize}
	We can construct a Consensus Halving instance that implements this circuit.
\end{lemma}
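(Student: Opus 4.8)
The plan is to turn the high-level picture sketched above into a concrete construction together with a matching correctness proof. First I would fix a layout of the object $A=[0,1]$: subdivide it into consecutive \emph{blocks}, one block per gate of the circuit plus one block per input wire, and inside each gate block reserve the sub-intervals that appear in Figure~\ref{gates-table} — an input region (with sub-regions $v_{in1}, v_{in2}$ where needed), the ``arithmetic'' region $v^a$ used by agent $ad$, the ``trimmed'' region $v^m$ used by $mid$, and the two copy regions $v^-$ (used by $cen$) and $v^+$ (used by $ex$) — separated by small ``buffer'' intervals that no agent values. A real number $a\in[0,1]$ carried on a wire is encoded as the position of a single cut inside the corresponding interval, at relative offset $a$ (offset $0$ meaning ``value $0$'', offset $1$ meaning ``value $1$''); the convention on which side of that cut is labelled $+$ versus $-$ distinguishes a ``positive copy'' (region $v^+$, produced by $ex$) from a ``negative copy'' (region $v^-$, produced by $cen$), and this is what lets a value be fed into subsequent gates in the orientation they expect.

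Next I would define, for each of the six admissible gate types, the four agents $ad, mid, cen, ex$ by writing down their valuation densities explicitly — these are precisely the blue regions of Figure~\ref{gates-table} — and then prove the key local claim: in any Consensus Halving solution with alternating signs, if the input region(s) of a gate gadget carry cuts encoding values $a$ (and $b$), then agent $ad$ is satisfied exactly when the cut in $v^a$ encodes the raw arithmetic result, agent $mid$ is satisfied exactly when the cut in $v^m$ encodes the \emph{trimmed} result (the $\max\{\cdot,0\}$ clamp for $G_-^{[0,1]}$, and the identity for the gates whose raw output already lies in the intended range), and agents $cen, ex$ are satisfied exactly when $v^-$ and $v^+$ carry cuts encoding that same trimmed value. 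For the affine gates $G_\zeta, G_{*\zeta}, G_+, G_{*2}^{[0,1]}, G_-^{[0,1]}$ the densities are piecewise constant and the satisfaction condition $F_i(A_+)=F_i(A_-)$ reduces to a linear equation whose unique solution is the output offset, exactly as in the worked example for $G_{*\zeta}$; for $G_{()^2}$ the density of $ad$ is instead a linear ramp, so the valuation accumulated up to offset $y$ is proportional to $y^2$, which is what forces $y$ to encode $a^2$. The $G_-^{[0,1]}$ gadget is handled as in the excerpt: the output region covers only the upper half of the offset axis, so a negative raw difference pushes the $ad$-cut into the buffer preceding $v^a$, which $mid$ then reads as $0$.

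Then I would assemble the gadgets along the circuit DAG. Each gate's copy regions $v^+$ and $v^-$ are wired, through buffer regions, into the input region of every gate using that wire; if a wire fans out to $k$ gates I replicate the $cen/ex$ copying sub-gadget $k$ times, adding only $O(k)$ extra agents, so the total agent count and description size remain polynomial in the circuit size. The input wires $x_1,\dots,x_n$ receive their own blocks, each containing one free cut. Finally I would establish both directions of ``implements the circuit'': (i) from any $x\in[0,1]^n$, placing the cut of input block $j$ at offset $x_j$ and then, processing gates in topological order, placing every gadget's cuts at the offsets dictated by the local claim, yields a valid alternating-sign Consensus Halving solution with at most one cut per agent; (ii) conversely, in any Consensus Halving solution the local claim, applied in topological order, forces every gadget into these canonical positions, so the encoded wire values are exactly the circuit's evaluation on the encoded input.

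The step I expect to be the main obstacle is the global forcing argument in direction (ii): one must rule out parasitic solutions in which cuts drift out of their designated sub-intervals, or in which the alternating-sign convention interacts badly across buffers, so that the induction over the DAG is airtight. Making this work requires choosing the buffer widths and the normalisation of each agent's total valuation so that an agent can be satisfied \emph{only} by a cut in its intended interval and in its intended configuration; getting these quantitative choices consistent — and in particular keeping every intermediate encoded value inside $[0,1]$, which is exactly where the hypothesis on the circuit is used — is the delicate part, while the per-gate algebra is routine once the layout is fixed.
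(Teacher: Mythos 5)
Your proposal follows essentially the same plan as the paper's proof: one block of $A$ per circuit node, four agents $ad,mid,cen,ex$ per node with the roles you describe, values carried as cut offsets inside designated sub-intervals, piecewise-constant densities for the affine gates and a linear ramp for $G_{()^2}$, the half-width output region for $G_-^{[0,1]}$, and a final forcing argument that exactly one cut must land in each agent's designated interval (the paper achieves this by placing more than half of every agent's mass in a fixed length-3 window of her block, so with $n$ agents and $\le n$ cuts the pigeonhole forces one cut per window). Your direction (i) is the paper's ``circuit values to cuts'' step and your direction (ii) is its ``cuts to circuit values'' step, and you have correctly located where the $[0,1]$ hypothesis on intermediate values is needed.

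The one architectural point where you deviate is fan-out. You propose giving each gate its own local input sub-regions inside its own block and then physically copying a source node's output into each consumer's block by replicating the $cen/ex$ gadget, paying $O(k)$ extra agents per wire of fan-out $k$. The paper avoids this extra indirection: the $ad_i$ agent's density function is allowed to reach outside gate $i$'s block and place mass directly in the source node's $v_j^{+}$ or $v_j^{-}$ interval, so any number of downstream gates can ``read'' the same cut without extra agents. Both designs yield a polynomial-size instance, but the paper's is tighter (exactly $4r$ agents for $r$ nodes, independent of fan-out) and avoids the extra bookkeeping of keeping replicated copies in sync. If you keep your local-input-region variant, you must also argue that the extra copying chains preserve the ``more than half the mass in a private interval'' forcing property so that the cut count still comes out exactly $n$, which is an added obligation the paper's layout sidesteps. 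Otherwise the proposal is sound and matches the paper's proof strategy.
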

The proof of this lemma is presented in Section \ref{app:circuit-embed}.

\subsection{$(n,n)$-\textsc{Consensus Halving} is \FIXP-hard}
\label{subsec:FIXP-hard}

We show that $(n,n)$-\CH is \FIXP-hard by reducing from the problem of finding a
Nash equilibrium in a $d$-player game, which is known to be \FIXP-complete \cite{EY10}. As shown in~\cite{EY10}, this problem can be reduced to the
Brouwer fixed point problem: given an arithmetic circuit computing a function $F
: [0, 1]^n \to [0, 1]^n$, find a point $x \in [0, 1]^n$ such that $F(x) = x$. In
a similar way to \cite{FFGZ18}, we take this circuit and embed it into a
Consensus Halving instance, with the outputs looped back to the inputs. Since
Lemma~\ref{lem:circuit-embed} implies that our implementation of the circuit is correct, this means that any solution to
the Consensus Halving problem must encode a point $x$ satisfying $F(x) = x$.

One difficulty is that we must ensure that the arithmetic circuit that
we build falls into the class permitted by Lemma~\ref{lem:circuit-embed}. To do
this, we carefully analyse the circuits produced in \cite{EY10}, and we modify
them so that all of the preconditions of Lemma~\ref{lem:circuit-embed} hold.
This gives us the following result.

\begin{theorem}\label{thm:CH_is_FIXP-hard}
	$(n,n)$-\CH is \FIXP-hard. 
\end{theorem}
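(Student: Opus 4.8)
The plan is to reduce from the canonical \FIXP-complete problem: given an arithmetic circuit computing $F : [0,1]^n \to [0,1]^n$ (arising from a $d$-player Nash equilibrium instance via the reduction in~\cite{EY10}), find a fixed point $x$ with $F(x) = x$. The overall strategy mirrors~\cite{FFGZ18}: embed the circuit for $F$ into a Consensus Halving instance using Lemma~\ref{lem:circuit-embed}, and then ``loop back'' the $n$ output wires of the embedded circuit to its $n$ input wires, so that any \CH solution is forced to encode a point with $F(x) = x$. Concretely, for each coordinate $i$, the cut encoding the $i$th input is identified with (or constrained by an additional equality-enforcing agent to equal) the cut encoding the $i$th output; since Lemma~\ref{lem:circuit-embed} guarantees the embedded gadgets compute $F$ faithfully whenever all intermediate values stay in $[0,1]$, the only consistent cut configurations are those encoding genuine fixed points. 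Counting agents: the circuit has polynomially many gates, each gate costs $4$ agents, plus $O(1)$ agents per wire for the feedback loop, giving some number $N$ of agents; one then pads with dummy agents (with, say, uniform valuations, which are trivially satisfied by any balanced cut) so that the total number of agents equals the total number of cuts, yielding a bona fide $(N,N)$-\CH instance. Finally, from an $(N,N)$-\CH solution one reads off the cuts on the input wires, decodes them to a point $x \in [0,1]^n$, and argues $F(x) = x$; this decoding is clearly polynomial-time and of the form permitted in Section~\ref{sec:ar_cir_red}.

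The main obstacle — and the reason the theorem needs real work rather than a one-line appeal to Lemma~\ref{lem:circuit-embed} — is the hypothesis of that lemma: it only accepts circuits built from the restricted gate set $\{G_\zeta, G_+, G_{*\zeta}, G_{()^2}, G_{-}^{[0,1]}, G_{*2}^{[0,1]}\}$ with $\zeta \in \mathbb{Q}\cap(0,1]$ and, crucially, with \emph{every intermediate value staying in $[0,1]$}. The circuits produced by Etessami and Yannakakis for Nash equilibria do not obey this: they use general $+$, $-$, $\max$, $\min$, $*$, and $*\zeta$ gates with arbitrary rational constants, and intermediate values routinely exceed $[0,1]$ (e.g.\ sums of several probabilities, or products scaled by large payoff magnitudes). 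So the heart of the proof is a careful surgery on the EY circuit: rescale the whole computation so that all quantities of interest live in $[0,1]$ (dividing payoffs by an upper bound, representing each intermediate sum in ``halved'' form $a/2$ so that $a+b$ is computed as $(a/2 + b/2)\cdot 2$ via a $G_{*2}^{[0,1]}$ gate only when the half-sum is provably $\le 1/2$), replace $\max$, $\min$, $*$ and general $-$ by their simulations in terms of $G_{()^2}$, $G_{-}^{[0,1]}$, and halving/doubling as spelled out in Section~\ref{sec:special_cir}, and split any $G_{*\zeta}$ with $\zeta > 1$ into a product of several $*\zeta$-gates with $\zeta \in (0,1]$ composed with doublings. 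One must track, gate by gate, that the rescaled values never leave $[0,1]$ and that the $G_{*2}^{[0,1]}$ and $G_{-}^{[0,1]}$ preconditions are always met; this is a somewhat delicate but routine bookkeeping argument, and it is exactly the content deferred to Section~\ref{app:CH_is_FIXP-hard}.

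A secondary point to handle carefully is that the rescaling must preserve fixed points: if $\phi:[0,1]^n\to[0,1]^n$ is the rescaled (and reshaped-domain) map, one needs a polynomial-time affine bijection between fixed points of $\phi$ and fixed points of the original $F$ (equivalently, Nash equilibria of the game), so that decoding the \CH solution really recovers a Nash equilibrium. Since EY already normalise Nash instances to Brouwer maps on $[0,1]^n$, the additional rescaling is a coordinatewise affine change of variables and this correspondence is immediate. Putting these pieces together — (i) reshape the EY circuit to satisfy Lemma~\ref{lem:circuit-embed}, (ii) embed it via the lemma, (iii) close the feedback loop with equality agents, (iv) pad to make $\#\text{agents} = \#\text{cuts}$, (v) decode — establishes that $(n,n)$-\CH is \FIXP-hard.
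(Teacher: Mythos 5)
Your plan follows essentially the same route as the paper's proof: reduce from the Etessami--Yannakakis $d$-player Nash equilibrium problem, surgically reshape the EY circuit so that it uses only the special gate set with all intermediate values in $[0,1]$, close the feedback loop, embed via Lemma~\ref{lem:circuit-embed}, and decode the input cuts to a fixed point. Two small points where you diverge from (and are slightly weaker than) the paper. First, the paper closes the loop \emph{at the circuit level}, merging each output node of $C_I'$ with the corresponding input node to get a cyclic circuit $C_I^o$ before invoking the embedding; this way the agent count is exactly $4r$ and automatically equals the cut count, so neither equality-enforcing agents nor padding is ever needed. Second, your padding suggestion with ``uniform valuations, which are trivially satisfied by any balanced cut'' is wrong as stated -- a uniform-valuation agent is satisfied only when $|A_+| = |A_-| = 1/2$, which is not guaranteed by an arbitrary cut set; a zero-valuation agent would work, but it is cleaner to avoid the issue entirely as the paper does. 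A third minor difference: you frame the rescaling as a coordinatewise affine change of variables on the domain requiring a fixed-point correspondence, whereas the paper's $C_I'$ keeps exactly the same input/output interface as $C_I$ (Lemma~\ref{lem:C'=C}) and rescales only internal wires, so no such correspondence argument is needed. None of this changes the substance; the hard work you correctly flag -- the gate-by-gate bookkeeping to keep all values in $[0,1]$ -- is precisely what occupies Section~\ref{app:CH_is_FIXP-hard}.
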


The proof of this theorem is presented in Section \ref{app:CH_is_FIXP-hard}.
Theorem \ref{thm:CH_is_FIXP-hard}, together with Theorem \ref{thm:CHinBU} give the following corollary.

\begin{corollary}
	\FIXP $\subseteq$ \BU.
\end{corollary}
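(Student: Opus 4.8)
The statement to prove is the final corollary: $\FIXP \subseteq \BU$. This is an immediate consequence of the two results cited just before it, so the proof plan is essentially a one-line composition of two reductions.

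\medskip

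The plan is to combine Theorem~\ref{thm:CH_is_FIXP-hard}, which states that $(n,n)$-\CH is \FIXP-hard, with the first bullet of Theorem~\ref{thm:CHinBU}, which states that $(n,n)$-\CH is in \BU. First I would take an arbitrary search problem $P \in \FIXP$. By \FIXP-hardness of $(n,n)$-\CH (Theorem~\ref{thm:CH_is_FIXP-hard}), there is a polynomial-time reduction of the type described in Section~\ref{sec:ar_cir_red} from $P$ to $(n,n)$-\CH: a polynomial-time instance map $f_1$ and a polynomial-time solution map $g_1$ (the latter allowed to be a polynomial-size arithmetic circuit possibly augmented with comparison gates $G_{>}$). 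By Theorem~\ref{thm:CHinBU}, there is a similar reduction $(f_2, g_2)$ from $(n,n)$-\CH to \borsuk. Composing, $f_2 \circ f_1$ maps instances of $P$ to instances of \borsuk in polynomial time, and $g_1 \circ g_2$ maps solutions back; since composition of polynomial-time computable maps is polynomial-time computable, and the class of maps allowed in these reductions (polynomial-time, with $g$ realizable by a polynomial-size circuit with $G_{>}$ gates) is closed under composition, this exhibits $P \in \BU$. Hence $\FIXP \subseteq \BU$.

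\medskip

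The only point that warrants a sentence of care is verifying that the reductions compose cleanly within the reduction framework of Definition~\ref{def:BU}, i.e. that the solution-recovery maps remain of the permitted form under composition. This is routine: the instance maps are ordinary polynomial-time functions, and the solution maps $g_1$, $g_2$ are each implemented by polynomial-size arithmetic circuits (with the extra discontinuous gate $G_{>}$), so their composition is again a polynomial-size such circuit. There is no genuine obstacle here — all the real work has already been done in the proofs of Theorems~\ref{thm:CHinBU} and~\ref{thm:CH_is_FIXP-hard}. The corollary is simply the observation that \BU, being defined via closure under this class of reductions, inherits every \FIXP-hard problem that it contains.
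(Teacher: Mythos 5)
Your proof is correct and takes essentially the same route as the paper: compose the \FIXP-hardness of $(n,n)$-\CH (Theorem~\ref{thm:CH_is_FIXP-hard}) with its membership in \BU (Theorem~\ref{thm:CHinBU}), using closure of the reduction class under composition. The paper simply states the corollary follows from these two theorems; your spelling out of the composition (polynomial-time instance maps and polynomial-size circuits with $G_{>}$ gates for the solution maps) is the obvious intended argument.
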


\subsection{$(n,n-1)$-\CH is \etr-complete}
\label{sec:ETR}

We will show the \etr-hardness of $(n,n-1)$-\CH by reducing from the following
problem \bconj, which we prove it is \etr-complete.

\begin{definition}[\bconj]
	Let $p_1, \ldots, p_k : [0,1]^n \to \reals$ be a family of polynomials, where
	each one of them is given as a sum of monomials with integer coefficients. 
	\bconj asks whether the polynomials have a common zero.
\end{definition}

Then, we reduce the above problem to the following one. 
\begin{definition}[\feas, \bfeas]
	\label{def:feas}
	Let $p(x_1, \dots, x_m)$ be a polynomial.
	\feas asks whether there exists a point $(x_1, \dots, x_m) \in \reals^m $ 
	that satisfies $p(x_1, \dots, x_m) = 0$. 
	\bfeas asks whether there exists a point $(x_1, \dots, x_m) \in [0,1]^m$ that satisfies $p(x_1, \dots, x_m) = 0$.
\end{definition}

The idea is to turn the polynomial into a circuit, and then embed that circuit
into a Consensus Halving instance using Lemma~\ref{lem:circuit-embed}. 
As before, the main difficulty is ensuring that the preconditions
of Lemma~\ref{lem:circuit-embed} are satisfied. To do this, we must ensure that
the inputs to the circuit take values in $[0, 1]$, which is not the case if we reduce directly
from \feas. Instead, we first consider the problem $\bfeas$, in which
$x$ is constrained to lie in $[0, 1]^n$ rather than $\reals^n$, and we show the
following result.

\begin{lemma}\label{lem:Feas_is_etr-c}
	\bfeas is \etr-complete even for a polynomial of maximum sum of variable exponents in each monomial equal to 4.
\end{lemma}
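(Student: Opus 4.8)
The plan is to establish membership and hardness separately. Membership of \bfeas in \etr is immediate: the statement $\exists (x_1,\dots,x_m) \in [0,1]^m : p(x_1,\dots,x_m) = 0$ is a sentence in the existential theory of the reals (the box constraints $0 \le x_i \le 1$ are polynomial inequalities), so \bfeas reduces trivially to \etr. The substance of the lemma is \etr-hardness, and more specifically \etr-hardness already for the restricted class of polynomials in which each monomial has total degree at most $4$. For this I would reduce from \bconj, which the paper is about to prove is \etr-complete (or, if one wants to be safe about the order of exposition, one reduces from the standard \etr-complete problem \feas-type formulation and then massages it; but the cleanest route is via \bconj since it already has the $[0,1]^n$ domain). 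Given polynomials $p_1,\dots,p_k : [0,1]^n \to \reals$ with integer coefficients, they have a common zero in $[0,1]^n$ if and only if the single polynomial $P := \sum_{j=1}^k p_j^2$ has a zero in $[0,1]^n$ (since each $p_j^2 \ge 0$, the sum vanishes iff every $p_j$ vanishes). So $\langle p_1,\dots,p_k\rangle \in \bconj \iff \langle P \rangle \in \bfeas$, and $P$ is computable in polynomial time from the $p_j$. This already gives \etr-hardness of \bfeas, but $P$ can have large degree, so the remaining work is to reduce the degree to $4$.

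To control the degree I would introduce auxiliary variables that implement multiplication gates, exactly the standard arithmetization trick. Write each $p_j$ as its given sum of monomials. Using fresh variables $y_\ell$, replace products step by step: for each pair of factors we wish to multiply, introduce a new variable $z$ together with the defining equation $z = u \cdot v$, i.e. the polynomial constraint $z - uv = 0$, which has degree $2$. Building up every monomial of every $p_j$ this way, and then forming the partial sums, we obtain a system of polynomial equations each of degree at most $2$, whose simultaneous solvability over $[0,1]$ (after also box-constraining the new variables to a suitable interval, e.g. $[0,1]$ or $[-1,1]$, which we can arrange by rescaling since all intermediate values of a polynomial on $[0,1]^n$ with integer coefficients are bounded) is equivalent to the common zero of the $p_j$. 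Finally, collapse this system of degree-$\le 2$ equations $q_1 = \dots = q_r = 0$ into the single equation $\sum_i q_i^2 = 0$; squaring a degree-$2$ polynomial yields degree $4$, which is exactly the bound claimed. One must double-check the domain: the new variables must be forced into $[0,1]$ (or an interval that can be affinely mapped into $[0,1]$) so that the final instance is a genuine \bfeas instance; this is handled by rescaling each product variable by a fixed rational so that it lands in $[0,1]$, adjusting the defining equations accordingly — this rescaling only multiplies coefficients and does not raise the degree.

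The main obstacle I anticipate is the bookkeeping around the domain constraints and the degree count after the final squaring step: one has to be careful that (i) the gadget variables really are constrained to $[0,1]$ so the reduction produces a legal \bfeas instance rather than a \feas instance, (ii) the affine rescalings needed to keep intermediate products in $[0,1]$ do not secretly push the degree above $4$, and (iii) the equivalence "common zero of $p_j$'s $\iff$ zero of the final polynomial" survives the introduction of the auxiliary variables (it does, because each auxiliary variable is uniquely determined as a polynomial function of the original $x_i$ by its defining equation, so the projection of the solution set onto the $x$-coordinates is preserved). None of these is deep, but getting the constants right is where the care is needed. The reduction is clearly polynomial time: the number of auxiliary variables and equations is linear in the total number of monomials and their degrees, hence polynomial in the input size.
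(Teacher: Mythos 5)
Your proof is correct but takes a genuinely different route to the degree-$4$ bound than the paper does. The paper sidesteps the degree-reduction gadgetry entirely: it cites Lemma~3.9 of~\cite{scha13}, which shows that deciding whether a family of polynomials has a common root in the $n$-dimensional unit ball is \etr-complete \emph{already when each monomial has total degree at most~$2$}; an affine change of variables moves the ball into $[0,1]^n$ without raising degree, so \bconj is \etr-hard for degree-$2$ inputs, and the sum-of-squares $q=\sum_j p_j^2$ is then immediately of degree at most~$4$. You instead start from unrestricted \bconj, form the sum of squares, and recover the degree bound by re-arithmetizing the whole system with auxiliary product variables $z - uv = 0$, collapsing the resulting degree-$\le 2$ system by a second sum of squares. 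Both routes are sound; the paper's is a one-line appeal to a sharper hardness result, while yours is more self-contained but shifts the technical weight onto the gadget construction. One remark on your own worry list: the rescaling you flag under point~(ii) is not actually needed. Every auxiliary variable you introduce is a product of quantities already constrained to $[0,1]$, so it automatically lies in $[0,1]$ and is a legal \bfeas variable; and the linear constraints $\tilde p_i = 0$ introduce no new variables at all. So the only genuine bookkeeping is confirming that squaring the degree-$2$ defining equations and summing gives degree exactly~$4$, which you have, and that the projection of the solution set onto the original coordinates is bijective onto the common zeros, which you also argue correctly. In short: correct, somewhat heavier than necessary, and the rescaling concern can be dropped.
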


The proof of that lemma is presented in Section \ref{app:Feas_is_etr-c}. Consequently, via a polynomial-time reduction from \bfeas to $(n,n-1)$-\CH and Theorem \ref{thm:CHinETR}, we prove the following result.
\begin{theorem}\label{thm:CH_is_etr-c}
	$(n,n-1)$-\CH is \etr-complete.
\end{theorem}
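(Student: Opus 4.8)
The plan is to establish \etr-completeness of $(n,n-1)$-\CH by proving the two directions separately. Membership in \etr is already handled by Theorem~\ref{thm:CHinETR}, so the bulk of the work is \etr-hardness, which we obtain by a polynomial-time reduction from \bfeas (shown \etr-complete in Lemma~\ref{lem:Feas_is_etr-c}, even restricted to polynomials whose maximum sum of variable exponents per monomial is $4$). So suppose we are given a polynomial $p(x_1,\dots,x_m)$ with integer coefficients and the restricted degree bound, and we wish to decide whether $p$ has a zero in $[0,1]^m$.

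The first step is to turn $p$ into an arithmetic circuit that falls into the class permitted by Lemma~\ref{lem:circuit-embed}: it may only use the gates $G_\zeta, G_+, G_{*\zeta}, G_{()^2}, G_{-}^{[0,1]}, G_{*2}^{[0,1]}$, every constant and scaling factor $\zeta$ must lie in $\mathbb{Q}\cap(0,1]$, and \emph{every} intermediate value must stay in $[0,1]$ whenever the inputs are in $[0,1]^m$. This is the delicate bookkeeping part. The degree-$4$ restriction from Lemma~\ref{lem:Feas_is_etr-c} is exactly what makes each monomial expressible using a bounded number of $G_{()^2}$ gates (squaring, then multiplying pairs via the identity $a\cdot b = 2[(a/2+b/2)^2 - ((a/2)^2+(b/2)^2)]$ already recorded in Section~\ref{sec:special_cir}), and since inputs lie in $[0,1]$ all such products and squares stay in $[0,1]$. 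Summation of monomials must be done with the downscale/upscale trick ($a+b = (a/2+b/2)\cdot 2$, using $G_{*2}^{[0,1]}$ on values guaranteed to be $\le 1/2$) so that every $G_+$ gate receives inputs in $[0,1/2]$, and integer coefficients are absorbed by rescaling the whole polynomial: replace $p$ by $p/M$ for a suitable $M$ (a power of two or the sum of absolute values of coefficients) so all coefficients become rationals in $(0,1]$, which does not change the zero set. Finally, since $p$ may take negative values while our circuit outputs only nonnegative quantities, we compute $p^{+} := \max\{p,0\}$ and $p^{-} := \max\{-p,0\}$ separately (each built from the pieces above via $G_{-}^{[0,1]}$), so that $p(x)=0$ iff $p^{+}(x)=p^{-}(x)=0$; equivalently we can arrange a single output gate that equals $0$ exactly on the zero set of $p$.

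The second step is to embed this circuit into a Consensus Halving instance using Lemma~\ref{lem:circuit-embed}. This produces a \CH instance with some number $n$ of agents in which, for any $x\in[0,1]^m$ encoded as a set of ``input cuts'', the gate gadgets force cuts that encode the correct value of $p$ (or of $p^{+},p^{-}$) at its output wire. We then add a constant number of extra agents whose valuations force the output wire to encode the value $0$ — i.e. they are satisfiable only when the cut encoding $p(x)$ (resp.\ $p^{+}(x)$ and $p^{-}(x)$) sits at the position representing $0$. The key accounting claim is that the whole gadget, together with the input-encoding region and the output-constraint agents, can be laid out so that a valid Consensus Halving solution using at most $n-1$ cuts exists \emph{iff} $p$ has a zero in $[0,1]^m$: each agent in a chain of gadgets ``consumes'' exactly one cut to be satisfied, and the slack of one cut below $n$ is precisely what encodes the extra constraint $p(x)=0$ rather than merely the existence of some consistent assignment. (This mirrors why $(n,n)$-\CH is always solvable but $(n,n-1)$-\CH is a genuine decision problem.) Combined with Theorem~\ref{thm:CHinETR} this yields \etr-completeness.

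The main obstacle I expect is the last accounting step: showing that exactly $n-1$ cuts suffice precisely when a zero exists, and not fewer or more. One must (a) verify that in the ``yes'' case a solution with at most $n-1$ cuts genuinely exists — every gadget agent is satisfied by its designated cut, the input cuts encode a true zero $x^\star$, and the saved cut is accounted for by the $p=0$ constraint collapsing two would-be cuts into one; and (b) verify the converse, that any $(n-1)$-cut solution must place cuts in the canonical per-gadget positions (so the circuit semantics of Lemma~\ref{lem:circuit-embed} apply) and must satisfy the output-constraint agents, forcing $p$ to vanish at the encoded point. Ruling out ``cheating'' solutions — where a cut is shared between two gadgets in an unintended way, or the output agents are satisfied by a degenerate configuration — requires a careful case analysis of the valuation functions in Figure~\ref{gates-table}, analogous to the one underlying Lemma~\ref{lem:circuit-embed} but now tracking the global cut budget. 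The circuit-construction step (first paragraph's worth of work) is routine but tedious; the cut-budget argument is where the real content lies.
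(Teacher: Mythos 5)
Your high-level plan matches the paper's: reduce from \bfeas via Lemma~\ref{lem:Feas_is_etr-c}, build an arithmetic circuit in the class Lemma~\ref{lem:circuit-embed} can handle, embed it into a \CH instance, and then enforce the $p=0$ constraint in a way that costs no additional cut, so that $n$ agents are satisfiable with $n-1$ cuts iff $p$ vanishes somewhere in $[0,1]^N$. Membership follows from Theorem~\ref{thm:CHinETR}. So far so good, and your remark that the degree-4 bound makes individual monomials easy is harmless but actually unnecessary---the paper handles arbitrary degree the same way.

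The genuine gap is in the cut-budget mechanism, which you flag as ``where the real content lies'' but do not actually supply, and the hint you do give is misleading. There is no ``collapsing of two would-be cuts into one.'' What the paper does is write $p$ (after normalizing coefficients into $(0,1/l]$) as $q = q_1 - q_2$ where $q_1,q_2$ are the sums of the positive and negative monomial groups, and build a circuit computing \emph{both} $q_1$ and $q_2$, arranging that the two output wires are carried by the last two nodes $v_{r-1}$ and $v_r$. The embedding of Lemma~\ref{lem:circuit-embed} then produces $4r$ agents and, in any solution with $\le 4r$ cuts, forces exactly one cut in each of $v_{r-1}^{+}$ (placed by $ex_{r-1}$, encoding $q_1$'s value) and $v_r^{-}$ (placed by $cen_r$, encoding $q_2$'s value). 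The paper then adds a single extra agent, \textit{finis}, whose valuation is a uniform density supported \emph{exactly} on these two intervals $v_{r-1}^{+}\cup v_r^{-}$. Because those two cuts already exist with the right signs (one positive, one negative), \textit{finis} is bisected by the pre-existing $4r$ cuts iff the two encoded values are equal, i.e.\ iff $q_1(x^\*) = q_2(x^\*)$, i.e.\ iff $p(x^\*)=0$. So $n = 4r+1$ and the budget is $n-1 = 4r$: the extra agent adds a constraint but never a cut. Your decomposition into $p^{+}=\max\{p,0\}$ and $p^{-}=\max\{-p,0\}$ would naturally lead to checking two separate outputs are zero, which needs two constraint agents and does not line up with this ``free-ride on two existing cuts'' trick; and your phrase ``a constant number of extra agents'' elides exactly the point that it must be \emph{one} agent whose constraint is the \emph{equality} of two outputs rather than a vanishing condition on one. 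Without that precise mechanism the $n-1$ bound is not justified.
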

The proof of the above theorem is presented in Section \ref{app:CH_is_etr-c}.


\section{Proof of Lemma \ref{lem:circuit-embed}}\label{app:circuit-embed}

In this section, the detailed construction of a \CH instance from an arbitrary given special circuit is presented. A special circuit is an arithmetic circuit with the properties described in the statement of Lemma \ref{lem:circuit-embed} (see Section \ref{sec:special_cir} for a detailed definition). After the construction, a correspondence of circuit to \CH solutions is proven, which completes the proof of the lemma.

\subsection{Special circuit to \CH instance}\label{app:cir_to_ch}
Consider a circuit $H = (V, \tcal)$ that uses gates in $\{G_{\zeta}, G_{+}, G_{*\zeta}, G_{()^2}, G_{-}^{[0,1]}, G_{*2}^{[0,1]}\}$, with $\zeta \in \mathbb{Q} \cap (0,1]$, each gate's inputs/output are in $[0,1]$, and both inputs of $G_{+}$ are in $[0,1/2]$. The constraints of the special gates $G_{()^2}, G_{-}^{[0, 1]}, G_{*2}^{[0,1]}$ are shown in Table \ref{tbl:special_gates}.

In general, the input of $H$ is a $N$-dimensional vector $x \in [0,1]^N$ is given by $N$ nodes with in-degree 0 and out-degree 1, called \emph{input-nodes}. Also, in general, the output of $H$ is a $M$-dimensional vector $x' \in [0,1]^M$ (the dimension of the circuit's output is of no importance here). Moreover, it could be the case that $H$ is \emph{cyclic}, meaning that it has no input and no output, but here we will consider the general case.  Without loss of generality, let the rest of the nodes be of in-degree 1 and out-degree 1, located right after each gate's output. By ``right after'' we mean that if a gate's output has a branching, the node is placed before the branching.
Suppose that the total number of nodes in $H$ is $r := N + |\tcal| = poly(N)$, since by definition $H$ has polynomial size.

If the node $v_{i} \in V$ for $i \in [r]$ is at the output of gate $g_i$ we will call it the \textit{output-node of $g_i$} (otherwise it will be an input-node). For an example see Figure \ref{pl-node2}.
\begin{figure}
	\begin{center}
		\includegraphics[scale=0.80]{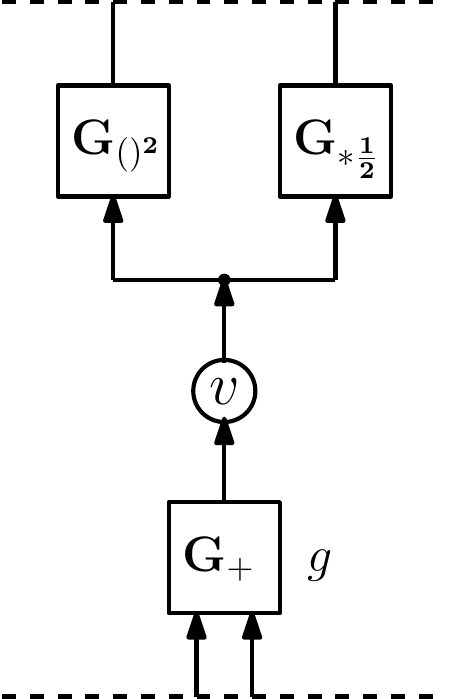}
	\end{center}
	\caption[A gadget to turn an arithmetic circuit into a \CH instance.]{A node in series with the output of an addition gate. $v$ is the \textit{output-node} of $g$.}\label{pl-node2}
\end{figure}

Consider the node $v_i$, the output-node of gate $g_i$. $v_i$ corresponds to 4 Consensus Halving agents, named $ad_i$, $mid_i$, $cen_i$ and $ex_i$. Player $ad_i$ (Latin for ``to'') represents the incoming edge \emph{to} node $v_i$ and agent $ex_i$ (Latin for ``from'') the outgoing edge \emph{from} $v_i$, while both $mid_i$ and $cen_i$ represent an edge at the \emph{middle} (\emph{center}) of node $v_i$ that connects its input and output. The number of agents created in $H$ is $n := 4r$. The domain of the valuation functions of the agents is $[0,12r]$. Furthermore, this interval is split to $r$ blocks, with the $i$-th block being $[b_i, b_{i+1}]$, where $b_i := 12(i-1)$, $i \in [r]$. 

According to the definition of the \CH problem, the domain of the valuation functions of the agents is $[0,1]$. Although the domain of the valuation functions of the \CH instance that we reduce to is $[0,12r]$, this is just for convenience of presentation. In fact, by scaling down each block to length $1/(12r)$ (divide by $12r$), the domain becomes $[0,1]$ and the correctness of the reduction is preserved. 

Let us define the function $border_{i}(t)$, $t \in [0,12r]$ for each node $v_i$, $i \in [r]$. The idea for this function is from \cite{FFGZ18}. If $v_i$ is the output-node of gate type $G_{*\zeta}$, then
\begin{align*}
	border_{i}(t) &= \begin{cases}
		4 , \quad &t \in [b_{i}, b_{i}+1] \cup [b_{i}+1+\zeta, b_{i}+2+\zeta] \\
		0 , \quad &\text{otherwise}
	\end{cases}
\end{align*}
If $v_i$ is the output-node of any gate type other than $G_{*\zeta}$, then
\begin{align*}
	border_{i}(t) &= \begin{cases}
		4 , \quad &t \in [b_{i}, b_{i}+1] \cup [b_{i}+2, b_{i}+3] \\
		0 , \quad &\text{otherwise}
	\end{cases}
\end{align*}
and also:
\begin{itemize}
	\item $v_{i}^{a} := [b_{i} + 1, b_{i} + 2] := [v_{i,l}^{a}, v_{i,r}^{a}]$
	\item $v_{i}^{m} := [b_{i} + 4, b_{i} + 5] := [v_{i,l}^{m}, v_{i,r}^{m}]$
	\item $v_{i}^{-} := [b_{i} + 7, b_{i} + 8] := [v_{i,l}^{-}, v_{i,r}^{-}]$
	\item $v_{i}^{+} := [b_{i} + 10, b_{i} + 11] := [v_{i,l}^{+}, v_{i,r}^{+}]$
	\item $G_{\pi}(t)$ is the function corresponding to gate of type $G_{\pi} \in \{ G_{\zeta}, G_{*\zeta}, G_{+}, G_{()^2}, G_{-}^{[0,1]}, G_{*2}^{[0,1]}\}$ (see Figure \ref{gates-table}).
\end{itemize}

The valuation functions of the agents $ad_i$, $mid_i$, $cen_i$ and $ex_i$ corresponding to node $v_i$ are,
\begin{align}\label{val-functions}
	ad_i(t) &= \begin{cases}
		border_{i}(t) + G_{\pi}(t) , \quad &\text{if $v_i$ is the output-node of gate type $G_{\pi}$}\\
		border_{i}(t) , \quad &\text{if $v_i$ is input-node (input of $H$)}.
	\end{cases} \\
	mid_i(t) &= \begin{cases}
		4 , \quad &t \in [b_{i}+3, b_{i}+4] \cup [b_{i}+5, b_{i}+6] \\
		1 , \quad &t \in v_{i}^{a} \cup v_{i}^{m} \\
		0 , \quad &\text{otherwise}
	\end{cases} \nonumber \\
	cen_i(t) &= \begin{cases}
		4 , \quad &t \in [b_{i}+6, b_{i}+7] \cup [b_{i}+8, b_{i}+9] \\
		1 , \quad &t \in v_{i}^{m} \cup v_{i}^{-} \\
		0 , \quad &\text{otherwise}
	\end{cases} \nonumber \\
	ex_i(t) &= \begin{cases}
		4 , \quad &t \in [b_{i}+9, b_{i}+10] \cup [b_{i}+11, b_{i}+12] \\
		1 , \quad &t \in v_{i}^{-} \cup v_{i}^{+} \\
		0 , \quad &\text{otherwise}
	\end{cases} \nonumber
\end{align}

The intuition for the synergy of the 4 agents is the following: Take as a given that in a solution of the created \CH instance with at most $n$ cuts, a cut is placed only (almost always\footnote{With the only exception being a cut before $v_{i}^{a}$ when gate $g_i$ is $G_{-}^{[0,1]}$ and its result is negative. See Figure \ref{example_minus} for an example.}) in the intervals $v_{i}^{a}, v_{i}^{m}, v_{i}^{-}, v_{i}^{+}$ for every $i \in [r]$. Since the length of each of those intervals is 1, each such cut encodes a number in $[0,1]$. Consider $v_i$, the output-node of gate $g_i$ with inputs $v_{j}, v_{k}$. Think of the agents $ad_i$, $mid_i$, $cen_i$, $ex_i$ as being sequential, meaning that each of them ``computes'' a value via a cut in $v_{i}^{a}, v_{i}^{m}, v_{i}^{-}$ or $v_{i}^{+}$ respectively, and feeds it in the next agent. In particular, agent $ad_i$ takes as input the values (in the form of cuts) that nodes $v_{j}, v_{k}$ give her, and computes the exact operation that $g_i$ prescribes (e.g. if $g_i$ is type $G_{-}^{[0,1]}$, $ad_i$ performs subtraction of the input values without capping at 0, see Figure \ref{example_minus}). Then $ad_i$ feeds this value in $mid_i$ via creating a cut in $v_{i}^{a}$, and $mid_i$ computes the actual value in $[0,1]$ that $g_i$ should output (e.g. if $g_i$ is type $G_{-}^{[0,1]}$, in this step $mid_i$ caps the value at 0), and feeds it in $cen_i$ via creating a cut in $v_{i}^{m}$. 
This correct value should be exported for further use from other gates to which $v_i$ is input, but depending on these gates, the positive or negative of that value might be needed (by ``positive'' and ``negative'' we mean the label, not the actual sign of the value). That is why a negative version of this value is produced by $cen_{i}$ and a positive by $ex_i$, via a cut in $v_{i}^{-}$ and $v_{i}^{+}$ respectively. A negative(resp. positive) value is one encoded by a cut that defines an interval at its left which is \emph{negative}(resp \emph{positive}). Moreover, for every input-node $v_j$ we arbitrarily consider $ad_j$ to encode a negative value, therefore, since (by the structure of the \CH instance) the labels of the values induced by the 4 agents are alternating, the agents $mid_{i}$, $cen_{i}$, $ex_{i}$ encode a positive, negative, and positive value, respectively.

\subsection[One-to-two correspondence of circuit values to CH cuts]{One-to-two correspondence of circuit values to \CH cuts}

Here we show that a solution of the special circuit maps to one pair of \CH solutions (since the solutions come by definition in pairs of opposite signs of pieces), and any pair of \CH solutions maps to exactly one solution of the special circuit.

Let us define the functions $z_{i}(x)$, $i \in [r]$ that depend on the input vector $x \in [0,1]^N$, and compute the value of each node $v_i$ of the arithmetic circuit $H$. Let us also, without loss of generality, set $(z_1, \dots, z_N) := (x_1, \dots, x_N)$. First, we will show that for every tuple $(z_{1}(x), \dots, z_{r}(x))$ of values that satisfy $H$, a solution in the constructed \CH instance with $n$ agents and $n$ cuts ($n := 4r$) encodes the same values via its cuts. We will then show that for every solution of the \CH instance with $n$ agents and $n$ cuts, the cuts correspond to a unique tuple $(z_1, \dots, z_r)$ that satisfies $H$. 

In the sequel, we call a cut $t$ \emph{negative}(resp. \emph{positive}) if the interval that it defines at its left has negative(resp. positive) label. Also, in the following subsections, the analysis is done for the case where the resulting \CH solution has its leftmost interval being negative. However, there is one more solution symmetric to this, in which the leftmost interval is positive. In any solution we remind that the intervals are of alternating signs (see definition of a \CH solution in Section \ref{sec:CH_definition}). We omit the analysis of the solution where the leftmost interval is positive since it is identical to the presented one.

\subsubsection{Circuit values to cuts}\label{par:cir-to-ch}

Suppose the tuple $(z_{1}^*, \dots, z_{r}^*)$ satisfies $H$. We will show that from this solution we can create a \CH solution with $n:=4r$ cuts, i.e. all of the agents are satisfied. Consider node $v_{i}$ of $H$. Let us translate the values $z_{i}^{*}$, $i \in [r]$ into cuts as follows:

\begin{itemize}
	\item If $g_{i}$'s type is one of $G_{\zeta}, G_{*\zeta}, G_{+}, G_{()^2}, G_{*2}^{[0,1]}$ or $v_i$ is an input-node. 
	\begin{itemize}
		\item Place a cut at $t = v_{i,l}^{a} + z_{i}^{*}$,
		\item Place a cut at $t = v_{i,l}^{m} + z_{i}^{*}$,
		\item Place a cut at $t = v_{i,l}^{-} + z_{i}^{*}$,
		\item Place a cut at $t = v_{i,l}^{+} + z_{i}^{*}$.
	\end{itemize}
	\item If $g_{i}$'s type is $G_{-}^{[0,1]}$, i.e. $g_{i} = \max\{g_j - g_k , 0\}$, and $z_j^* \geq z_k^*$. 
	\begin{itemize}
		\item Place a cut at $t = v_{i,l}^{a} + z_{i}^{*}$,
		\item Place a cut at $t = v_{i,l}^{m} + z_{i}^{*}$,
		\item Place a cut at $t = v_{i,l}^{-} + z_{i}^{*}$,
		\item Place a cut at $t = v_{i,l}^{+} + z_{i}^{*}$.
	\end{itemize}
	\item If $g_{i}$'s type is $G_{-}^{[0,1]}$, i.e. $g_{i} = \max\{g_j - g_k , 0\}$, and $z_j^* < z_k^*$
	\begin{itemize}
		\item Place a cut at $t = v_{i,l}^{a} - (z_{k}^{*} - z_{j}^{*})/5$,
		\item Place a cut at $t = v_{i,l}^{m} + z_{i}^{*}$,
		\item Place a cut at $t = v_{i,l}^{-} + z_{i}^{*}$,
		\item Place a cut at $t = v_{i,l}^{+} + z_{i}^{*}$.
	\end{itemize}
\end{itemize}

%
%
By construction of the valuation functions of the agents, these cuts are placed one after the other, where there is one cut in each of the intervals $v_{i}^{a}$, $v_{i}^{m}$, $v_{i}^{-}$, $v_{i}^{+}$ in that order, and each such sequence of four cuts is in an increasing order of $i$.
By definition, any solution of \CH has alternating signs of the resulting pieces, and, as mentioned earlier, each solution comes with another symmetric solution with the same cuts and opposite signs of pieces. The analysis here is shown for the solution where the leftmost piece is negative, and we omit the analysis of the symmetric solution since it is identical.
%

Let us now prove that for every $i \in [r]$, the $ad_i$ agent is satisfied. 

\paragraph*{$\mathbf{G_{\zeta}:}$}
This gate has no input. Consider its output $z_{i}^* = \zeta$ and its output-node $v_i$. By our constructed $n$-cut, a cut is placed at $t = v_{i,l}^{a} + \zeta$. Since the valuation function of $ad_i$ is symmetric around $v_{i, l}^{a} + \zeta$ (see aforementioned equation \eqref{val-functions} that describes the valuation functions), the total valuation is cut exactly in half (see Figure \ref{gates-table}), therefore agent $ad_i$ is satisfied. 

\paragraph*{$\mathbf{G_{*\zeta}:}$}
Consider its input $z_{j}^*$, output $z_{i}^* = \zeta \cdot z_{j}^*$ and its output-node $v_i$. By our constructed $n$-cut, a positive cut is placed at $t = v_{j,l}^{+} + z_{j}^*$ and a negative cut is placed at $t = v_{i,l}^{a} + z_{i}^*$. Agent $ad_i$ is satisfied since her positive valuation equals her negative one. In particular, $z_{j}^* \cdot 1 + (\zeta - z_{i}^{*}) \cdot \frac{1}{\zeta} + 1 \cdot 4 = (1 - z_{j}^*) \cdot 1 + 1 \cdot 4 + z_{i}^{*} \cdot \frac{1}{\zeta}$ is true.

\paragraph*{$\mathbf{G_{+}:}$}
Consider its inputs $z_{j}^*$, $z_{k}^*$, its output $z_{i}^* = z_{j}^* + z_{k}^*$ and its output-node $v_i$. By our constructed $n$-cut, a positive cut is placed at $t = v_{j,l}^{+} + z_{j}^*$, another positive cut is placed at $t = v_{k,l}^{+} + z_{k}^*$ and a negative cut is placed at $t = v_{i,l}^{a} + z_{i}^*$. Agent $ad_i$ is satisfied since her positive valuation equals her negative one. In particular, $z_{j}^* \cdot 1 + z_{k}^* \cdot 1 + (1 - z_{i}^*) \cdot 1 + 1 \cdot 4 = (1/2 - z_{j}^*) \cdot 1 + (1/2 - z_{k}^*) \cdot 1 + 1 \cdot 4 + z_{i}^{*} \cdot 1$ is true.

\paragraph*{$\mathbf{G_{()^2}:}$}
Consider its input $z_{j}^*$, output $z_{i}^* = (z_{j}^*)^2$ and its output-node $v_i$. By our constructed $n$-cut, a positive cut is placed at $t = v_{j,l}^{+} + z_{j}^*$ and a negative cut is placed at $t = v_{i,l}^{a} + z_{i}^*$. Agent $ad_i$ is satisfied since her positive valuation equals her negative one. In particular, $(z_{j}^*)^2 + (1 - z_{i}^{*}) \cdot 1 + 1 \cdot 4 = (1 - (z_{j}^*)^2) + 1 \cdot 4 + z_{i}^{*} \cdot 1$ is true.


\paragraph*{$\mathbf{G_{*2}^{[0,1]}:}$}
Consider its input $z_{j}^*$, output $z_{i}^* = 2 \cdot z_{j}^*$ and its output-node $v_i$. By our constructed $n$-cut, a positive cut is placed at $t = v_{j,l}^{+} + z_{j}^*$ and a negative cut is placed at $t = v_{i,l}^{a} + z_{i}^*$. Agent $ad_i$ is satisfied since her positive valuation equals her negative one. In particular, $z_{j}^* \cdot 1 + (1 - z_{i}^{*}) \cdot \frac{1}{2} + 1 \cdot 4 = (1/2 - z_{j}^*) \cdot 1 + 1 \cdot 4 + z_{i}^{*} \cdot \frac{1}{2}$ is true.

\paragraph*{$\mathbf{G_{-}^{[0,1]}:}$}
Consider its inputs $z_{j}^*$, $z_{k}^*$, its output $z_{i}^* = \max \{ z_{j}^* - z_{k}^* , 0 \}$ and its output-node $v_i$. By our constructed $n$-cut, 
\begin{itemize}
	\item if $z_j^* \geq z_k^*$, then $z_{i}^* = z_{j}^* - z_{k}^*$. By our constructed $n$-cut, a positive cut is placed at $t = v_{j,l}^{+} + z_{j}^*$, a negative cut is placed at $t = v_{k,l}^{-} + z_{k}^*$ and another negative cut is placed at $t = v_{i,l}^{a} + z_{i}^*$. Agent $ad_i$ is satisfied since her positive valuation equals her negative one. In particular, $z_{j}^* \cdot 1 + (1 - z_{k}^*) \cdot 1 +  (1 - z_{i}^*) \cdot 1 + 1 \cdot 4 = (1 - z_{j}^*) \cdot 1 + z_{k}^* \cdot 1  + 1 \cdot (1+4) + z_{i}^{*} \cdot 1$ is true.
	\item if $z_j^* < z_k^*$, then $z_{i}^* = 0$. By our constructed $n$-cut, a positive cut is placed at $t = v_{j,l}^{+} + z_{j}^*$, a negative cut is placed at $t = v_{k,l}^{-} + z_{k}^*$ and another negative cut is placed at $t = v_{i,l}^{a}   - (z_{k}^{*} - z_{j}^{*})/5$. Agent $ad_i$ is satisfied since her positive valuation equals her negative one. In particular, $z_{j}^* \cdot 1 + (1 - z_{k}^*) \cdot 1 + \frac{z_{k}^* - z_{j}^*}{5} \cdot (1+4) + 1 \cdot 1 + 1 \cdot 4  = (1 - z_{j}^*) \cdot 1 + z_{k}^* \cdot 1 + (1 - \frac{z_{k}^* - z_{j}^*}{5}) \cdot (1+4)$ is true.
\end{itemize} 
~\\

We will now prove that in our constructed $n$-cut, the agents $mid_i$, $cen_i$, $ex_i$ are also satisfied. If $g_i$ is not a $G_{-}^{[0,1]}$ gate, let us prove that $mid_i$ is satisfied. In our $n$-cut there is a negative cut at $t = v_{i,l}^{a} + z_{i}^{*}$ and a positive one at $t = v_{i,l}^{m} + z_{i}^{*}$. Agent $mid_i$ is satisfied since her negative valuation equals her positive one. In particular, $z_{i}^* \cdot 1 + (1 - z_{i}^*) \cdot 1 + 1 \cdot 4  = (1 - z_{i}^*) \cdot 1 + 1 \cdot 4 + z_{i}^* \cdot 1 $ is true.

If $g_i$ is a $G_{-}^{[0,1]}$ gate, let us prove that $mid_i$ is satisfied. 
\begin{itemize}
	\item if $z_j^* \geq z_k^*$, then a negative cut is placed at $t = v_{i,l}^{a} + z_{i}^*$, and a positive cut is placed at $t = v_{i,l}^{m} + z_{i}^*$. Agent $mid_i$ is satisfied since her negative valuation equals her positive one. In particular, $z_{i}^* \cdot 1 + (1 - z_{i}^*) \cdot 1 + 1 \cdot 4  = (1 - z_{i}^*) \cdot 1 + 1 \cdot 4 + z_{i}^* \cdot 1 $ is true.
	\item if $z_j^* < z_k^*$, then a negative cut is placed at $t = v_{i,l}^{a} - (z_{k}^{*} - z_{j}^{*})/5$ and a positive cut is placed at $t = v_{i,l}^{m}$. Agent $mid_i$ is satisfied since her negative valuation equals her positive one. In particular, $ \frac{z_{k}^{*} - z_{j}^{*}}{5} \cdot 0 + 1 \cdot 1 + 1 \cdot 4 = 1 \cdot 1 + 1 \cdot 4 + 0 \cdot 1 $ is true.
\end{itemize} 
For the agents $cen_i$ and $ex_i$, since their valuation functions are the same as $mid_i$ shifted to the right, it is easy to see that the $n$-cut we provide forces them to have positive valuation equal to their negative one.

\subsubsection{Cuts to circuit values}\label{par:ch-to-cir}

Now suppose that the tuple $(t_{1}^*, \dots, t_{n}^*)$ with $0 \leq t_{1}^* \leq \dots \leq t_{n}^* \leq 12r$, represents an $n$-cut ($n := 4r$) that is a solution of the constructed \CH instance with $n$ agents, where w.l.o.g. the first $4N$ cuts correspond to the $N$ input-nodes. We will show that from this solution we can construct a tuple $(z_{1}, \dots, z_{r})$ that satisfies the circuit $H$. Again, note that solutions of \CH come in pairs, where the two solutions have the same cuts but opposite signs of pieces. We will only analyse the solution where the leftmost interval has negative sign and omit the symmetric case with positive leftmost interval since the analysis is identical and both \CH solutions map to the same circuit solution.

Consider node $v_{i}$ which is the output-node of gate $g_{i}$ or it is an input-node. Observe that the valuation function of each of $ad_i, mid_i, cen_i$ and $ex_i$ has more than half of her total valuation inside the interval $[b_i , b_i + 3], [b_i + 3, b_i + 6], [b_i + 6, b_i + 9]$ and $[b_i + 9, b_i + 12]$ respectively. This means that in a solution, each of them has to have at least one cut in her corresponding aforementioned interval. But since these intervals are not overlapping for all $n$ agents, and we need to have at most $n$ cuts, exactly one cut has to be placed by each agent in her corresponding interval. 

Consider now the first $4N$ cuts that correspond to the input-nodes. As it is apparent from the definition of these nodes' valuation functions, each agent of $ad_i, mid_i, cen_i, ex_i$ for $i \in [N]$ has to place her single cut in the interval $v_{i}^{a}, v_{i}^{m}, v_{i}^{-}, v_{i}^{+}$ respectively. Given the latter fact, the definition of valuation functions for non input-node agents dictates that there will always be a cut in $v_{i}^{+}$ for every $i \in [r]$. Since $0 \leq t_{1}^* \leq \dots \leq t_{n}^* \leq 12r$, the sequential nature of our agents indicates that the cut $t_{4i}^*$, i.e. with index $4 \cdot i$, is found in interval $v_{i}^{+}$. Now, let us translate the position of the cut $t_{4i}^*, i \in [r]$ into the value $z_i = t_{4i}^* - v_{i,l}^{+} $. By a similar argument as that of the previous paragraph showing that the $ad_i$ agents are satisfied, it is easy to see that, by the aforementioned translation, the created tuple $(z_{1}, \dots, z_{r})$ satisfies circuit $H$.

\subsubsection{Valuation functions to circuits}\label{par:fun-to-cir}

In the \CH instances we construct, we have described the valuation functions of the agents mathematically. However, in a \CH instance the input is an arithmetic circuit, therefore we have to turn each valuation function of each agent $j \in [n]$ into its integral, and subsequently into an arithmetic circuit. Here we describe a method to do that.

The valuation functions we construct in our reduction (see Section \ref{app:cir_to_ch}) are piecewise polynomial functions of a single variable and their degree is at most 1, with $k$ pieces where $k$ is constant. Therefore, their integrals, which are the input of the \CH problem (captured by arithmetic circuits), are piecewise polynomial functions (with the same pieces) with degree at most 2. 
Consider the valuation function $f$ of an arbitrary player. Let the pieces of $f$ be $[p_0,p_1), [p_1,p_2), \dots , [p_{k-1},p_{k}]$ where $p_0 = 0$ and $p_k = 1$ and denote $P_1, P_2, \dots, P_k$ the above pieces respectively. Let us also denote by $f^{P_s}$ the polynomial in interval $P_s$, $s \in \{1,2, \dots, k\}$. In particular, $f$ can be defined as
\begin{align}\label{f(t)}
	f(t) = \begin{cases}
		f^{P_1}(t) \quad, t \in [p_0,p_1) \\
		f^{P_2}(t) \quad, t \in [p_1,p_2) \\
		\vdots \\
		f^{P_k}(t) \quad, t \in [p_{k-1},p_{k}],
	\end{cases}
\end{align}
and according to the valuation functions used in the reduction (see Section \ref{app:cir_to_ch}), for any given piece $P_s$ there are two kinds of possible functions

\begin{enumerate}
	\item[(a)] $f^{P_s}(t) = c_s$, where $c_s \geq 0$ is a constant, or
	\item[(b)] $f^{P_s}(t) = 2 \cdot ( t - p_{s-1} )$.
\end{enumerate}
(The latter comes from the valuation function of an $ad$ agent that corresponds to an output node of a $G_{()^2}$ gate.)

We would like to find a formula for the integral of $f(t)$, denoted $F(t)$, and we also require that $F(t)$ is computable by an arithmetic circuit, so that it is a proper input (together with the other agents' integrals of valuation functions) to the \CH instance.
For each piece $P_s$ we will construct an integral, denoted by $F^{P_s}(t)$, such that each such integral will be computable by an arithmetic circuit, and so that it will be $F(t) = \sum_{s \in \{1,2,\dots,k\}} F^{P_s}(t)$.
First, let us construct the function $D_{s}(t)$ using the domain $P_s$ of $f^{P_s}(t)$:
\begin{align*}
	D_{s}(t) := \min\left\{ \max\left\{ t, p_{s-1} \right\}, p_{s} \right\},
\end{align*}
which takes values
\begin{align*}
	D_{s}(t) = \begin{cases}
		p_{s-1} , &t < p_{s-1} \\
		t , &t \in [p_{s-1},p_{s}] \\
		p_{s} , &t > p_{s}.
	\end{cases}
\end{align*}

Now, for function $f^{P_s}(t)$ of case (a), we construct its integral:
\begin{align*}
	F^{P_s}(t) := c_{s} \cdot \left( D_{s}(t) - p_{s-1} \right),
\end{align*}
which takes values
\begin{align*}
	F^{P_s}(t) = \begin{cases}
		0 , &t < p_{s-1} \\
		c_{s} \cdot \left( t - p_{s-1} \right) , &t \in [p_{s-1},p_{s}] \\
		c_{s} \cdot \left( p_{s} - p_{s-1} \right) , &t > p_{s}.
	\end{cases}
\end{align*}
Similarly, for function $f^{P_s}(t)$ of case (b), we also construct its integral:
\begin{align*}
	F^{P_s}(t) := \left( D_{s}(t) - p_{s-1} \right)^2,
\end{align*}
which takes values
\begin{align*}
	F^{P_s}(t) = \begin{cases}
		0 , &t < p_{s-1} \\
		\left( t - p_{s-1} \right)^2 , &t \in [p_{s-1},p_{s}] \\
		\left( p_{s} - p_{s-1} \right)^2 , &t > p_{s}.
	\end{cases}
\end{align*}

Finally, for the agent with valuation function $f(t)$, the corresponding function computable by the arithmetic circuit that is input to the \CH problem is:
\begin{align*}
	F(t) := \sum_{s \in \{1,2,\dots,k\}} F^{P_s}(t).
\end{align*}
For the integral function $F(t)$ indeed it holds that $F(t) = \int_0^t f(x) \, dx$ as required. That is because, by the way we defined each $F^{P_s}(t)$, for any $t \in P_{s^*}$ it is
\begin{align*}
	F(t) = \sum_{s \in \{1,2,\dots,k\}} F^{P_s}(t) &= \sum_{s \in \{1,2,\dots,s^{*}-1\}} F^{P_s}(t) + F^{P_s^{*}}(t) + \sum_{s \in \{s^{*}+1,\dots,k\}} 0 \\
	&= \sum_{s \in \{1,2,\dots,s^{*}-1\}} \int_{P_s} f^{P_s}(x) \, dx +  \int_{p_{s^{*}-1}}^{t} f^{P_s^{*}}(x) \, dx \\
	&= \sum_{s \in \{1,2,\dots,s^{*}-1\}} \int_{p_{s-1}}^{p_{s}} f(x) \, dx +  \int_{p_{s^{*}-1}}^{t} f(x) \, dx \\
	&= \int_0^t f(x) \, dx
\end{align*}

For each player with some valuation function $f$ as defined above, we can compute the functions $F^{P_s}$, $s \in [k]$ by using gates $G_{\zeta}, G_{*\zeta}, G_{-}, G_{*}, G_{min}, G_{max}$. Then $F(t)$ can be computed by using $G_{+}$ gates. The arithmetic circuits that compute the functions $F(t)$ (one for each agent $j \in [n]$) constitute a proper \CH instance.
This completes the proof of Lemma \ref{lem:circuit-embed}.

\section{Proof of Theorem \ref{thm:CH_is_FIXP-hard}}\label{app:CH_is_FIXP-hard}

In this section we give a detailed proof of Theorem \ref{thm:CH_is_FIXP-hard} which states that $(n,n)$-\CH is \FIXP-hard. This is accomplished by finding a polynomial-time reduction from the \FIXP-complete problem of computing a ``$d$-player Nash equilibrium'' to the $(n,n)$-\CH problem. Using the machinery of \cite{EY10}, the \FIXP-complete problem is first expressed as a circuit with particular properties, which is then embedded into a $(n,n)$-\CH instance using Lemma \ref{lem:circuit-embed}. We prove that all of these steps can be executed in polynomial time.


In particular, in \cite{EY10} it is shown that the problem of finding a Nash equilibrium of a $d$-player normal form game with $d \geq 3$ (``$d$-player Nash equilibrium'' problem) is \FIXP-complete. Given an instance of this problem, we will construct a polynomial-time reduction to $(n,n)$-\textsc{Consensus Halving}.
We will start from an arbitrary instance of ``$d$-player Nash equilibrium'' and, according to it, design a circuit using only the gates $G_{\zeta}, G_{+}, G_{-}, G_{*}, G_{\max}, G_{\min}$ with $\zeta \in \mathbb{Q}$. This step is done by a straightforward application of the procedure described in the proofs of Lemma 4.5 and Lemma 4.6 in \cite{EY10}. This circuit computes a function whose fixed points correspond precisely to the Nash equilibria of the initial game. Then, we create an equivalent circuit by ``breaking down'' the initial gates to some more suitable ones (by introducing ``special gates'', see Table \ref{tbl:special_gates}), whose inputs and outputs are guaranteed to be in $[0,1]$. From this, we will create a cyclic circuit, introduce Consensus Halving players on the ``wires'' of the circuit, and show that a Consensus Halving solution with at most as many cuts as the number of players in this instance can be efficiently translated back to a Nash equilibrium of the initial game.

\subsection{Expressing the game as a circuit without division gates}
Here, given an arbitrary $d$-player game, we will employ a function presented in \cite{EY10} whose fixed points are precisely the Nash equilibria of that game.
Consider a given instance $I$ of the ``$d$-player Nash equilibrium'' problem, i.e. a $d$-player normal form game where each player $i$ has a set $S_i$ of pure strategies. We will use the following notation similar to the one in \cite{EY10}: $N_i := |S_i|$, $N := \sum_{i}^{d} N_{i}$ and $v_{i}$ is the payoff function of player $i$ with domain $D_{I} := \times_{i=1}^{d} \Delta_{N_{i}}$, where $\Delta_{N_{i}}$ is the unit $(N_{i}-1)$-simplex. Define the \textit{mixed strategy profile} $x:=(x_{11}, \dots, x_{1N_1}, x_{21}, \dots, x_{2N_2}, \dots, x_{d1}, \dots, x_{dN_d})$ to be a $N$-dimensional vector with the entry $x_{ij}$ being the probability that player $i \in [d]$ plays pure strategy $j \in S_i$. Also, $v(x)$ is an $N$-dimensional vector with entries indexed as in $x$, with $v_{ij}(x) := v_{i}(j, x_{-i})$, the latter being the expected payoff of player $i$ when she plays the pure strategy $j \in S_i$ against the partial profile $x_{-i}$ of the rest of the players. The payoff function of each player is normalized by scaling in $[0,1/N]$ so that the Nash equilibria of the game are precisely the same. Thus, $v_{ij}(x) \in [0,1/N]$. Finally, let $h(x) := x + v(x)$.

Now, define for each player $i$ the function $f_{i,x}(t) := \sum_{j\in S_i} \max(h_{ij}(x) - t, 0)$ with parameter $x$. This function is defined in $\mathbb{R}$ and it is continuous, piecewise linear, strictly decreasing with values from $0$ to $+\infty$, thus there is a unique value $t_i \in \mathbb{R}$ such that $f_{i,x}(t_i)=1$. The required function whose set of fixed points is identical to the set of Nash equilibria of instance $I$ is $G_{I}(x)_{ij} := \max(h_{ij}(x)-t_i, 0)$ for $i \in [d]$, $j \in S_i$. The function $G_{I}$ takes as input the $n$-dimensional vector $x$ and outputs an $N$-dimensional vector $G_{I}(x)$ with entries defined as above. By definition of $G_{I}$ and choice of $t_i$, it is $\sum_{j \in S_i} G_{I}(x)_{ij} = 1$ for every $i \in [d]$, and therefore $G_{I}$ is a mapping of the domain $D_I$ to itself.

\begin{lemma}[LEMMA 4.5, \cite{EY10}]\label{lem: EY1}
	The fixed points of the function $G_{I}$ are precisely the Nash equilibria of the game $I$.
\end{lemma}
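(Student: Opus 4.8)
Lemma 4.5 from \cite{EY10} asserts that the fixed points of $G_I$ coincide exactly with the Nash equilibria of the game $I$. My plan is to prove both inclusions of this equality by unwinding the definition of $G_I$ and invoking the characterization of Nash equilibria.

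First I would establish that every Nash equilibrium $x$ of $I$ is a fixed point of $G_I$. Fix a Nash equilibrium $x$ and a player $i$. Let $M_i := \max_{j \in S_i} v_{ij}(x)$ be the best-response payoff for player $i$ at $x$. By the characterization of Nash equilibria, $x_{ij} > 0$ implies $v_{ij}(x) = M_i$, and in general $v_{ij}(x) \le M_i$ for all $j$. Using $h_{ij}(x) = x_{ij} + v_{ij}(x)$ I would show that the choice $t_i = M_i$ makes $f_{i,x}(t_i) = \sum_{j \in S_i} \max(h_{ij}(x) - M_i, 0) = \sum_{j \in S_i} \max(x_{ij} + v_{ij}(x) - M_i, 0)$. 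For strategies $j$ in the support we get $\max(x_{ij} + v_{ij}(x) - M_i, 0) = \max(x_{ij}, 0) = x_{ij}$ since $v_{ij}(x) = M_i$; for strategies outside the support, $x_{ij} = 0$ and $v_{ij}(x) - M_i \le 0$, so the term is $0 = x_{ij}$. Hence $f_{i,x}(M_i) = \sum_j x_{ij} = 1$. Since $t_i$ is the \emph{unique} root of $f_{i,x}(\cdot) = 1$ (as $f_{i,x}$ is strictly decreasing on the relevant range), we conclude $t_i = M_i$, and therefore $G_I(x)_{ij} = \max(h_{ij}(x) - t_i, 0) = x_{ij}$ for every $i$ and $j$ by the same case analysis. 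Thus $G_I(x) = x$.

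Next I would prove the converse: every fixed point $x$ of $G_I$ is a Nash equilibrium. Suppose $G_I(x) = x$, i.e. $x_{ij} = \max(h_{ij}(x) - t_i, 0) = \max(x_{ij} + v_{ij}(x) - t_i, 0)$ for all $i, j$. For each $j$ this equation has two cases. If $x_{ij} > 0$, then the max is attained by its first argument, so $x_{ij} = x_{ij} + v_{ij}(x) - t_i$, giving $v_{ij}(x) = t_i$. If $x_{ij} = 0$, then $0 = \max(v_{ij}(x) - t_i, 0)$, which forces $v_{ij}(x) \le t_i$. So every strategy in the support of $x_i$ yields payoff exactly $t_i$, and every strategy outside the support yields payoff at most $t_i$; this is precisely the statement that $x_i$ is a best response to $x_{-i}$. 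Since this holds for all players $i$, $x$ is a Nash equilibrium. (One should also note that a fixed point $x$ of $G_I$ automatically lies in $D_I$, since by the choice of the $t_i$ we have $\sum_{j \in S_i} G_I(x)_{ij} = 1$ and $G_I(x)_{ij} \ge 0$.)

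I do not expect a serious obstacle here; this lemma is a routine verification. The only point requiring a little care is the use of uniqueness of $t_i$: one must confirm that $f_{i,x}$ is continuous, piecewise linear, strictly decreasing from $+\infty$ down to $0$ on the interval where it matters, so that the equation $f_{i,x}(t) = 1$ has exactly one solution — this justifies identifying the implicitly-defined $t_i$ with the explicit value $M_i$ in the forward direction. Strict monotonicity holds because at least one summand $\max(h_{ij}(x) - t, 0)$ is strictly decreasing in $t$ near any point where the sum equals $1 > 0$. With that in hand, both inclusions follow directly from the best-response characterization of Nash equilibria.
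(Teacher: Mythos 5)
Your proof is correct and complete. The paper itself does not reproduce a proof of this lemma — it defers to Etessami and Yannakakis \cite{EY10} — and your argument is essentially the standard one found there: in the forward direction you identify $t_i$ with the best-response value $M_i = \max_j v_{ij}(x)$ using the Nash support characterization, and in the converse direction you read off from the fixed-point equation that every strategy in the support of $x_i$ earns exactly $t_i$ while every strategy earns at most $t_i$. Your treatment of the uniqueness of $t_i$ is also appropriately careful: $f_{i,x}$ is constant (equal to $0$) for $t$ large, so it is not globally strictly decreasing, but it has slope at most $-1$ wherever it is positive, which is exactly what is needed to make the solution of $f_{i,x}(t) = 1$ unique. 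The only point you could make even more explicit is that the domain of $G_I$ is $D_I$ by construction, so any fixed point is automatically a mixed strategy profile; you do note this parenthetically and it is fine as written.
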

In fact, the structure of function $G_{I}$ allows for it to be efficiently constructed using only the required types of gates. 
\begin{lemma}[LEMMA 4.6, \cite{EY10}]
	We can construct in polynomial time a circuit with basis $\{ +, -, *, \max, \min \}$ (no division) and rational constants that computes the function $G_{I}$.
\end{lemma}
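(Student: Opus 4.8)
The goal is to show that the function $G_I(x)_{ij} = \max(h_{ij}(x) - t_i, 0)$, where $h(x) = x + v(x)$ and $t_i$ is the unique solution of $f_{i,x}(t_i) = 1$, can be computed by a polynomial-size arithmetic circuit using only the gates $\{+, -, *, \max, \min\}$ together with rational constants. The whole difficulty is that $t_i$ is defined \emph{implicitly} as the root of an equation, so it is not obvious a priori that it is computable by a division-free circuit at all. The key structural observation — which we would exploit following \cite{EY10} — is that $f_{i,x}(\cdot)$ is piecewise linear with breakpoints exactly at the values $h_{ij}(x)$, $j \in S_i$, and that on each linear piece the equation $f_{i,x}(t) = 1$ is a linear equation in $t$ whose coefficients are (sums of) the quantities $h_{ij}(x)$ and small integers, so $t_i$ is obtained by solving a linear equation, not by division by a circuit-computed quantity in a way that would require a $G_\div$ gate with variable denominator.

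\textbf{Step 1: build $h(x)$.} First I would construct a circuit for $v(x)$. Each $v_{ij}(x) = v_i(j, x_{-i})$ is a multilinear polynomial in the entries of $x_{-i}$ whose coefficients are the (rescaled, rational) payoff entries of the game, so it is a sum of at most $\prod_{i' \neq i} N_{i'}$ monomials, each a product of $d-1$ variables times a rational constant; this is polynomial in the size of the game's payoff table and is built from $\{+, -, *, \zeta\}$ gates directly. Then $h_{ij}(x) = x_{ij} + v_{ij}(x)$ is one more addition gate per coordinate.

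\textbf{Step 2: compute $t_i$ without division.} This is the heart of the matter. For a fixed $i$, sort the $N_i$ values $h_{i1}(x), \dots, h_{iN_i}(x)$; between consecutive sorted values the function $f_{i,x}(t) = \sum_{j} \max(h_{ij}(x) - t, 0)$ is linear with integer slope $-m$ (where $m$ is the number of $h_{ij}$ exceeding $t$) and intercept $\sum_{j: h_{ij} > t} h_{ij}$. One can express $t_i$ as a $\max/\min$-combination over candidate solutions, one per linear piece: on the piece where exactly the top $m$ breakpoints are active, the candidate root is $t = \frac{1}{m}\left(\sum_{\text{top } m} h_{ij} - 1\right)$ — but division by the \emph{constant} $m \in \{1, \dots, N_i\}$ is allowed since $*\zeta$ with $\zeta = 1/m \in \mathbb{Q}$ is a legal gate. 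Then $t_i$ is recovered as the appropriate order statistic / median-like selection among these finitely many candidates, which is a fixed $\max$/$\min$ formula (this is exactly the kind of selection argument used to show medians and sorting networks are computable with $\max,\min$). I would lay out precisely which candidate is selected — essentially, $t_i = \max_m \min(\text{candidate}_m, \text{validity bounds from the breakpoints})$ — mirroring Lemma 4.6 of \cite{EY10}. The circuit has size polynomial in $N_i$ because there are only $N_i$ pieces.

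\textbf{Step 3: assemble $G_I$.} Given circuits for $h_{ij}(x)$ and $t_i$, set $G_I(x)_{ij} = \max(h_{ij}(x) - t_i, 0)$ with one subtraction and one $\max$ gate each. Total size is polynomial in the input. Correctness — that this circuit computes the function $G_I$ whose fixed points are the Nash equilibria — is immediate from Lemma~\ref{lem: EY1} once Step 2 is verified, since we are computing the same function.

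\textbf{Main obstacle.} The only real work is Step 2: writing $t_i$ as an explicit $\{+,-,*\zeta,\max,\min\}$-formula and proving it equals the true root. The subtlety is handling ties among the $h_{ij}(x)$ and the boundary cases where the root falls exactly at a breakpoint, and arguing that the selection formula picks the correct piece uniformly over all $x \in D_I$ (including degenerate $x$). Everything else is a routine circuit assembly. Since the statement is quoted verbatim as Lemma 4.6 of Etessami–Yannakakis, the cleanest route is simply to invoke their construction; a self-contained proof would reproduce the piecewise-linear root-extraction argument sketched above.
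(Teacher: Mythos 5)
Your proposal follows essentially the same construction the paper attributes to Etessami--Yannakakis (compute $h(x)=x+v(x)$, sort each block $y_i$ with a $\max/\min$ sorting network, select $t_i$ via a $\max$ over candidate roots with division only by the constants $1/l$, then output $\max(y_{ij}-t_i,0)$); note that the paper itself does not reprove this lemma but cites \cite{EY10} and only summarizes the circuit. The one small inaccuracy in your sketch is the selection formula for $t_i$: after sorting to get $z_{i1}\ge\cdots\ge z_{iN_i}$, the correct expression is simply $t_i=\max_{l\in[N_i]}\frac{1}{l}\bigl(\sum_{j=1}^{l} z_{ij}-1\bigr)$, with no inner $\min$ against ``validity bounds'' needed.
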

For the proofs of the above lemmata the reader is referred to the indicated work by Etessami and Yannakakis. 

In the proof of the latter lemma in \cite{EY10} it is shown how to construct an arithmetic circuit $C_{I}$ that computes the function $G_{I}$ using only gates of type $G_{\zeta}, G_{+}, G_{-}, G_{*}, G_{\max}, G_{\min}$, where $\zeta \in \mathbb{Q}$. The construction of $C_{I}$ is the following: Compute the function $y=h(x)=x + v(x)$ using only $G_{+}, G_{*}$ type of gates, allowed by the definition of $v(x)$. Vector $y$ has $d$ sub-vectors, where $y_{i}=(y_{i1}, y_{i2}, \dots, y_{iN_i})$. Then, each $y_i$ is sorted using a sorting network $Z_i$ thus creating a vector $z_{i}=(z_{i1}, z_{i2}, \dots, z_{iN_i})$ with sorted entries $z_{i1} \geq z_{i2} \geq \dots \geq z_{iN_i}$; sorting networks can be implemented in arithmetic circuits using only gates $G_{\max}, G_{\min}$ (for more see e.g. \cite{K98}). Using $z_{ij}$'s the function $t_{i}:= \max_{l \in [N_i]} \left\{ (1/l) * \left( \left( \sum_{j=1}^{l} z_{ij} \right) - 1 \right) \right\}$ is computed and the final output of the whole circuit is 
\begin{align}\label{output_C_{I}}
	x_{ij}' := \max \{ y_{ij} - t_i , 0 \} \quad \text{ for each } i \in [d], j \in S_{i}.
\end{align}

\subsection{A circuit with gates whose inputs/outputs are in $[0,1]$}
One can easily observe that some of the gates of circuit $C_{I}$ may have inputs and outputs outside of $[0,1]$. For example, the $G_{+}$ gate that computes $y_{ij}=x_{ij} + v(x)_{ij}$ can be 2 and the arguments of $G_{\max}$ in $t_i$ can be negative. We will transform this circuit into an equivalent one that guarantees its gates' inputs and outputs to be in $[0,1]$, using only gates $G_{\zeta}, G_{+}, G_{-}^{[0,1]}, G_{*}, G_{*2}^{[0,1]}, G_{\max}, G_{\min}$, where $\zeta \in \mathbb{Q}\cap(0,1]$.

In particular, instead of constructing the circuit $C_{I}$ as described in the previous paragraph, we will construct an equivalent one, called $C_{I}'$, whose input and output are the same as that of $C_{I}$, namely $x_{ij}$ and $x_{ij}'$, $i \in [d]$, $j \in [N_i]$ respectively, but its gates have inputs/outputs in $[0,1]$. We do this by manipulating the formula for the required function $G_I$ under computation, by suitably scaling up or down the input values of each gate, using additional gates $G_{\zeta}, G_{+}, G_{-}^{[0,1]} , G_{*}$. 

We construct $C_{I}'$ as follows: First, we compute the vector $p := h(x)/2 = x*\frac{1}{2} + v(x)*\frac{1}{2}$ using only $G_{+}, G_{*}$ gates. Note that $x_{ij}, v_{ij}(x), p_{ij} \in [0,1]$, $\forall i\in[d],j \in S_i$ (recall that the payoff function is normalized in $[0,1]$). Then, we sort each of the sub-vectors $p_{i}$, $i \in [d]$ via a sorting network $Q_i$ that can be constructed using $G_{\max}$ and $G_{\min}$ gates, thus computing the sorted vectors $q_{i} = (q_{i1}, q_{i2}, \dots, q_{iN_i})$ with sorted entries $q_{i1} \geq q_{i2} \geq \dots \geq q_{iN_i}$. Now, for every $i \in [d]$ and $l \in [N_i]$ we compute the following sub-function 
\begin{align*}
	t_{il}'' :=   \frac{1}{2} * \frac{1}{l} * \sum_{j=1}^{l} q_{ij}  + \frac{1}{2} - \frac{1}{4} * \frac{1}{l},
\end{align*} 
by using $l+1$ $G_{+}$ gates, 3 $G_{+}$ gates and 1 $G_{-}^{[0,1]}$ gate, where the subtraction gate is the last to take place. One should observe that since $\sum_{j=1}^{N_i} x_{ij} = 1$ and $\sum_{j=1}^{N_i} v_{ij}(x) \leq 1$ (by definition of payoff function in $[0,1/N]$), it is $\sum_{j=1}^{N_i} q_{ij}  \leq \frac{1}{2} \cdot \left( 1 + 1 \right) = 1$, therefore none of the individual computations of $t_{il}''$ is outside $[0,1]$.
Moreover, in the subtraction, the value of the subtrahend is at most the value of the minuend so the subtraction is precise (not capped at 0).

Now, for each $i \in [d]$ we compute the sub-function 
\begin{align*}
	t_{i}'' := \max_{l \in [N_i]} \{ t_{il}'' \},
\end{align*} 
by using $N_i - 1$ $G_{\max}$ gates, and consequently compute
\begin{align*}
	t_{i}' := \left( t_{i}'' - \frac{1}{2} \right) * 2,
\end{align*} 
by using one $G_{-}^{[0,1]}$ and one special $G_{*2}^{[0,1]}$ gate where the computations happen from left to right. Note that $t_{i}'' \geq 1/2$, therefore the subtraction is precise (not capped at 0). Also, note that, by definition of $t_{il}''$, it is $t_{i}'' \leq 1$, therefore $t_{i}'' - 1/2 \leq 1/2$ and the output of the $G_{*2}^{[0,1]}$ gate of $t_{i}'$ is in $[0,1]$. Finally, the output of the circuit $C_{I}'$ is computed by
\begin{align}\label{output_C_{I}'}
	x_{ij}' := \max\{ p_{ij} - t_i', 0 \} * 2, \quad \text{ for each } i \in [d], j \in S_i,
\end{align} 
using one $G_{-}^{[0,1]}$ and one special $G_{*2}^{[0,1]}$ gate.

\begin{lemma}\label{lem:C'=C}
	Circuit $C_{I}'$ is equivalent to $C_{I}$, i.e. it computes the function $G_{I}$.
\end{lemma}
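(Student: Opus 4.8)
The plan is to establish the equivalence of $C_I'$ and $C_I$ by tracking, gate-by-gate (or rather, stage-by-stage), exactly how the intermediate quantities of $C_I'$ relate to those of $C_I$ via explicit affine rescalings, and to verify that each such rescaling is correctly undone at the right moment so that the final output $x_{ij}'$ agrees with \eqref{output_C_{I}}. Concretely, I would introduce the "dictionary" between the two circuits: where $C_I$ computes $y = h(x) = x + v(x)$, $C_I'$ computes $p = h(x)/2$, so $p_{ij} = y_{ij}/2$; where $C_I$ sorts $y_i$ into $z_i$, $C_I'$ sorts $p_i$ into $q_i$, and since sorting commutes with multiplication by the positive scalar $1/2$ we get $q_{ij} = z_{ij}/2$ for every $i,j$; and so on down the line. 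The goal of each step is simply to confirm that the claimed identity propagates.

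**First I would** handle the threshold computation. In $C_I$ one has $t_i = \max_{l \in [N_i]} \frac{1}{l}\bigl(\sum_{j=1}^l z_{ij} - 1\bigr)$. Substituting $z_{ij} = 2 q_{ij}$ gives $t_i = \max_l \frac{1}{l}\bigl(2\sum_{j=1}^l q_{ij} - 1\bigr) = \max_l \bigl(\frac{2}{l}\sum_{j=1}^l q_{ij} - \frac{1}{l}\bigr) = 2\max_l\bigl(\frac{1}{l}\sum_{j=1}^l q_{ij} - \frac{1}{2l}\bigr)$. Comparing with the definition $t_{il}'' = \frac12\cdot\frac1l\sum_{j=1}^l q_{ij} + \frac12 - \frac14\cdot\frac1l$, one sees that $t_{il}'' = \frac12\bigl(\frac1l\sum_{j=1}^l q_{ij} - \frac{1}{2l}\bigr) + \frac12$, so $\frac1l\sum_{j=1}^l q_{ij} - \frac{1}{2l} = 2(t_{il}'' - \frac12)$. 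Since $\max$ commutes with the increasing affine map $s \mapsto \frac12 s + \frac12$, we get $t_i'' = \max_l t_{il}'' = \frac12\max_l\bigl(\frac1l\sum q_{ij} - \frac{1}{2l}\bigr) + \frac12 = \frac14 t_i + \frac12$, hence $t_i' = (t_i'' - \frac12)\cdot 2 = \frac12 t_i$. So $C_I'$'s threshold $t_i'$ is exactly half of $C_I$'s threshold $t_i$ — which is precisely what is needed, since $C_I'$ is working with the halved quantities $p_{ij} = y_{ij}/2$.

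**Then I would** close the argument at the output: $C_I'$ computes $x_{ij}' = \max\{p_{ij} - t_i', 0\}\cdot 2 = \max\{y_{ij}/2 - t_i/2, 0\}\cdot 2 = \max\{y_{ij} - t_i, 0\}$, using again that $\max\{\cdot,0\}$ is positively homogeneous, and this is exactly \eqref{output_C_{I}}. Throughout, I would also need to discharge the side conditions that make the rewritten circuit \emph{legal} as a special circuit — namely that the two inputs of every $G_+$ gate lie in $[0,1/2]$, that every $G_{*\zeta}$ has $\zeta\in(0,1]$, that every $G_{-}^{[0,1]}$ is applied with minuend $\ge$ subtrahend (so no spurious capping at $0$ occurs), and that the $G_{*2}^{[0,1]}$ inputs lie in $[0,1/2]$. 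Most of these are already asserted in the construction text (e.g. $t_i''\ge 1/2$ so the subtraction in $t_i'$ is precise; $t_i''\le 1$ so $t_i''-1/2\le 1/2$ feeds $G_{*2}^{[0,1]}$ legally; $\sum_j q_{ij}\le 1$ so each $t_{il}''\in[0,1]$), and I would simply collect and verify them, inserting downscale/upscale pairs around additions as indicated by the identity $a+b=(a/2+b/2)\cdot 2$.

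**The main obstacle** I anticipate is not any single hard inequality but the sheer bookkeeping: making sure the affine correspondence is stated once and then applied uniformly, and in particular being careful that the "$-1$" inside the $C_I$ threshold, after halving, becomes the "$-\frac{1}{2l}$" term and that the additive constant $\frac12$ introduced to keep $t_{il}''$ nonnegative is exactly cancelled by the final $(t_i''-\frac12)\cdot 2$ step. A secondary subtlety is confirming that the payoff normalization into $[0,1/N]$ really does give $\sum_j v_{ij}(x)\le 1$ and hence $\sum_j q_{ij}\le 1$, which is what guarantees all the intermediate values stay in $[0,1]$; this is where the $[0,1/N]$ scaling (rather than $[0,1]$) is essential, and I would flag it explicitly. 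Once the correspondence and the range bounds are in hand, Lemma~\ref{lem: EY1} immediately transfers, so the fixed points of (the function computed by) $C_I'$ are the Nash equilibria of $I$.
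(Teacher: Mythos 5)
Your proposal is correct and takes essentially the same approach as the paper: both verify the algebraic correspondence between $C_I'$ and $C_I$ by tracking how each intermediate quantity of $C_I'$ is an affine rescaling of the corresponding quantity in $C_I$ (with $p_{ij}=y_{ij}/2$, $q_{ij}=z_{ij}/2$, $t_i'=t_i/2$, and $\max$ commuting with the relevant monotone affine maps), concluding that $x_{ij}' = \max\{y_{ij}-t_i,0\}$. The only difference is presentational — the paper does a single backward substitution chain starting from the output formula, whereas you propagate the dictionary forward through the circuit and make the sorting-commutes-with-positive-scaling observation explicit — but the underlying computation is identical.
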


\begin{proof}
	We will show that for every $i \in [d], j \in S_i$, the value $x_{ij}$ of (\ref{output_C_{I}'}) is the same as that of (\ref{output_C_{I}}), i.e. the output of the circuits $C_{I}'$ and $C_{I}$ is the exact same. Using the formulas for $t_{il}'', t_{i}''$ and $t_{i}'$, we can re-write algebraically $x_{ij}$ by substituting the circuit's operations with the regular mathematical ones, i.e. $G_{+}, G_{-}^{[0,1]}, G_{*2}^{[0,1]}, G_{*}, G_{\max}, G_{\min}$ translate to $+, -, \cdot 2, \cdot, \max, \min$ respectively. Observe that this is possible since the $G_{-}^{[0,1]}$ gate, excluding the one in (\ref{output_C_{I}'}), actually performs subtraction without capping the output to 0. Thus, starting from (\ref{output_C_{I}'}) we have
	\begin{align*}
		x_{ij}' &= \max\{ p_{ij} - t_i', 0 \} \cdot 2 \\
		&= \max\{2 \cdot p_{ij} - 2 \cdot t_i', 0 \} \\
		&= \max\left\{y_{ij} - 4 \cdot \left( t_{i}'' - \frac{1}{2} \right), 0 \right\}   \quad \text{($y_{ij}$ from construction of $C_{I}$)} \\
		&= \max\left\{y_{ij} - 4 \cdot \left( \max_{l \in [N_i]} \{ t_{il}'' \} - \frac{1}{2} \right), 0 \right\}  \\
		&= \max\left\{y_{ij} - 4 \cdot \left( \max_{l \in [N_i]} \left\{ \frac{1}{2l} \cdot \left( \sum_{j=1}^{l} q_{ij} \right) + \frac{1}{2} - \frac{1}{4l}  \right\} - \frac{1}{2} \right), 0 \right\}  \\
		&= \max\left\{y_{ij} - 4 \cdot  \max_{l \in [N_i]} \left\{ \frac{1}{2l} \cdot \left( \sum_{j=1}^{l} q_{ij} \right) - \frac{1}{4l}  \right\} , 0 \right\}  \\
		&= \max\left\{y_{ij} -  \max_{l \in [N_i]} \left\{ \frac{1}{l} \cdot \left( \sum_{j=1}^{l} 2 \cdot q_{ij} \right) - \frac{1}{l}  \right\} , 0 \right\}  \\
		&= \max\left\{y_{ij} -  \max_{l \in [N_i]} \left\{ \frac{1}{l} \cdot \left( \left( \sum_{j=1}^{l} z_{ij} \right) - 1 \right) \right\} , 0 \right\}   \quad \text{($z_{ij}$ from construction of $C_{I}$)} \\
		&= \max\left\{y_{ij} -  t_i , 0 \right\}   \quad \text{($t_i$ from construction of $C_{I}$)},
	\end{align*} 
	which is by definition equal to the output $x_{ij}'$ of (\ref{output_C_{I}}).     
\end{proof}

The circuit $C_{I}'$ we constructed that computes the function $G_{I}$ uses gates of type in the set $\{G_{\zeta}, G_{+}, G_{*}, G_{\max}, G_{\min}, G_{-}^{[0,1]}, G_{*2}^{[0,1]} \}$, where $\zeta \in \mathbb{Q}\cap(0,1]$.

%
%
%
%
%
%

\subsection{The $(n,n)$-\textsc{Consensus Halving} instance}
At this point we are ready to construct the $(n,n)$-\textsc{Consensus Halving} instance. The final circuit $C_{I}'$ computes the function $G_{I}$, where $G_{I}: D_{I} \to D_{I}$, whose fixed points are precisely the Nash equilibria of the initial instance $I$ of the $d$-player game, due to Lemma \ref{lem: EY1}. The output of $C_{I}'$ is the $N$-dimensional vector $x'$ with entries $x_{ij}'$ computed from (\ref{output_C_{I}'}). Let us close the circuit by connecting the output $x_{ij}'$ with the input $x_{ij}$ for every $i \in [d]$, $j \in S_i$. This new circuit, called $C_{I}^{o}$, is \emph{cyclic}, meaning that it has no input and no output.

The cyclic circuit $C_{I}^{o}$ (like $C_{I}'$) uses only gates in $\{G_{\zeta}, G_{+}, G_{*}, G_{\max}, G_{\min}, G_{-}^{[0,1]}, G_{*2}^{[0,1]} \}$, where $\zeta \in \mathbb{Q}\cap(0,1]$. In Section \ref{subsec: cir-to-ch} we describe how to turn such circuits into \textsc{Consensus Halving} instances. Suppose that $C_{I}^{o}$ uses $l$ gates. Then, by the procedure of Section \ref{subsec: cir-to-ch} let us turn $C_{I}^{o}$ into a special circuit $C_{I}^{o'}$ with $r = linear(l)$ gates which uses only the required gates by Lemma \ref{lem:circuit-embed}. Finally, still following that procedure, let us turn $C_{I}^{o'}$ into a \CH instance with $n := 4r$ agents. 

%

We can now prove Theorem~\ref{thm:CH_is_FIXP-hard}.


\begin{proof} 
	%
	%
	%
	%
	%
	%
	%
	In Section \ref{app:circuit-embed} it was proven that a solution to the above $(n,n)$-\CH instance, i.e. a solution with $n$ cuts, in linear time can be translated back to a tuple $z^* := (z_1^*, z_2^*, \dots, z_r^*)$ of satisfying values for the nodes of $C_{I}^{o'}$. Recall that $C_{I}^{o'}$ was created by another cyclic equivalent circuit $C_{I}^{o}$ which was also created by merging the input and output nodes of an acyclic circuit $C_{I}'$. 
	
	Let us denote by $v_1, v_2, \dots, v_N$ and $v'_1, v'_2, \dots, v'_N$ the input and output nodes respectively of $C_{I}'$ and denote by $V_1, V_2, \dots, V_N$ the merged nodes in $C_{I}^{o}$ and $C_{I}^{o'}$. Let us denote by $x^* := (x_1^*, x_2^*, \dots, x_N^*)$ the $N$ entries of $z^*$ that correspond to the values of nodes $(V_1, V_2, \dots, V_N)$. Since the procedure in Section \ref{subsec: cir-to-ch} which turns $C_{I}^{o}$ into $C_{I}^{o'}$ preserves the computation of the values of $V_1, V_2, \dots, V_N$, it follows that $x^*$ satisfies $C_{I}^{o}$. Consequently, if the values $x^*$ are copied as values of both input $(v_1, v_2, \dots, v_N)$ and output $(v'_1, v'_2, \dots, v'_N)$ nodes of $C_{I}'$ then $C_{I}'$ is satisfied, since these nodes of $C_{I}'$ compute the same values as those that  $V_1, V_2, \dots, V_N$ compute in $C_{I}^{o}$. 
	
	As it was shown in Lemma \ref{lem:C'=C}, the output of $C_{I}'$ computes the same output as $C_{I}$, which computes the function $G_{I}$. Thus, for $x^*$ it holds that $G_{I}(x^*) = x^*$, i.e. it is a fixed point of $G_{I}$. Recall now that the fixed points of $G_{I}$ are precisely the Nash equilibria of instance $I$ of the initial ``$d$-player Nash equilibrium'' problem. Since, due to \cite{EY10}, ``$d$-player Nash equilibrium'' is \FIXP-complete, it follows that $(n,n)$-\CH is \FIXP-hard.    
\end{proof}

\section{Proof of Lemma \ref{lem:Feas_is_etr-c}}\label{app:Feas_is_etr-c}

Let us define the 
constrained version of \etr, denoted $\betr$, where the polynomials
are over $[0,1]^n$. It is easy to see that $\betr \subseteq \etr$; an arbitrary \betr instance $\exists (X_1, \dots, X_m) \in [0,1]^m \cdot \Phi$, where $\Phi$ is the \betr formula, can be written as the following \etr instance $\exists (X_1, \dots, X_m) \in \reals^m \cdot \Phi \bigwedge_{i=1}^{m} \left( \left(X_{i} \geq 0 \right) \wedge \left(X_{i} \leq 1 \right) \right)$.

Lemma 3.9 of \cite{scha13} proves that the problem of deciding whether a family of polynomials $p_i : \reals^n \to \reals$, $i \in [k]$ has a common root in the $n$-dimensional unit-ball (with center $0^n$ and radius 1) is \etr-complete. Furthermore, this holds even when all $p_i$'s have maximum sum of variable exponents in each monomial equal to 2. Since the $n$-dimensional unit-ball is inscribed in the $n$-dimensional unit-cube, the aforementioned result implies that \bconj is \etr-hard, therefore $\etr \subseteq \betr$. Consequently, we get the following.
\begin{theorem}\label{thm:betr-etr}
	$\betr = \etr$.
\end{theorem}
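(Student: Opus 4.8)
The plan is to prove the two inclusions $\betr \subseteq \etr$ and $\etr \subseteq \betr$ separately. The inclusion $\betr \subseteq \etr$ is the easy direction, already indicated above: a \betr instance $\exists (X_1,\dots,X_m)\in[0,1]^m \cdot \Phi$ is turned into the equivalent \etr instance over $\reals^m$ obtained by conjoining $\Phi$ with the box constraints $\bigwedge_{i=1}^m\big((X_i\ge 0)\wedge(X_i\le 1)\big)$; this transformation is polynomial-time, preserves the solution set, and the map sending a solution back to the original instance is the identity. The substance is therefore the reverse inclusion, and for this it suffices to exhibit a single \etr-hard problem that is itself a \betr-type decision problem; I take that problem to be \bconj.

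To show \bconj is \etr-hard I would start from Lemma 3.9 of \cite{scha13}, which says that deciding whether a finite family of integer-coefficient polynomials $p_1,\dots,p_k:\reals^n\to\reals$ has a common root in the closed Euclidean unit ball $\{x:\sum_i x_i^2\le 1\}$ is \etr-complete (and remains so when every $p_i$ has degree at most $2$). The unit ball is inscribed in the cube $[-1,1]^n$, which is the affine image of $[0,1]^n$ under $x_i=2y_i-1$. So, given $(p_i)_{i\le k}$, I would build the \bconj instance over variables $(y_1,\dots,y_n,s)\in[0,1]^{n+1}$ whose polynomials are $\tilde p_i(y):=p_i(2y_1-1,\dots,2y_n-1)$ for $i=1,\dots,k$, together with the single extra polynomial
\[
 g(y,s):=\Big(\sum_{i=1}^n(2y_i-1)^2\Big)+s-1 .
\]
For the equivalence: if $x^*$ is a common root of the $p_i$ in the ball, then $y^*_i:=(x^*_i+1)/2\in[0,1]$ (the ball lies in $[-1,1]^n$) and $s^*:=1-\sum_i(x^*_i)^2\in[0,1]$ (since $0\le\sum_i(x^*_i)^2\le 1$) give a common zero of $\{\tilde p_1,\dots,\tilde p_k,g\}$; conversely, a common zero $(y^*,s^*)\in[0,1]^{n+1}$ yields $x^*_i:=2y^*_i-1\in[-1,1]$ with $\sum_i(x^*_i)^2=1-s^*\le 1$, hence $x^*$ is in the ball and $p_i(x^*)=\tilde p_i(y^*)=0$. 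Because the substitution $x_i\mapsto 2y_i-1$ and the polynomial $g$ have integer coefficients and degree at most $2$, the reduction is polynomial-time and keeps \bconj within the permitted format. Since \bconj is \etr-hard and \bconj is a \betr decision problem, every \etr problem reduces to a \betr instance, i.e.\ $\etr\subseteq\betr$; combined with $\betr\subseteq\etr$ this gives $\betr=\etr$.

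The point needing the most care is the replacement of the ``$x$ lies in the ball'' \emph{inequality} by a polynomial \emph{equation} in $[0,1]$-valued variables, which is exactly the role of the slack variable $s$: one must verify that $g(y,s)=0$ with $s\in[0,1]$ holds precisely when $(2y_1-1,\dots,2y_n-1)$ is in the ball — the bound $s\le 1$ being automatic from $\sum_i(2y_i-1)^2\ge 0$, and $s\ge 0$ being equivalent to $\sum_i(2y_i-1)^2\le 1$ — and that the affine rescaling from $[-1,1]^n$ to $[0,1]^n$ neither breaks the degree bound inherited from \cite{scha13} nor destroys integrality of the coefficients. Everything else, in particular the back-translation of solutions required by the reduction notion of Section~\ref{sec:ar_cir_red}, is a routine affine map. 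This theorem is then the entry point for reducing \betr (equivalently \bconj) further to the single-polynomial problem \bfeas of Lemma~\ref{lem:Feas_is_etr-c}.
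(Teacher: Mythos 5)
Your proof is correct and follows the same high-level route as the paper: both directions reduce to showing $\etr \subseteq \betr$, and both invoke Lemma~3.9 of \cite{scha13} (common roots of degree-$2$ integer polynomials in the unit ball) as the source of \etr-hardness, followed by an affine map of the ball's ambient cube onto $[0,1]^n$. Where you diverge is in handling the ball membership: the paper's proof simply remarks that the unit ball is ``inscribed in the unit cube'' and declares \bconj \etr-hard, which — read literally as ``restrict the same polynomials to the cube'' — would admit false positives, since a common root in $[-1,1]^n$ need not lie in the ball. You close that gap explicitly with the slack variable $s$ and the auxiliary equation $g(y,s)=\sum_i(2y_i-1)^2+s-1=0$, which turns the ball inequality into an equation over $[0,1]^{n+1}$ while keeping integer coefficients and the degree bound. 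This is the right fix if Schaefer's lemma asserts hardness with an explicit ball constraint; it is also harmless (merely redundant) if the lemma is a promise that any common root already lies in the ball, in which case the paper's shorter argument would be literally correct. Either way, your version is strictly more self-contained: it does not lean on which of these two readings of Lemma~3.9 one adopts, and the back-translation of witnesses is the transparent affine map $(y,s)\mapsto 2y-1$. The only thing worth flagging is that adding a variable and an equation must be checked against any downstream degree/arity constraints (as you already do for the degree-$4$ bound in Lemma~\ref{lem:Feas_is_etr-c}); your $g$ has degree $2$, so nothing breaks.
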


Now we will prove that \bfeas is \etr-hard by reducing \bconj to it. We do this by a standard way of turning a conjunction of polynomials into a single polynomial, so that all zeros of the conjunction are exactly the same as the zeros of the single polynomial. Consider an instance of \bconj. Let us define the function $q = (p_1)^2 + (p_2)^2 + \dots + (p_k)^2$, which has again domain $[0,1]^n$. The instance of \bfeas with function $q$ has exactly the same solutions as these of the \bconj instance. Therefore \bfeas is \etr-complete, even when $q$ has maximum sum of variable exponents in each monomial equal to 4.



\section{Proof of Theorem \ref{thm:CH_is_etr-c}}\label{app:CH_is_etr-c}

As we show in Theorem \ref{thm:CHinETR}, $(n,k)$-\CH is in \etr. In this section we prove that $(n,n-1)$-\CH is \etr-hard, implying that it is complete for \etr. This complements the results of \cite{FFGZ18}, where it was established that $(n,n-1)$-\CH is \NP-hard even when a solution is required to be $1/poly(n)$-approximately correct, i.e.
it allows $|F_i(A_+)- F_i(A_-)| \leq \eps$ for every agent $i$, where $\eps = 1/poly(n)$.

We present a polynomial-time reduction from the \etr-complete problem \bfeas to $(n,n-1)$-\CH.
Suppose we are asked to decide an arbitrary instance of $\textsc{Feasible}_{[0,1]}$, i.e. the existential sentence 
\begin{align}\label{Feasible1}
	\left( \exists X \in [0,1]^N \right) (p(X) = 0),
\end{align}
where $X := (X_1, \dots, X_N) \in [0,1]^N$ and $p$, is a polynomial function of $X_1, \dots, X_N$ written in the standard form (a sum of monomials with integer coefficients).
Consider the integer coefficients $C_1, \dots, C_l$ of $p$, where the number of terms of the polynomial is $l$. We consider all of the coefficients to be positive, where some of them may be preceded by a ``$+$'' or a ``$-$''. Also, let us normalize the coefficients and create new ones $c_1,\dots, c_l$, where
\begin{align*}
	c_j := \frac{C_j}{l \cdot C_{max}}, \quad j \in [l],
\end{align*}
where $C_{max}:= \max_{j} C_j$.
Note that our new polynomial $q(X)$ which uses the new coefficients has exactly the same roots as $p(X)$. Also, note that $c_j \in (0,\frac{1}{l}]$ for every $j \in [l]$, a fact that will play an important role at the last steps of our reduction.

Now, let us split polynomial $q$ into two polynomials $q_1$ and $q_2$, such that 
\begin{align*}
	q(X):= q_{1}(X) - q_{2}(X),
\end{align*}
and both $q_1$ and $q_2$ are sums of \textit{positive} terms; $l_1$ and $l_2$ terms of $q_1$ and $q_2$ respectively, where $l = l_1 + l_2$. In particular, 
\begin{align*}
	q_{1}(X) &:= \sum_{j=1}^{l_1} r_j(X), \\
	q_{2}(X) &:= \sum_{j=l_1 + 1}^{l_2} r_j(X),
\end{align*}
where $r_j(X):= c_j \cdot X_{1}^{d_{1j}} \cdot \dots \cdot X_{N}^{d_{Nj}}$ is the term $j \in [l]$ and $d_{ij}$ is the exponent of variable $X_i$, $i \in [N]$, in the $j$-th term.
Eventually, the existential sentence, equivalent to (\ref{Feasible1}), that we ask to decide is
\begin{align*}
	\left( \exists X \in [0,1]^N \right) (q_{1}(X) = q_{2}(X)).
\end{align*}

Let us construct the algebraic circuit that takes as input the tuple $X$ and computes the value of $q_{1}(X)$. This circuit needs only to use gates in $\{G_{\zeta}, G_{+}, G_{*\zeta}, G_{*}, G_{()^2}\}$, where $\zeta \in \mathbb{Q} \cap (0,1]$. To see why, observe that since every $X_{i} \in [0,1]$, $i \in [N]$, any multiplication between them by a $G_{*}$ gate is done properly (the gate's inputs/output are in $[0,1]$), and obviously the same holds for $G_{()^2}$. Also, note that due to our downscaled coefficients $c_j$, it is $c_j \leq 1/2$ for every $j$, and also
\begin{align}\label{ub_of_sum}
	\sum_{j=1}^{l_1} r_j(X) \leq l_1/l \leq 1.
\end{align}   
Therefore, we guarantee that any of the $l_1 - 1$ additions of the terms $r_j$ of $q_1$ by a $G_{+}$ gate is done properly, (inputs in $[0,1/2]$ and output in $[0,1]$).  Similarly, we construct a circuit that computes $q_2$.

At this point we are ready to prove Theorem \ref{thm:CH_is_etr-c}.

\begin{proof}
	
	Let us construct a $(n,n-1)$-\CH instance, where $n$ is to be defined later. 
	In Section \ref{subsec: cir-to-ch} we have shown how to construct an equivalent circuit to the one that computes $q_1, q_2$, called ``special circuit'', that
	\begin{itemize}
		\item uses only gates $G_{\zeta}, G_{+}, G_{*\zeta}, G_{()^2}, G_{-}^{[0,1]}, G_{*2}^{[0,1]}$,
		\item every $G_{\zeta}$ and $G_{* \zeta}$ has $\zeta \in \mathbb{Q} \cap (0, 1]$,
		\item for every input $x \in [0, 1]^N$, all intermediate values computed by the
		circuit lie in $[0, 1]$.
	\end{itemize}
	For the constraints of the above types of gates, see Tables \ref{tbl:gates}, \ref{tbl:special_gates}.
	
	Let the number of gates in that special circuit be $r := poly(N)$. Consider the last two nodes of the special circuit whose outgoing edges are $q_1$ and $q_2$ respectively. Without loss of generality, we name them $v_{r-1}$ and $v_{r}$ (see Figure \ref{pl-n}).
	
	\begin{figure}
		\centering
		\includegraphics[width=0.35\linewidth]{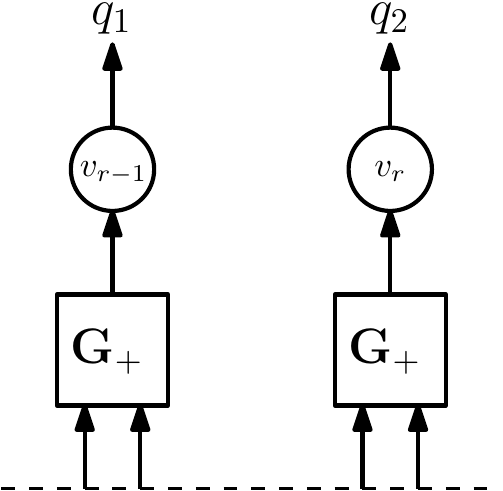}
		\caption{The last two nodes of the special circuit.}\label{pl-n}
	\end{figure}

	By Lemma \ref{lem:circuit-embed} and the construction described in its proof (Section \ref{app:circuit-embed}), we embed the special circuit in a \CH instance. This instance now consists of $4r$ agents, since to each node $i \in [r]$ correspond 4 agents: $ad_{i}, mid_{i}, cen_{i}$ and $ex_i$ with valuation functions described by \eqref{val-functions}.
	
	According to the embedding described in Section \ref{app:circuit-embed}, a tuple $(z_{1}^*, \dots, z_{r}^*)$ of values that satisfies the special circuit, corresponds to a $(4r,4r)$-\CH solution, i.e. a tuple $(t_{1}^*, \dots, t_{4r}^*)$ with $0 \leq t_{1}^* \leq \dots \leq t_{4r}^* \leq 12r$, of the \CH instance we constructed, and vice versa. As shown in detail in Section \ref{app:circuit-embed}, every value $z_{i}^*$ in a solution can be translated to 4 cuts $t_{4i-3}^*, t_{4i-2}^*, t_{4i-1}^*, t_{4i}^*$ in the \CH solution by the transformation in Section \ref{par:cir-to-ch}. Conversely, a 4-tuple $(t_{4i-3}^*, t_{4i-2}^*, t_{4i-1}^*, t_{4i}^*)$ of cuts in a \CH solution can be translated to a single value $z_{i}^*$ by the simple transformation $z_{i}^* = t_{4i}^* - v_{i,l}^{+}$ in Section \ref{par:ch-to-cir}.
	
	Let us now introduce a $(4r + 1)$-st additional agent, named \textit{finis} (from the Latin word for ``end'') who does not correspond to any node. The valuation function of this agent is non-zero only in the intervals $v_{r-1}^{+}$ and $v_{r}^{-}$ and, in particular is the following,
	\begin{align}\label{finis}
		finis(t) &= \begin{cases}
			1 , \quad &t \in v_{r-1}^{+} \cup v_{r}^{-} \\
			0 , \quad &\text{otherwise}.
		\end{cases}
	\end{align}
	Eventually, the number of agents in the embedding is $n := 4r + 1$.
	
	We will show that the answer to the arbitrary \bfeas instance \eqref{Feasible1} is ``yes'', if and only if the answer to the $(n,n-1)-\CH$ problem is ``yes'', i.e. there exists a $(n-1)$-cut that satisfies $n$ agents. 
	
	Suppose that there exists a solution $X^*:=(X_{1}^*, \dots, X_{N}^*) \in [0,1]^{N}$ of \eqref{Feasible1}, which equivalently means that $q_{1}(X^*) = q_{2}(X^*)$. Then, by the correct construction of our special circuit (following the procedure in Section \ref{subsec: cir-to-ch}) which uses $r$ gates and computes $q_1$ and $q_2$, there is a tuple $z^*:=(z_{1}^*, \dots, z_{r}^*)$ that satisfies it. Let, without loss of generality, $(z_{1}^*, \dots, z_{N}^*) := (X_{1}^*, \dots, X_{N}^*)$. Then it holds that $q_1(z_{1}^*, \dots, z_{N}^*) = q_2(z_{1}^*, \dots, z_{N}^*)$, therefore $z_{r-1}^* = z_{r}^*$. 
	
	According to the aforementioned translation to cuts, in the \CH instance there will be a cut $t_{4(r-1)}^* = v_{r-1,l}^{+} + z_{r-1}^*$ in interval $v_{r-1}^{+}$ (i.e. a positive cut), and another one in $t_{4r-1}^* = v_{r,l}^{-} + z_{r}^*$ in interval $v_{r}^{-}$ (i.e. a negative cut). From the valuation function \eqref{finis} of agent $finis$, we can see that her positive total valuation equals her negative total valuation, since $z_{r-1}^* \cdot 1 + (1 - z_{r}^*) \cdot 1 = (1 - z_{r-1}^*) \cdot 1 + z_{r}^* \cdot 1$ holds from $z_{r-1}^* = z_{r}^*$.
	Therefore $finis$ is satisfied. Also, the agents $ad_i, mid_i, cen_i, ex_i$ for all $i \in [r]$ are satisfied as argued in Section \ref{app:circuit-embed}, and the answer to $(n,n-1)-\CH$ is ``yes'', since we have $4r+1$ agents satisfied by $4r$ cuts.
	
	Suppose now that there exists a $4r$-cut $(t_{1}^*, \dots, t_{4r}^*)$ with $0 \leq t_{1}^* \leq \dots \leq t_{4r}^* \leq 12r$ that is a solution of the $(n,n-1)$-\CH instance we constructed, where $n := 4r + 1$. As argued in Section \ref{app:circuit-embed}, if the $ad_i, mid_i, cen_i, ex_i$ agents for $i \in [r]$ are satisfied then each of $cen_i, ex_i$ agents imposes a cut in interval $v_{i}^{-}$ and $v_{i}^{+}$ respectively. The cuts in intervals $v_{i}^{+}$, for all $i \in [r]$ can be translated back to values $z_i^*$, which successfully compute the values of the circuit, i.e. they satisfy the circuit. There are also two interesting cuts $t_{4(r-1)}^*$ and $t_{4r-1}^*$ imposed by $ex_{r-1}$ and $cen_{r}$ respectively which satisfy agent $finis$. Since this agent is satisfied with no additional cut, it holds that $z_{r-1}^* \cdot 1 + (1 - z_{r}^*) \cdot 1 = (1 - z_{r-1}^*) \cdot 1 + z_{r}^* \cdot 1$, or equivalently $z_{r-1}^* = z_{r}^*$. Since $z_{r-1}^*$ and $z_{r}^*$ correspond to the value of the circuit at $q_1$ and $q_2$ respectively, for the circuit's inputs $(z_{1}^*, \dots, z_{N}^*)$ it holds that $q_{1}(z_{1}^*, \dots, z_{N}^*) = q_{2}(z_{1}^*, \dots, z_{N}^*)$. Equivalently, $q(z_{1}^*, \dots, z_{N}^*) = 0$, and equivalently $p(z_{1}^*, \dots, z_{N}^*) = 0$. Therefore, we have found values that satisfy \eqref{Feasible1}, and the answer to \bfeas is ``yes''.     
\end{proof}

\section{Conclusion and Open Problems}

In this work we studied the complexity of exact computation of a solution to
\CH. We introduced the class \BU which captures all problems that are
polynomial-time reducible to the Borsuk-Ulam problem. We showed that the
complexity of $(n,n)$-\CH is lower bounded by \FIXP and upper bounded by \BU. A
tight result on the complexity of $(n,n)$-\CH is the major open problem that
remains. We believe that the problem is \BU-complete. Such a result would
establish \BU as a complexity class that has a complete natural problem. We also
believe that the best candidates of \BU-complete problems are function problems
whose solution existence is provable by the Borsuk-Ulam theorem, but not known
to be 
provable by any weaker one, for example, Brouwer's Fixed Point theorem. One such
is the Ham Sandwich problem \cite{S1942} whose complexity is still unresolved:
given $n$ compact sets in $\mathbb{R}^n$, find an $(n-1)$-dimensional hyperplane
that bisects all of them. 

Our result that \linBU = \PPA is analogous to the result of
\cite{EY10} which shows that \linFIXP = \PPAD. In \cite{M14}, the classes
\texttt{kD}-\linFIXP, $k \geq 1$ are implicitly defined as the subclasses of
\FIXP which contain all problems that can be described by \linFIXP circuits with
$k$ inputs. It was shown in \cite{M14} that \texttt{2D}-\linFIXP = \PPAD, which
uncovered a sharp dichotomy on the complexity of \linFIXP problems;
$\texttt{1D}-\linFIXP \subseteq \p$ (by \cite{AGMS11}) while
\texttt{kD}-\linFIXP = \PPAD for $k \geq 2$. An interesting open problem is to
consider the analogue of these classes in \linBU, namely \texttt{kD}-\linBU, and
study the complexity of the problem depending on values of $k$.

\bibliographystyle{plain}
\bibliography{references}

\end{document}